\documentclass[12pt]{article}

\usepackage{mathtools}
\usepackage{times}
\usepackage{soul}
\usepackage{url}
\usepackage[utf8]{inputenc}
\usepackage{amsmath}
\usepackage{booktabs}
\usepackage{multirow}
\usepackage{tabularx}
\usepackage{enumitem}

\usepackage{amsthm}
\usepackage{amssymb}
\usepackage{hyperref}
\usepackage{colortbl}
\usepackage{xcolor}
\usepackage{pict2e,picture,graphicx}

\makeatletter
\DeclareRobustCommand{\Arrow}[1][]{%
\check@mathfonts
\if\relax\detokenize{#1}\relax
\settowidth{\dimen@}{$\m@th\rightarrow$}%
\else
\setlength{\dimen@}{#1}%
\fi
\sbox\z@{\usefont{U}{lasy}{m}{n}\symbol{41}}%
\begin{picture}(\dimen@,\ht\z@)
\roundcap
\put(\dimexpr\dimen@-.7\wd\z@,0){\usebox\z@}
\put(0,\fontdimen22\textfont2){\line(1,0){\dimen@}}
\end{picture}%
}
\makeatother
\newcommand{\rightarrowshort}{\hspace{.2mm}\scalebox{.8}{\Arrow[.15cm]}\hspace{.2mm}}

\newtheorem{theorem}{Theorem}
\newtheorem{corollary}{Corollary}
\newtheorem{lemma}{Lemma}

\newtheorem{observation}{Observation}
\newtheorem{definition}{Definition}

\newcommand{\np}{{\sf{NP}}}

\newcommand{\nph}{{\np}-hard}
\newcommand{\conph}{{\sf{coNP}}-hard}
\newcommand{\cowah}{{\sf{coW[1]}}-hard}
\newcommand{\conphshort}{{\sf{coNP}}-h}
\newcommand{\conphns}{{\sf{coNP}}-hardness}
\newcommand{\nphns}{{\np}-hardness}

\newcommand{\nphshort}{{\np}-h}
\newcommand{\poly}{{\sf{P}}}

\newcommand{\wa}{\sf{W[1]}}
\newcommand{\wah}{{\sf{W[1]}}-hard}

\newcommand{\wb}{{\sf{W[2]}}}

\newcommand{\xp}{{\sf{XP}}}
\newcommand{\paranph}{{\sf{paraNP}}-hard}
\newcommand{\fpt}{{\sf{FPT}}}

\newcommand{\yes}{Yes}
\newcommand{\no}{No}
\newcommand{\yesins}{Yes-instance}
\newcommand{\noins}{No-instance}

\newcommand{\bigo}[1]{O(#1)}
\newcommand{\bigos}[1]{O^*(#1)}
\newcommand{\abs}[1]{|#1|}
\newcommand{\edge}[2]{\{#1,#2\}}
\newcommand{\xc}{\mathcal{H}}
\newcommand{\xce}{H}
\newcommand{\xs}{A}
\newcommand{\xse}{a}
\newcommand{\xsize}{\kappa}
\newcommand{\w}{w} 
\newcommand{\onlyfull}[1]{}
\newcommand{\prob}[1]{{\sc{#1}}}

\newcommand{\cbcm}{CBCM}
\newcommand{\sbcm}{SBCM}
\newcommand{\sdcm}{SDCM}
\newcommand{\probb}[2]{\prob{#1-#2}}

\newcommand{\memph}[1]{\emph{#1}}

\newcommand{\nemph}[1]{\emph{#1}}

\newcommand{\fnotion}[1]{{\it{#1}}}  

\newcommand{\abmv}{ABMV}

\newcommand{\vset}{N} 
\newcommand{\eset}{A}  
\newcommand{\vere}{u}

\newcommand{\candc}[1]{C({#1})} 
\newcommand{\vaes}[2]{{#1^{\star}(#2)}}  
\DeclareMathOperator*{\argmax}{arg\,max}
\newcommand{\setmid}{:}

\newcommand{\swinspe}[3]{{W}_{#2, #3}^+(#1)} 
\newcommand{\swinshort}[1]{{W}^+(#1)} %
\newcommand{\pwinspe}[3]{{W}_{#2, #3}(#1)} 
\newcommand{\pwinshort}[1]{{W}(#1)} %
\newcommand{\slosespe}[3]{{L}_{#2, #3}(#1)} 
\newcommand{\sloseshort}[1]{{L}(#1)} %

\newcommand{\score}[4]{{\textsf{sc}}^{#3}_{#4}(#1,#2)} 

\newtheorem{claimm}{Claim}
\newcommand{\EP}[4]
{\begin{center}
{
\begin{tabularx}{0.98\columnwidth}{ll}
\toprule
\multicolumn{2}{l}{{#1} (#2)} \\ \midrule
{\bf Input:}   & \parbox[t]{0.83\columnwidth}{#3\vspace*{1mm}}  \\
{\bf Question:}& \parbox[t]{0.83\columnwidth}{#4\vspace*{.5mm}} \\ \bottomrule
\end{tabularx}
}
\end{center}}

\newcommand{\EPP}[3]
{\begin{center}
{
\begin{tabularx}{0.98\columnwidth}{ll}
\toprule
\multicolumn{2}{l}{{#1}} \\ \midrule
{\bf Input:}   & \parbox[t]{0.83\columnwidth}{#2\vspace*{1mm}}  \\
{\bf Question:}& \parbox[t]{0.83\columnwidth}{#3\vspace*{.5mm}} \\ \bottomrule
\end{tabularx}
}
\end{center}}

 \usepackage{geometry}
 \geometry{
 a4paper,
 total={176mm,257mm},
 left=17mm,
 top=17mm,
 }

\newcounter{exo}
\makeatletter
\newenvironment{example}[1][]
    {\refstepcounter{exo}%
\begin{center}
   \begin{tabular}{|>{\columncolor{gray!10}}p{0.967\textwidth}|}
    \hline \\[-3mm] {\bfseries{Example \theexo}: #1}\\ \hline
    }
    {
    \\[1.5mm] \hline
  \end{tabular}
    \end{center}
    }

\begin{document}
\title{Complexity of Manipulating and Controlling Approval-Based Multiwinner Voting\thanks{A preliminary version of the paper has appeared in the Proceedings of the 28th International Joint Conference on Artificial Intelligence~\protect\cite{DBLP:conf/ijcai/Yang19}. The current version includes all proofs and some new results.}}

\author{Yongjie Yang}

\date{Chair of Economic Theory, Saarland University, Saarb\"{u}cken, Germany \\
yyongjiecs@gmial.com}

\maketitle

\begin{abstract}
We investigate the complexity of several manipulation\onlyfull{, bribery} and control problems under numerous prevalent approval-based multiwinner voting rules. Particularly, the rules we study include  approval voting (AV), satisfaction approval voting (SAV), net-satisfaction approval voting (NSAV), proportional approval voting (PAV), approval-based Chamberlin-Courant voting (ABCCV),  minimax approval voting (MAV), etc. We show that these rules generally resist the strategic types scrutinized in the paper, with only a few exceptions. In addition, we also obtain many fixed-parameter tractability results for these problems with respect to several natural parameters, and derive polynomial-time algorithms for certain special cases.
\medskip

\noindent{\textbf{keywords:}} {approval-based multiwinner voting, manipulation,  control, fixed-parameter tractability, NP-hardness, integer linear programming}
\end{abstract}

\section{Introduction}
%
Investigating the complexity of strategic voting problems has been a vibrant research topic in Computational Social Choice over the last three decades. Since the pioneering works of Bartholdi and Orlin~\cite{BartholdiJames1991SocialChoiceWelfareSTV}, and Bartholdi, Tovey, and Trick~\cite{BARTHOLDI89,Bartholdi92howhard}, many manipulation, control, and bribery problems have been proposed for single-winner voting rules. Initiated by Meir~et~al.~\cite{DBLP:journals/jair/MeirPRZ08}, these problems have been extended to multiwinner voting in recent years.
Enriching this line of research, we propose some natural manipulation\onlyfull{, bribery,} and control problems for approval-based multiwinner voting ({\abmv}) rules and investigate the complexity of these problems for the prevalent rules {{approval voting}} (AV), {{satisfaction approval}} voting (SAV),  {{net-satisfaction approval voting}} (NSAV), proportional approval voting (PAV), approval-based Chamberlin-Courant voting (ABCCV), minimax approval voting (MAV), etc. 
\onlyfull{In general, in these problems, there are either several voters who attempt to coordinate their votes in order to improve the results\onlyfull{ in their favor} ({\it{manipulation}})\onlyfull{, or an external agent who aims to make some distinguished candidates win by either bribing some vulnerable voters ({\it{bribery}})}, or by directly modifying the election ({\it{control}}).
The study on strategy voting problems for single-winner voting rules has dominated computational social choice for a decade and recently such study has been extended to multi-winner voting rules~\cite{DBLP:conf/ijcai/BredereckKN17,DBLP:journals/jair/MeirPRZ08,DBLP:conf/atal/Peters18}. Nevertheless, our models differ from the previous studied ones in many aspects.}

Manipulation models the scenario where some voters participating in an election, called manipulators, want to improve the election result in their favor by misreporting their preferences.  A necessary notion in the study of manipulation is the preference of a voter over all possible outcomes. Unlike ranking-based single-winner voting where every voter has a linear preference over all candidates and the outcome is a single winner (see, e.g.,~\cite{smith2006votingsystems}), in the setting of {\abmv} how to deduce voters' preferences over committees (subsets of candidates) from their dichotomous preferences over candidates is already a question without a clear-cut answer. Some prominent approaches for this purpose have been proposed in the literature (see, e.g.,~\cite{DBLP:conf/ijcai/LacknerS18,DBLP:journals/scw/LaslierS16,DBLP:conf/atal/Peters18}). For example, a voter may prefer a committee to another one if the former contains more of her approved candidates. In a more conservative situation, a voter is more satisfied with the former one only if the former contains not only more of her truly approved candidates but also all of her approved candidates included in the latter one. The two approaches lead to the concepts of {\it{cardinality-strategyproofness}} and {\it{subset-strategyproofness}} respectively when only one manipulator is considered (see, e.g.,~\cite{DBLP:conf/atal/Peters18}). In contrast to the celebrated Gibbard-Satterthwaite theorem for single-winner voting~\cite{Gibbard73,Satterthwaite75}, there exist natural {\abmv} rules (like AV) that are strategyproof with respect to the above two concepts. However, many {\abmv} rules are not strategyproof. For example, Peters~\cite{DBLP:conf/atal/Peters18} showed that any {\abmv} rules that satisfy certain proportional properties are not cardinality- and subset-strategyproof. Aziz~et~al.~\cite{DBLP:conf/atal/AzizGGMMW15} showed that SAV is not cardinality-strategyproof. Motivated by these non-strategyproofness results, many multiwinner manipulation problems have been proposed recently. Particularly, Aziz~et~al.~\cite{DBLP:conf/atal/AzizGGMMW15} studied the {\prob{Winner Manipulation}} and {\prob{Winner Set Manipulation}} problems for resolute multiwinner voting rules, i.e., rules that return exactly one wining committee (precisely, the resoluteness of the rules studied in their paper is achieved by using  a linear order on candidates to break ties). In the {\prob{Winner Manipulation}} problem, given are an election, a distinguished candidate, and an integer~$\ell$, and the question is whether it is possible to add~$\ell$ additional votes so that the distinguished candidate is included in the winning committee. In the {\prob{Winner Set Manipulation}} problem, there are multiple distinguished candidates, and the question is whether we can add~$\ell$ additional votes so that the distinguished candidates are exactly the winners. In all of these problems, it is assumed that the manipulators have one clear target set of candidates whom they want to make the winners.
This assumption, however, does not seem to be very realistic since there may exist exponentially many outcomes which are more preferred to the current outcome by the manipulators.

In this paper, we study two new manipulation problems,
where manipulators judge the results with respect to the two preference extension approaches discussed above. 
Concretely, given an election~$E$ and a winning committee of~$E$, in the {\prob{Cardinality-Based Coalition Manipulation}} problem (\cbcm), manipulators aim to obtain winning committees including more of their truthfully approved candidates, and in the {\prob{Subset-Based Coalition Manipulation}} problem (\sbcm), manipulators aim to obtain winning committees including all their current approved winning candidates and at least one more approved candidate that is not in the current winning $k$-committee. 

Lackner and Skowron~\cite{DBLP:conf/ijcai/LacknerS18} put forward the notion of stochastic domination strategyproofness (SD-strategyproofness), and called for investigating the algorithmic challenge of finding successful SD-manipulations. Several of our reductions for the above two manipulation problems also apply to the SD-strategyproofness based manipulation problem (\sdcm). (see Section~\ref{sec-preliminaries} for formal definitions of the notions).

Besides, we study several election control problems named {\prob{Constructive Control by Adding Voters}} ({\prob{CCAV}}),  {\prob{Constructive Control by Adding Candidates}} ({\prob{CCAC}}),  {\prob{Constructive Control by Deleting Voters}} ({\prob{CCDV}}), and {\prob{Constructive Control by Deleting Candidates}} ({\prob{CCDC}}) for multiwinner voting rules. These problems model the scenario where a powerful external agent would like to make a given subset of candidates be contained in all winning $k$-committees by adding or deleting a limited number of voters or candidates.
They are direct generalizations of the extensively studied control problems for single-winner voting rules.

Our results offer a comprehensive understanding of the complexity of the manipulation and election control problems. In particular, we show that the manipulation problems are generally {\nph}, and show that most of the election control problems are also {\nph} even when we aim to merely select one winner or there is just one distinguished candidate. However, there are several interesting exceptions:
\begin{itemize}
\item We show that the computationally demanding\footnote{Rules under which winners are {\nph} to compute.} rules PAV and ABCCV are immune to {\prob{CCAC}} as long as the number of distinguished candidates equals that of the size~$k$ of the winning committee. For SAV and NSAV under which the winners can be efficiently computed, {\prob{CCAC}} is already {\nph} even when we aim to select only one winner.

\item We show that {\prob{CCDC}} for PAV and and {\prob{CCDC}} for ABCCV are {\nph} if and only if $k\geq 2$. It is worth mentioning that {\abmv} rules in the special case where $k=2$ has been demonstrated to play a distinctive role in runoff elections~\cite{DBLP:conf/ijcai/DelemazureLLS22}.

\item We show that, for MAV, {\prob{CCAV}} is {\nph} even when $k=1$, but the dual problem {\prob{CCDV}} is polynomial-time solvable for~$k$ being any constant. The complexity differential between the two problems for the same rule is quite rare.
\end{itemize}
We refer to Table~\ref{tab-results-summary} for a summary of our concrete results.

To complement the hardness results, we explored the fixed-parameter tractability of the manipulation and election control problems with respect to several natural parameters, including the number of candidates, the number of votes, the number of manipulators, etc. Our concrete results regarding this issue are summarized in Table~\ref{tab-fpt-results}.

\subsection{Related Works}
\label{sec-related-works}
In this section, we discuss important related works that have not been mentioned or well-elaborated above.

Meir~et~al.~\cite{DBLP:journals/jair/MeirPRZ08} initiated the study of the complexity of control and manipulation problems for multiwinner voting rules, but they mainly considered ranking-based voting rules, and their treatment for manipulation assumes the presence of only a single manipulator. Besides, in their models, strategic agents (the manipulator or the powerful external agent) derive utilities from  candidates, and they attempt to achieve a winning~$k$-committee yielding the maximum total utilities. These models nicely bypass the question of extending voters' preferences over candidates to preferences over committees, but captures many real-world scenarios. 
The work of Meir~et~al.~\cite{DBLP:journals/jair/MeirPRZ08} on manipulation was later complemented by Obraztsova, Zick, and Elkind~\cite{DBLP:conf/atal/ObraztsovaZE13} who considered further ranking-based voting rules and investigated how different tie-breaking schemes change the complexity of the problem. Bredereck, Kaczmarczyk, and Niedermeier~\cite{DBLP:journals/aamas/BredereckKN21} take a step further by firstly extending the manipulation model by considering the presence of multiple manipulators. Notably, Bredereck, Kaczmarczyk, and Niedermeier~\cite{DBLP:journals/aamas/BredereckKN21} adopted three different functions (utilitarian, egalitarian, candidate-wise egalitarian) to merge the utilities of manipulators obtained from different committees.
Compared to these works, we focus on {\abmv} rules. On top of that, our model departs from previously utility-based manipulation 
by assuming that manipulators resort to preference extensions to evaluate committees. We remark that {\nphns} proofs of our problems can be modified to show the {\nphns} of the utility-involved variants by assigning to certain candidates in the constructed elections in the {\nphns} proofs very high utilities and assign to other candidates utility zero.

Our study on manipulations is also related to the works of Lackner and Skowron~\cite{DBLP:conf/ijcai/LacknerS18}, Laslier and van der Straeten~\cite{DBLP:journals/scw/LaslierS16}, Peters~\cite{DBLP:conf/atal/Peters18}, and Yang and Wang~\cite{DBLP:conf/ijcai/YangW18} who studied strategyproofness of {\abmv}. However, they were concerned with mathematical properties of these rules and focused only on one manipulator.

Another line of research concerning manipulation in {\abmv} is as follows. In this setting, it is assumed that voters have linear preferences (but they are only allowed to submit dichotomous preferences) over candidates and the question is whether voters have incentive to submit nonsincere votes in order to improve the result. A vote is sincere with respect to a linear order on candidates if the approved candidates are exactly those ranked above some threshold candidate in the linear order. Moreover, voters compare different outcomes based on some preference extension principles such as the Kelly extension principle, G\"{a}rdenfors extension principle, etc. We refer to~\cite{Barber2004,EndrissTD2013} and references therein for further discussions.

Our control problems are straightforward generalizations of four standard election control problems first proposed in~\cite{Bartholdi92howhard} for single-winner voting. The number of papers covering this theme is huge. We refer to~\cite{BaumeisterR2016,handbookofcomsoc2016Cha7FR} and references therein for important progress by 2016, and refer to~\cite{DBLP:conf/atal/CarletonCHNTW23a,DBLP:conf/atal/CarletonCHNTW23,DBLP:journals/aamas/ErdelyiNRRYZ21,DBLP:journals/ai/NevelingR21,DBLP:conf/atal/Yang17,DBLP:conf/ecai/000120,DBLP:conf/atal/000123a} for some recent results.

In addition, our control problems are related to the group control problems in the setting of group identification~\cite{DBLP:journals/aamas/ErdelyiRY20,DBLP:journals/aamas/YangD18}. Group identification models the scenario where voters and candidates (termed as individuals) coincide and the goal is to classify the individuals into two groups: the group of socially qualified individuals and the group of not socially qualified individuals. The group control problems consist in making some given distinguished individuals socially qualified by adding/deleting/partitioning the individuals.


Besides, we would like to point out that other types of strategic voting problems in multiwinner voting have also been  studied recently. Faliszewski, Skowron, and Talmon~\cite{DBLP:conf/atal/FaliszewskiST17} studied bribery problems for {\abmv} rules, where the goal is to ensure one distinguished candidate to be included in the winning $k$-committee by applying a limited number of modification operations. Complementing this work, Yang~\cite{DBLP:conf/atal/000120} studied the complexity of the counterparts of these bribery problems. Erd\'{e}lyi, Reger, and Yang~\cite{DBLP:journals/aamas/ErdelyiRY20}, and Boehmer~et~al.~\cite{DBLP:conf/ijcai/BoehmerBKL20} studied the complexity of bribery in group identification.

Finally, we remark that in a companion paper~\cite{DBLP:journals/corr/abs-2304-11927}, we study destructive counterparts of the election control problems for the aforementioned rules and their  sequential variants.


\subsection{Organization}
The remainder of the paper is organized as follows. In Section~\ref{sec-preliminaries}, we provide important notions and definitions that are used in our study.
Then, we unfold our detailed results in Sections~\ref{sec-manipulation}--\ref{sec-fpt}, where Section~\ref{sec-manipulation} is devoted to the complexity of manipulation problems, Section~\ref{sec-control} covers our complexity results for control problems, and Section~\ref{sec-fpt} studies a variety of {\fpt}-algorithms for problems studied in Sections~\ref{sec-manipulation} and~\ref{sec-control}. Section~\ref{sec-conclusion} concludes our main contributions and lay out interesting topics for future research.

\begin{table}
\caption{A summary of the complexity of constructive control problems for {\abmv} rules. Here, ``{\nphshort}'' stands for ``{\nph}'', and ``{\poly}'' for ``polynomial-time solvable''.  Additionally,~$J$ denotes the set of distinguished candidates, and~$k$ denotes the size of the winning committee. Notice that $k=1$ implies $\abs{J}=1$.
Results labeled by~$\spadesuit$ are from \protect\cite{DBLP:journals/jair/MeirPRZ08}, by~$\heartsuit$ are from~\protect\cite{baumeisterapproval09,DBLP:journals/jcss/HemaspaandraH07}, and~by $\clubsuit$ are from~\protect\cite{baumeisterapproval09,DBLP:journals/ai/HemaspaandraHR07,DBLP:conf/icaart/Lin11}. Several of our hardness results hold with even further restrictions on the input. Related discussions  are provided after the corresponding theorems. The results for manipulation hold for all the three manipulation problems {\cbcm}, {\sbcm}, and {\sdcm}.}
\label{tab-results-summary}
\renewcommand{\tabcolsep}{0.6mm}
\scriptsize{
\begin{center}
\begin{tabular}{|l|l|l|l|l|l|}\toprule
          & manipulation & {\prob{CCAV}} & {\prob{CCDV}}  & {\prob{CCAC}} & {\prob{CCDC}} \\ \toprule

AV
& {{\nphshort} (Thm.~\ref{thm-ccm-av-np-hard})}
& {\nphshort} ($k=1$, $\clubsuit$)
& {\nphshort} ($k=1$, $\clubsuit$)
& {immune} ($\heartsuit$)
& {\poly} ($\spadesuit$)\\  \midrule

NSAV
& {{\nphshort} (Thm.~\ref{thm-manipulation-np-hard-many-rules})}
& {\nphshort} ($k=1$, Thm.~\ref{thm-ccav-sav-np-hard})
& {\nphshort} ($k=1$, $\clubsuit$)
& {\nphshort} ($k=1$, Thm.~\ref{thm-ccac-sav-np-hard})
& {\nphshort} ($k=1$, Thm.~\ref{thm-ccdc-sav-np-hard})\\ \midrule

PAV/
&
& \multicolumn{4}{l|}{{\conphshort} ($\abs{J}=1 \& \ell=0$, Thm.~\ref{thm-cc-abccv-pav-co-np})}\\ \cline{3-6}

ABCCV&
& \multicolumn{2}{l|}{{\nphshort} ($k=1$, $\clubsuit$)}
&  {immune} ($\abs{J}=k$, Thm.~\ref{thm-pav-abbcv-mav-immue-to-ccac-k-equal-distinguished-candidates})
&{\poly} ($k=1$, $\spadesuit$)\\

&
& \multicolumn{2}{l|}{}
&
& {\nphshort} ($k=2$\&$\abs{J}=1$, Thm.~\ref{thm-ccdc-abccv-pav-nph-k-2})\\ \midrule

MAV
&  {{\nphshort} (Thm.~\ref{thm-manipulation-mav-np-hard})}
&   \multicolumn{4}{l|}{{\nphshort} ($\abs{J}=1 \& \ell=0$, Thm.~\ref{thm-cc-mav-nph})}\\ \cline{3-6}

&
& {\nphshort} ($k=1$\&$\abs{V}=1$, Thm.~\ref{thm-ccav-mav-nph-k-1})
& {\poly} ($k=\bigo{1}$, Thm.~\ref{thm-ccav-ccdv-mav-polynomial-time-solvable-k-constant})
& {\nphshort} ($k=1$\&$\abs{C}=2$, Thm.~\ref{thm-ccac-mav-nph-k-1})
& {\nphshort} ($k=1$, Thm.~\ref{thm-ccdc-mav-nph-k-1})\\ \bottomrule
\end{tabular}
\end{center}
}
\end{table}

\begin{table}
\caption{A summary of the fixed-parameter tractability of the manipulation and control problems with respect to several natural parameters. Here,~$t$ denotes the number of manipulators,~$m$ denotes the number of (registered and unregistered) candidates,~$n$ denotes the number of (registered and unregistered) votes, and~$\ell$ is the solution size.
{\fpt}-results without a reference mean that they are trivial.}
\label{tab-fpt-results}
\scriptsize{
\begin{center}
\begin{tabular}{|l|l|l|l|l|l|l|l|}\toprule
          & \multicolumn{2}{l|}{{\cbcm}/{\sbcm}} & \multicolumn{2}{l|}{{\prob{CCAV}/\prob{CCDV}}}  & \multicolumn{3}{l|}{\prob{CCAC}/\prob{CCDC}} \\ \cline{2-8}

          & $m$ & $t$                         & $m$   & $n$                                     & $m$   & $n$   & $n+\ell$ \\ \midrule

AV
& {{\fpt}} (Thms.~\ref{thm-manipulation-fpt-wrt-candidate},~\ref{thm-new-manipulation-sav-nsav-fpt-wrt-candidate})    & \underline{{\xp} (Thm.~\ref{thm-maniuplation-sav-nsav-polynomial-time-solvable-constant-number-manipulators})}
& {\fpt} (Thm.~\ref{thm-ccav-ccdv-av-sav-nsav-fpt-candidates})    & {\fpt}
& \multicolumn{3}{l|}{{\fpt} (polynomial-time solvable)} \\ \hline

SAV/NSAV
& {\fpt} (Thms.~\ref{thm-manipulation-sav-nsav-fpt-wrt-candidate},~\ref{thm-new-manipulation-sav-nsav-fpt-wrt-candidate})  &  \underline{{\xp} (Thm.~\ref{thm-maniuplation-sav-nsav-polynomial-time-solvable-constant-number-manipulators})}
& {\fpt} (Thm.~\ref{thm-ccav-ccdv-av-sav-nsav-fpt-candidates})  & {\fpt}
& {\fpt} & open & {\fpt} (Thm.~\ref{thm-ccac-ccdc-many-rules-fpt-ell-plus-n})\\ \hline

PAV/ABCCV & {\fpt} & open
&   {\fpt} (Thm.~\ref{thm-ccadv-abccv-pav-fpt-m}) & {\fpt}
& {\fpt} & open & {\fpt} (Thm.~\ref{thm-ccac-ccdc-many-rules-fpt-ell-plus-n})\\ \hline

MAV
& {\fpt} & open
& {\fpt} (Thm.~\ref{thm-ccav-mav-fpt-m})  & {\fpt}
& {\fpt} & open & {\fpt} (Thm.~\ref{thm-ccac-ccdc-many-rules-fpt-ell-plus-n})\\ \bottomrule
\end{tabular}
\end{center}
}
\end{table}

\section{Preliminaries}
\label{sec-preliminaries}
We assume the reader is familiar with the basics in graph theory and (parameterized) complexity theory~\cite{DBLP:books/sp/CyganFKLMPPS15,DBLP:conf/lata/Downey12,DBLP:journals/interfaces/Tovey02,Douglas2000}.

\subsection{Approval-Based Multiwinner Voting}
In the approval model, an {\memph{election}} is a tuple~$(C, V)$ where~$C$ is a set of candidates, and~$V$ is a multiset of votes cast by a set of voters. Each {\fnotion{vote}}~$v\in V$ is defined as a 
subset of~$C$, consisting of the candidates approved by the corresponding voter.
For ease of exposition, we interchangeably use the terms vote and voter. So, by saying that a vote~$v$ {\fnotion{approves}} a candidate~$c$, we simply mean that~$c\in v$. By saying that a vote~$v$ {\fnotion{approves}}~$C'\subseteq C$, we mean that~$v$ approves all candidates in~$C'$ (and probably also approves some other candidates not in~$C'$). A~{\fnotion{$k$-set}} is a set  of cardinality~$k$. A subset of candidates is also called a {\fnotion{committee}}, and a $k$-subset of candidates is called a {\fnotion{$k$-committee}}.
 Some other important notations are summarized in Table~\ref{tab-notations}.
\begin{table}[h]
 \caption{Several important notations. Here, $(C, V)$ is an election,~$C'$ is a subset of~$C$,~$v$ is a vote from $V$, and $k$ is an integer.}
    \centering
    \begin{tabular}{llp{0.5\textwidth}}\toprule
    notations & formal definitions & descriptions \\ \midrule

    $v_{C'}$ & $v\setminus C'$ & the vote obtained from~$v$ by removing all candidates not contained in~$C'$\\ \hline

    $V_{C'}$ & $\{v_{C'} \setmid v\in V\}$ & the multiset of votes obtained from~$V$ by replacing each $v\in V$ by~$v_{C'}$\\ \hline

    $(C', V)$ &  $(C', V_{C'})$ & the election~$(C, V)$ restricted to~$C'$\\ \hline

    $V(c)$  & $\{v\in V \setmid c\in v\}$ & the multiset of votes in~$V$ approving~$c$\\ \hline

    $V(C')$ & $\bigcup_{c\in C'}V(c)$ & the multiset of votes in~$V$ approving at least one candidate from~$C'$ \\ \hline

    $\vaes{V}{C'}$ & $\{v\in V \setmid v=C'\}$ & the multiset of votes in~$V$ approving exactly the candidates from~$C'$\\ \hline

    $C^{\vee}(V')$ & $\bigcup_{v\in V'}v$ & the set of candidates approved by at least one vote in~$V'$\\  \hline

    $C^{{\star}}_V(V')$ &$\{c\in C \setmid V(c)=V'\}$ & the set of candidates in~$C$ approved by all votes in~$V'$ but disapproved by any vote in~$V\setminus V'$\\ \hline

    $m^{\star}_V(V')$ & $\abs{C_V^{\star}(V')}$ & the cardinality of $C_V^{\star}(V')$\\ \hline

    $\mathcal{C}_{k, C}(C')$ & $\{X\subseteq C \setmid \abs{X}=k, C'\subseteq X\}$ & the set of all $k$-committees of~$C$ containing~$C'$\\ \hline

    ${\overline{\mathcal{C}_{k, C}}}({C'})$ & $\{X\subseteq C \setmid \abs{X}=k, C'\setminus X\neq\emptyset\}$ & the set of all $k$-committees of~$C$ not containing~$C'$\\ \bottomrule
    \end{tabular}
    \label{tab-notations}
\end{table}


An {\memph{{\abmv} rule}} (aka.\ {\memph{approval-based~committee selection rule}}) maps each election $(C, V)$ and an integer~$k$ to a collection of~$k$-committees of~$C$, which are called {\memph{winning~$k$-committees}} of this rule at~$(C, V)$. In practice, when a rule returns multiple winning $k$-committees,  a certain {\fnotion{tie-breaking scheme}} is often used to select exactly one winning $k$-committee.

We study some important {\abmv} rules which can be categorized into two groups. Under these rules, each~$k$-committee receives a score based on the votes, and winning $k$-committees are those either maximizing or minimizing the corresponding scoring functions. For the first group of rules the score of a committee is the sum of the scores of its members. Due to this property, these rules are referred to as {\memph{additive rules}} in the literature~\cite{DBLP:conf/atal/AzizGGMMW15,Kilgour2010,DBLP:conf/ijcai/YangW18}.
In the following, let $(C, V)$ be an election.

\renewcommand*\descriptionlabel[1]{\hspace\labelsep
  \normalfont\bfseries
  {#1}}
\begin{description}
\item[Approval voting (AV)] The score of a candidate~$c\in C$ in $(C, V)$ is the number of votes in~$V$ approving~$c$, and winning $k$-committees are those with the highest total scores of their members. Formally, a $k$-committee~$w$ is a winning $k$-committee of AV at $(C, V)$ if $\sum_{v\in V} \abs{v\cap w}\geq \sum_{v\in V} \abs{v\cap w'}$ for all $w'\subseteq C$ such that $\abs{w'}=k$.
\end{description}

Due to its simplicity and intuitiveness, AV has been widely used in practice both as a single-winner voting rule and as a multiwinner voting rule~\cite{BramsF2007approvalvotingsecondedition}. 

\begin{description}
\item[Satisfaction approval voting (SAV)] Each candidate~$c\in C$ receives~$\frac{1}{\abs{v}}$ points from each vote~$v$ approving~$c$, and the SAV  score of~$c$ in the election $(C, V)$ is $\sum_{c\in v\in V}\frac{1}{\abs{v}}$. 
    Similar to AV, winning $k$-committees of SAV at $(C, V)$ are those with the highest SAV score, i.e., $k$-committees~$\w\subseteq C$ with the maximum possible value of $\sum_{\emptyset\neq v\in V}\frac{\abs{v\cap \w}}{\abs{v}}$.
\end{description}

Using SAV as a multiwinner voting rule was advocated by Brams and Kilgour~\cite{Bram2014Kilgour}. Interestingly, the application of SAV scores goes beyond the setting of {\abmv}. For instance, SAV scores are related to the Shapley values of players in a ranking game which is relevant to the settings of ranking wines and allocating profits among museums~\cite{DehezG2018,DBLP:journals/geb/GinsburghZ03,GinsburghZ2012}.

\begin{description}
\item[Net-satisfaction approval voting (NSAV)] This rule is a variant of SAV which captures the principle that if addition of approved candidates of a vote in a committee increases the satisfaction of the corresponding voter, then addition of disapproved candidates decreases the satisfaction. In particular, each candidate~$c\in C$ receives $\frac{1}{\abs{v}}$ points from every vote $v\in V$ approving~$c$, and receives $\frac{1}{m-\abs{v}}$ points from every vote $v\in V$ not approving~$c$, and the NSAV score of~$c$ in $(C, V)$ is defined as $\sum_{c\in v\in V} \frac{1}{\abs{v}}-\sum_{c\not\in v\in V}\frac{1}{m-\abs{v}}$. The satisfaction degree of a vote~$v$ derived from a committee~$\w\subseteq C$ is measured by the total points of candidates in~$\w$ received from~$v$, and this rule aims to maximize voters' total satisfaction. Precisely, winning $k$-committees of NSAV at $(C, V)$ are $k$-committees $\w\subseteq C$  maximizing
$\sum_{v\in V, v\neq\emptyset} \frac{\abs{w\cap v}}{\abs{v}}-\sum_{v\in V, v\neq C}\frac{\abs{w\setminus v}}{m-\abs{v}}$.
\end{description}

NSAV was proposed by Kilgour and Marshall~\cite{Kilgour2014Marshall}. 

We call an additive rule {\fnotion{polynomial computable}} if given a vote and a candidate, the score of the candidate received from the vote can be computed in polynomial time in the size of the vote. To the best of our knowledge, all natural rules studied so far in the literature are polynomial computable.
\medskip

Now we give the definitions of the second group of rules where the score of each committee is combinatorially determined by its members.
\begin{description}
\item[Approval-based Chamberlin-Courant voting (ABCCV)] A voter is satisfied with a committee if this committee includes at least one of her approved candidates. The score of a committee is the number of voters satisfied with the committee, and winning $k$-committees are those with the maximum score.
\end{description}

ABCCV was first suggested by Thiele~\cite{Thiele1985} and then independently proposed by Chamberlin and Courant~\cite{ChamberlinC1983APSR10.2307/1957270}.

\begin{description}
\item [Proportional approval voting (PAV)] The score of a committee~$\w\subseteq C$ is $\sum_{v\in V, v\cap \w\neq \emptyset}\left(\sum_{i=1}^{\abs{v\cap \w}}\frac{1}{i}\right)$. Winning $k$-committees are those with the maximum score. \onlyfull{PAV fulfills several proportional properties~\cite{DBLP:journals/scw/AzizBCEFW17}. }

\item[Minimax approval voting (MAV)] The Hamming distance between two subsets~$\w\subseteq C$ and~$v\subseteq C$ is $\abs{\w\setminus v}+\abs{v\setminus \w}$. The score of a committee~$\w$ is the maximum Hamming distance between~$\w$ and the votes, i.e., $\max_{v\in V} (\abs{\w\setminus v}+\abs{v\setminus \w})$. Winning $k$-committees are those having the smallest score.
\end{description}

PAV was first studied by Thiele~\cite{Thiele1985}, and MAV was proposed by Brams~\cite{Bramsminimaxapproval2007}.
It should be pointed out that calculating a winning $k$-committee with respect to the second group of rules is computationally hard~\cite{DBLP:conf/atal/AzizGGMMW15,LeGrand2004Tereport,DBLP:journals/scw/ProcacciaRZ08}, standing in contrast to the polynomial-time solvability for many additive rules~\cite{DBLP:conf/atal/AzizGGMMW15,DBLP:conf/atal/YangW19}.%

Now we introduce the class of Thiele's rules which contain AV, ABCCV, and PAV.

\begin{description}
\item[$\omega$-Thiele] Each Thiele's rule is characterized by a function $\omega: \mathbb{N}\rightarrow \mathbb{R}$ so that $\omega(0)=0$ and $\omega(i)\leq \omega(i+1)$ for all nonnegative integers~$i$. The score of a committee $\w\subseteq C$ is defined as $\sum_{v\in V}\omega(\abs{v\cap w})$. The rule selects $k$-committees with the maximum score.\footnote{Thiele's rules are also studied under some other names including weighted PAV rules, generalized approval procedures, etc.~\cite{DBLP:journals/scw/AzizBCEFW17,Kilgour2014Marshall}.}
\end{description}

Obviously, AV is the $\omega$-Thiele rule where $\omega(i)=i$, ABCCV is the $\omega$-Thiele's rule such that $\omega(i)=1$ for all $i>0$, and PAV is the $\omega$-Thiele's rule such that $\omega(i)=\sum_{j=1}^i 1/j$ for all $i>0$. Many of our results apply to a subclass of Thiele's rules such that $\omega(2)<2\omega(1)$. In particular, ABCCV and PAV belong to this subclass.

In the description of our algorithms, the following notations are consistently used.
Let~$\varphi$ be an {\abmv} rule defined above except~MAV. Let $E=(C, V)$ be an election. For  a vote~$v\in V$ and a committee~$\w\subseteq C$, we use~$\score{v}{\w}{E}{\varphi}$ to denote the~$\varphi$ score of~$\w$ received from~$v$ in~$E$. For~$V'\subseteq V$, we define $\score{V'}{\w}{E}{\varphi}=\sum_{v\in V}\score{v}{\w}{E}{\varphi}$. If~$\w=\{c\}$ is a singleton, we simply write~$c$ instead of~$\{c\}$ in the above notations. Additionally, we drop the subscript~$\varphi$ whenever it is clear from the context which rule~$\varphi$ is discussed.

The {\nemph{$k$-winning-threshold}} of~$\varphi$ at an election $E=(C, V)$ is defined as follows. Let~$\rhd$ be a linear order on~$C$ so that for $c, c'\in C$ it holds that $c\rhd c'$ implies $\score{v}{c}{E}{\varphi}\geq \score{v}{c'}{E}{\varphi}$, i.e., candidates are ordered in~$\rhd$ with respect to their~$\varphi$ scores received from~$V$, from those with the highest scores to those with the lowest scores. The $k$-winning-threshold of~$\varphi$ at~$E$ is the~$\varphi$ score of the $k$-th candidate in~$\rhd$.

Let~$s$ be the $k$-winning-threshold of~$\varphi$ at an election~$E=(C, V)$, and let~$x$ be the number of candidates from~$C$ whose~$\varphi$ scores in~$E$ are exactly~$s$. Then,~$\swinspe{E}{\varphi}{k}$ is defined as the set of candidates from~$C$ whose~$\varphi$ scores in~$E$ are at least~$s$ if $x=1$, and is defined as the set of candidates from~$C$ whose~$\varphi$ scores in~$E$ are strictly larger than~$s$ if $x\geq 2$. Additionally,  we  define $\slosespe{E}{\varphi}{k}\subseteq C$ as the set of all candidates whose~$\varphi$ scores in~$E$ are strictly smaller than~$s$, and define $\pwinspe{E}{\varphi}{k}=C\setminus (\swinspe{E}{\varphi}{k} \cup \slosespe{E}{\varphi}{k}$ as the set of all the other candidates in~$C$. Obviously, $\pwinshort{E}=\emptyset$ if and only if $x=1$ and, moreover, if $\pwinshort{E} \neq \emptyset$ then every candidate from~$\pwinshort{E}$ has~$\varphi$ score exactly~$s$ in~$E$.
We usually drop~$k$ from~ the notations~$\swinspe{E}{\varphi}{k}$, $\slosespe{E}{\varphi}{k}$, and $\pwinspe{E}{\varphi}{k}$, because the value of~$k$ is usually clearly in the context. On top of that, if it is clear from the context which rule~$\varphi$ or which election~$E$ are discussed, we also drop~$\varphi$, or~$E$, or both from these notations.
It is easy to see that, for an additive {\abmv} rule~$\varphi$, a $k$-committee $\w\subseteq C$ is a winning $k$-committee of~$\varphi$ at $(C, V)$ if and only if $\swinshort{E} \subseteq \w \subseteq (\swinshort{E} \cup \pwinshort{E})$.
It is obvious that for AV, SAV, and NSAV, given an election~$E$, $\swinshort{E}$, $\pwinshort{E}$, and $\sloseshort{E}$ can be computed in polynomial time.

\subsection{Problem Formulations}
Now we formulate the manipulation and control problems studied in this paper. Let~$\varphi$ be a multiwinner voting rule.

\subsubsection{The Manipulation Problems}
\label{subsec-manipulation}
\EP
{\prob{Cardinality-Based Coalition Manipulation} for~$\varphi$}{\probb{CBCM}{$\varphi$}}
{A set of candidates~$C$, two multisets~$V$ and~$V_{\text{M}}$ of votes over~$C$, a winning committee $\w\in \varphi(C, V\cup V_{\text{M}}, k)$.}
{Is there a multiset~$U$ of~$\abs{V_{\text{M}}}$ votes over~$C$ such that for all $\w'\in \varphi(C, V\cup U, k)$ and all~$v\in V_{\text{M}}$, it holds that~$\abs{v\cap \w'}>\abs{v\cap \w}$?}

If we replace $\abs{v\cap \w'}>\abs{v\cap \w}$ in the above definition by $(v\cap \w)\subsetneq (v\cap \w')$, we obtain {\prob{Subset-Based Coalition Manipulation}} for~$\varphi$ ({\probb{SBCM}{$\varphi$}}). In the definitions of \probb{CBCM}{$\varphi$} and {\probb{SBCM}{$\varphi$}}, votes in~$V_{\text{M}}$ are called manipulative votes. Particularly, each $v\in V_{\text{M}}$ consists of candidates whom the corresponding voter (manipulator) truthfully approves. 

The above manipulation problems are relevant to the setting of iterative voting. In this setting, after voters submit their preferences to a central platform, a winning $k$-committee is announced. After this, voters are allowed to change their preferences at will in several rounds. The above manipulation problems model the situation where in a particular round of the voting process some voters (manipulators) form a coalition and want to replace the current winning $k$-committee~$\w$ with another more favorable one by misreporting their preferences. In cases where the tie-breaking schemes are publicity unknown, or when ties are broken randomly, the manipulators may want to ensure that their coordination results in an improved result without taking any risk. This is why in the question we demand that every manipulator prefers all winning $k$-committees of the new election to~$\w$. The problems assume that except the manipulators, other voters do not change their preferences, and manipulators know the submitted preferences of other voters. This may not be very realistic all the time. However, our main message of the study is that even when this is the case, for many rules the manipulators are faced with a computationally hard problem to solve.

Now we give the notion of stochastic domination introduced by Lackner and Skowron~\cite{DBLP:conf/ijcai/LacknerS18}, and the respective manipulation problem.

Let~$C$ be a set of candidates, $S\subseteq C$, and let~$\mathcal{A}$ and~$\mathcal{B}$ be two collections of committees of~$C$. We say that~$\mathcal{A}$ {\memph{stochastically dominates}}~$\mathcal{B}$ subject to~$S$ if and only if for every integer~$i$ it holds that
\[\frac{\abs{\w\in \mathcal{A} \setmid \abs{\w\cap S}\geq i}}{\abs{\mathcal{A}}}\geq \frac{\abs{\w\in \mathcal{B} \setmid \abs{\w\cap S}\geq i}}{\abs{\mathcal{B}}},\]
and, moreover, there exists at lest one~$i$ for which the above inequality is strict.

Based on the above notion, a natural manipulation problem can be formally defined as follows.

\EP
{\prob{SD-Coalition Manipulation} for~$\varphi$}{\probb{SDCM}{$\varphi$}}
{A set of candidates~$C$, two multisets~$V$ and~$V_{\text{M}}$ of votes over~$C$, and an integer $k\leq \abs{C}$.}
{Is there a multiset~$U$ of~$\abs{V_{\text{M}}}$ votes such that $\varphi(C, V\cup U, k)$ stochastically dominates $\varphi(C, V\cup V_{\text{M}}, k)$ subject to  every $v\in V_{\text{M}}$?}

Notice that in the above manipulation problem we do not have a winning $k$-committee in the input.

\onlyfull{
Now we introduce the bribery problems.
For an election~$(C, V)$, the current winning $k$-committee~$\w$ of~$(C, V)$, and a subset~$\w'\subseteq C$, we say a vote~$v\in V$ is radically (respectively, conservatively)~$\w'$-{\it{vulnerable}} at~$(C, V)$ if~$\abs{v\cap \w'}>\abs{v\cap \w}$ (respectively, $(v\cap \w)\subsetneq (v\cap \w')$).

\EP
{Radical/Conservative Frugal Bribery}{RCFB/CCFB}
{An election $(C, V)$, the current winning $k$-committee~$\w\in \varphi(C, V, k)\subseteq C$ such that~$\abs{w}=k$, a nonempty subset $J\subseteq C$ of at most~$k$ candidates.}
{Is there a subset $\w'\subseteq C$ such that $J\subseteq \w'$ and there exists a change of the votes cast by the radically/conservatively~$\w'$-vulnerable voters so that~$\w'$ becomes the unique winning $k$-committee?}
\medskip
}

\subsubsection{The Election Control Problems}
Now we extend the definitions of four standard single-winner election control problems to multiwinner voting.
These control problems model the scenario where some {\fnotion{external agent}} (e.g., the election chair) aims to make some distinguished candidates the winners by modifying the election.

\EP
{\prob{Constructive Control by Adding Voters} for~$\varphi$}{\probb{CCAV}{$\varphi$}}
{A set of candidates~$C$, two multisets~$V$ and~$U$ of votes over~$C$, a positive integer~$k\leq \abs{C}$, a nonempty subset~$J\subseteq C$ of at most~$k$ distinguished candidates, and a nonnegative integer~$\ell$.}
{Is there~$U'\subseteq U$ such that~$|U'|\leq \ell$ and~$J$ belongs to all winning $k$-committees of~$\varphi$ at $(C, V\cup U', k)$?}

In the above definition, votes in~$V$ are called {\memph{registered votes}} and votes in~$U$ are called {\memph{unregistered votes}}. Generally speaking, the above problem consists in determining if we can add a limited number of unregistered votes into the multiset of registered votes so that all distinguished candidates are entirely contained in all winning $k$-committees with respect to the final multiset of registered votes.

\EP
{\prob{Constructive Control by Deleting Voters} for~$\varphi$}{\probb{CCDV}{$\varphi$}}
{A set of candidates~$C$, a multiset of votes~$V$ over~$C$, a positive integer~$k\leq \abs{C}$, a nonempty subset~$J\subseteq C$ of at most~$k$ distinguished candidates, and a nonnegative integer~$\ell$.}
{Is there~$V'\subseteq V$ such that~$\abs{V'}\leq \ell$ and~$J$ belongs to all winning $k$-committees of~$\varphi$ at $(C, V\setminus V', k)$?}

Generally speaking, {\prob{CCDV}} aims to determine if we can delete a limited number of votes so that all distinguished candidates are contained in all winning $k$-committees of the resulting election.

\EP
{\prob{Constructive Control by Adding Candidates} for~$\varphi$}{\probb{CCAC}{$\varphi$}}
{Two disjoint subsets~$C$ and~$D$ of candidates, a multiset~$V$ of votes over $C\cup D$, a positive integer $k\leq \abs{C}$, a nonempty subset $J\subseteq C$ of at most~$k$ distinguished candidates, and a nonnegative integer~$\ell$.}
{Is there~$D'\subseteq D$ of at most~$\ell$ candidates such that~$J$ belongs to all winning $k$-committees of~$\varphi$ at $(C\cup D', V, k)$?}

In the above definitions, we call candidates in~$C$ {\memph{registered candidates}} and call candidates in~$D$ {\memph{unregistered candidates}}. In plain words, the above problem determines if we can add a limited number of unregistered candidates into the set of registered candidates so that all distinguished candidates are contained in all winning $k$-committees.

\EP
{\prob{Constructive Control by Deleting Candidates} for~$\varphi$}{\probb{CCDC}{$\varphi$}}
{A set of candidates~$C$, a multiset of votes~$V$ over~$C$, a positive integer~$k\leq \abs{C}$, a nonempty subset~$J\subseteq C$ of at most~$k$ distinguished candidates, and a nonnegative integer~$\ell$.}
{Is there a subset~$C'\subseteq C\setminus J$ of at most~$\ell$ candidates such that~$\abs{C\setminus C'}\geq k$ and~$J$ belongs to all winning $k$-committees of $\varphi(C\setminus C', V, k)$?}

In the above control problems, the goal of the external agent is to incorporate the distinguished candidates into all winning $k$-committees.
This is natural in many situations. For example, when the external agent is unaware of the tie-breaking scheme, or when a randomized tie-breaking scheme is used, the external agent may want to securely ensure that her favorite candidates become winners.
{The above situation also motivates our requirement on~$\w'$ in the manipulation problem\onlyfull{ and bribery problems}.} \onlyfull{Especially, in the bribery problem, the briber wants to first figure out a committee~$\w'$ to determine the $\w'$-vulnerable voters and, moreover, guarantees these voters that once they follow her guidance~$\w'$ will be the winning $k$-committee regardless of tie-breaking schemes.}

Notice that {\probb{CCAV}{$\varphi$}}, {\probb{CCDV}{$\varphi$}}, {\probb{CCAC}{$\varphi$}}, and {\probb{CCDC}{$\varphi$}} restricted to~$\abs{J}=k=1$ are exactly the extensively studied constructive control (by adding/deleting voters/candidates) problems for single-winner voting (more precisely, they are the unique-winner models of these constructive control problems)~\cite{Bartholdi92howhard,DBLP:journals/jair/FaliszewskiHHR09,handbookofcomsoc2016Cha7FR,DBLP:conf/atal/Yang17}.

A multiwinner voting rule is {\it{immune}} to a constructive control type (adding/deleting voters/candidates) if it is impossible to make any $J\subseteq C$, which is not contained in the current winning $k$-committee, be included in all winning $k$-committees, by performing the corresponding operation. A multiwinner voting rule is {\it{susceptible}} to a control type if it is not immune to this control type.

\subsubsection{Two Important Special Cases}
\label{sec-two-speical-cases}
Now we discuss two interesting special cases of the problems defined in the previous section. The reasons that we separately define them are as follows. First, we believe that they are of independent interest. Second, several of our hardness results are established for the special cases. Third, several our polynomial-time algorithms and {\fpt}-algorithms rely on algorithms solving these special cases.

\EPP{{\probb{$J$-CC}{$\varphi$}}}
{An election $(C, V)$, an integer~$k\leq \abs{C}$, and a subset~$J$ of at most~$k$ candidates.}
{Is~$J$ contained in all winning $k$-committees of~$\varphi$ at $(C, V)$?}

We denote by {\probb{$p$-CC}{$\varphi$}} the special case of {\probb{$J$-CC}{$\varphi$}} where~$J$ is a singleton.
Obviously, {\probb{$p$-CC}{$\varphi$}} is a special case of {\probb{CCAV}{$\varphi$}}, {\probb{CCDV}{$\varphi$}}, {\probb{CCAC}{$\varphi$}}, and {\probb{CCDC}{$\varphi$}}, where $\ell=0$.

For an {\yesins} of the above problems defined in Sections~\ref{subsec-manipulation}--\ref{sec-two-speical-cases}, a  subset that satisfies all conditions given in the corresponding {\textbf{Question}} is referred to as a {{feasible solution}} of the instance.

\onlyfull{{\bf{Parameterized complexity.}} A parameterized problem instance consists of a main part~$I$ and a parameter which is often (but not necessarily to be) an integer~$k$. A parameterized problem is fixed-parameter tractable (\fpt) if every instance~$(I, k)$ of the problem can be solved in time~$f(k)\cdot \abs{I}^{O(1)}$ where~$f(k)$ is a computable function in~$k$. Fixed-parameter intractable problems are categorized into a hierarchy of classes. The most widely studied classes are arguably the {\wah}. Unless parameterized complexity collapses at some level, {\wah} problems do not admit {\fpt}-algorithms.
We assume the reader is familiar with the basics in parameterized complexity such as fixed-parameter tractability ({\fpt}).
We refer\onlyfull{ the reader} to~\cite{DBLP:books/sp/CyganFKLMPPS15} for detailed discussion of the parameterized complexity.}

\subsection{Useful Hardness Problems}
Our hardness results are based on reductions from the following problems. Let~$G$ be a graph, and let~$\vset$ be the vertex set of~$G$. A {\fnotion{vertex cover}} of~$G$ is a subset of vertices whose removal results in a graph without any edge. A subset~$S\subseteq N$ is an {\fnotion{independent set}} in~$G$ if $\vset\setminus S$ is a vertex cover of~$G$. A {\fnotion{clique}} in~$G$ is a subset of pairwise adjacent vertices.

\EPP
{\prob{Vertex Cover}}
{A graph~$G=(N, A)$ where~$N$ is the set of vertices and~$A$ is the set of edges of~$G$, and an integer~$\kappa$.}
{Does~$G$ have a vertex cover of~$\kappa$ vertices?}

It is well-known that {\prob{Vertex Cover}}  restricted to graphs of maximum degree three is {\nph}. Indeed, the {\nphns} remains even when the input graphs are $3$-regular planar graphs~\cite{DBLP:journals/tcs/GareyJS76,DBLP:journals/jct/Mohar01}.

\EP
{\prob{Restricted Exact Cover by Three Sets}}{\prob{RX3C}}
{A universe $\xs=\{\xse_1,\xse_2,\dots,\xse_{3\kappa}\}$ and a multiset $\xc=\{\xce_1,\xce_2,\dots,\xce_{3\kappa}\}$ of $3$-subsets of~$\xs$ such that every~$\xse\in \xs$ occurs in exactly three elements of~$\xc$.}
{Is there~$\xc'\subseteq \xc$ such that~$\abs{\xc'}=\kappa$ and every~$\xse\in \xs$ occurs in exactly one element of~$\xc'$?}

In an instance $(\xs, \xc)$ of {\prob{RX3C}}, we say that an $\xce\in \xc$ covers $\xse\in \xs$ if $\xse\in \xce$, and say that an $\xc'\subseteq \xc$ covers some $\xs'\subseteq \xs$ if every $\xse\in \xs'$ is covered by at least one element of~$\xc'$. Therefore, the problem determines if~$\xc$ contains an~$\xc'$ of cardinality~$\kappa$ which covers~$\xs$.
It is known that {\prob{RX3C}} is {\nph}~\cite{DBLP:journals/tcs/Gonzalez85}.

\EPP
{\prob{Clique}}
{A graph $G=(\vset, \eset)$ and an integer~$\kappa$.}
{Does~$G$ have a clique of~$\kappa$ vertices?}

\EPP
{\prob{Independent Set}}
{A graph $G=(\vset, \eset)$ and an integer~$\kappa$.}
{Does~$G$ have an independent set of~$\kappa$ vertices?}

It is known that {\prob{Clique}} and {\prob{Independent Set}} are {\nph} even when restricted to regular graphs~\cite{DBLP:journals/cj/Cai08,DBLP:conf/iwpec/Marx04,DBLP:conf/cats/MathiesonS08}.

\subsection{Parameterized Complexity}
A {\fnotion{parameterized problem}} is a subset of $\Sigma^*\times \mathbb{N}$, where~$\Sigma$ is a finite alphabet. Parameterized problems can be grouped into different classes. The arguably most  thought-after classes are those in the following hierarchy:
\[\fpt\subseteq \wa\subseteq \wb \subseteq \cdots \subseteq \xp.\]
Particularly, a parameterized problem is {\fnotion{fixed-parameter tractable}} ({\fpt}) if there is an algorithm which correctly determines for each instance $(I, \kappa)$ of the problem whether
$(I, \kappa)$ is a {\yesins} in time $\bigo{f(\kappa)\cdot \abs{I}^{\bigo{1}}}$, where~$f$ is a computable function and~$\abs{I}$ is the size of~$I$.
A parameterized problem is in the class {\xp} if there is an algorithm which correctly determines for each instance $(I, \kappa)$ of the problem whether
$(I, \kappa)$ is a {\yesins} in time $\bigo{\abs{I}^{f(\kappa)}}$, where~$f$ is a computable function in~$\kappa$.
A parameterized problem is {\wah} if all problems in {\wa} are parameter reducible to it.
Unless {\fpt}$ = ${\wa} which is widely believed to be unlikely, {\wah} problems do not admit any {\fpt}-algorithms. Parameterized problems could also go beyond {\xp}.
A problem is {\paranph} if there is a constant~$c$ so that for every constant $k\geq c$ the problem is {\nph} when fixing~$k$ as the parameter.
For greater details on parameterized complexity theory, we refer to~\cite{DBLP:books/sp/CyganFKLMPPS15}.

\subsection{Remarks}
It should be pointed out that our hardness results also hold for corresponding multiwinner voting rules which always select exactly one winning $k$-committee by utilizing a specific tie-breaking scheme (see, e.g.,~\cite{DBLP:journals/aamas/BredereckKN21} for descriptions of some tie-breaking schemes). 
In fact, our hardness reductions are established carefully to avoid ties. \onlyfull{For example, the prevalent lexicographic tie-breaking scheme selects the lexicographically smallest tied committees with respect to a fixed order on candidates.
Other widely studied tie-breaking schemes include the one against the strategic agent and the one in favor of the strategic agent. When several committees are tied, the former one select one including the minimum number of distinguished candidates and the latter one select.}%
Moreover, our {\poly}- and {\fpt}-algorithms for the additive rules can be adapted for these variants when ties are broken lexicographically.

\section{Manipulation\onlyfull{ and Bribery}}
\label{sec-manipulation}
In this section, we \onlyfull{study the manipulation and bribery problems formulated in Preliminaries. We first }consider the manipulation problems and show that in general these problems are {\nph}.

\begin{theorem}
\label{thm-ccm-av-np-hard}
{\probb{\cbcm}{AV}}, {\probb{\sbcm}{AV}}, and {\probb{\sdcm}{AV}} are {\memph{{\nph}}}. Moreover, this holds even if there are four nonmanipulative votes.
\end{theorem}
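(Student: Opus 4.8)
The plan is to reduce from \prob{RX3C} to \probb{\cbcm}{AV} and \probb{\sbcm}{AV} simultaneously, exploiting that for AV the winning $k$-committees are just the sets of highest-scoring candidates. Given an \prob{RX3C} instance with universe $\xs=\{\xse_1,\dots,\xse_{3\kappa}\}$ and collection $\xc=\{\xce_1,\dots,\xce_{3\kappa}\}$, I would create one candidate $h_j$ for each $\xce_j$, together with $\kappa$ filler candidates $D=\{d_1,\dots,d_\kappa\}$, and set $k=\kappa$. The registered votes $V$ consist of exactly four votes, each approving precisely $D$. The manipulative multiset $V_{\text{M}}$ contains one manipulator per element: the manipulator $v_i$ approves exactly $\{h_j : \xse_i\in \xce_j\}$, i.e.\ the three set-candidates whose sets contain $\xse_i$. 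Truthfully each $d\in D$ has score $4$ and each $h_j$ has score $3$ (it is approved by the three manipulators for the three elements in $\xce_j$), so $D$ is the unique winning $k$-committee; I hand $\w=D$ to the instance as the current winner, giving every manipulator the baseline $\abs{v_i\cap \w}=0$.

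The crux is the following observation about the manipulators' power. In the manipulated election the AV score of a candidate $c$ equals its registered score $s_0(c)\in\{0,4\}$ plus the number $x_c\in\{0,\dots,\abs{V_{\text{M}}}\}$ of votes of $U$ approving $c$, and the values $x_c$ can be chosen completely independently across candidates. Since $\abs{V_{\text{M}}}=3\kappa\ge 5$ (we may assume $\kappa\ge 2$) strictly exceeds the spread $4$ of $s_0$, the manipulators can drive any prescribed $\kappa$-committee strictly above all other candidates and thus make it the unique winner. Consequently, a feasible manipulation exists iff there is a $\kappa$-committee $\w'$ with $\abs{v_i\cap \w'}\ge 1$ for every $i$. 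Such a $\w'$ gains nothing from fillers, so it must consist of $\kappa$ set-candidates that together cover all $3\kappa$ elements; as each $\xce_j$ has size three, this is possible exactly when the chosen sets form an exact cover. For the forward direction I would realize an exact cover $\xc'$ by setting $x_{h_j}=3\kappa$ for the $h_j$ corresponding to $\xc'$ and $x_c=0$ otherwise, making $\w'=\xc'$ the unique winner with $\abs{v_i\cap \w'}=1>0$ for all $i$; for the backward direction, any winning committee of a successful manipulation already yields a cover. This reduction uses exactly four nonmanipulative votes.

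Because the baseline intersection $v_i\cap \w$ is empty, the subset condition $(v_i\cap \w)\subsetneq(v_i\cap \w')$ is equivalent to $v_i\cap \w'\neq\emptyset$, i.e.\ to $\abs{v_i\cap\w'}\ge 1$, so the identical construction settles \probb{\sbcm}{AV}. The part I expect to be the real obstacle is \probb{\sdcm}{AV}: the same instance does \emph{not} transfer, because stochastic domination subject to each $v_i$ only requires that every new winning committee weakly, and some strictly, improves $v_i$'s coverage relative to the truthful winners. With an empty baseline this is met simply by making all $h_j$ tie at the threshold, whence the winning committees range over all $\kappa$-subsets of $\{h_j\}$ and each element is covered by some tied committee regardless of whether an exact cover exists; moreover, placing \emph{any} cover (not necessarily exact) among the forced winners also satisfies domination. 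To obtain \nphns\ for \probb{\sdcm}{AV} I would therefore design a dedicated gadget in which the truthful distribution $\varphi(C,V\cup V_{\text{M}},k)$ already gives each manipulator a nonzero, tight coverage and in which committee slots are so scarce that raising any manipulator's coverage forces withdrawing another's, so that an \emph{exact} cover (a packing as well as a covering) is the only way to stochastically dominate for all manipulators at once. The delicate step is the backward direction: one must rule out every tie structure the manipulators could engineer, arguing that if no exact cover exists then for at least one manipulator some winning committee fails the cumulative-probability inequality.
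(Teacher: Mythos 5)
Your treatment of {\probb{\cbcm}{AV}} and {\probb{\sbcm}{AV}} is correct and structurally the same as the paper's: the paper reduces from {\prob{Vertex Cover}} on $3$-regular graphs (candidates are the vertices plus $\kappa$ fillers, one manipulator per edge, four nonmanipulative votes approving exactly the fillers), while you reduce from {\prob{RX3C}} (candidates are the $3$-sets plus $\kappa$ fillers, one manipulator per element). In both constructions the current winning committee is disjoint from every manipulative vote, so both problems collapse to the existence of a $\kappa$-committee hitting every manipulative vote --- a vertex cover there, an exact cover here --- and in both the engine is the observation you state explicitly: a coalition whose size strictly exceeds the spread of the registered scores can force any prescribed $\kappa$-committee to be the unique winner. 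This part of your proposal is sound, uses exactly four nonmanipulative votes as required, and buys nothing different from the paper.

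The genuine gap is {\probb{\sdcm}{AV}}: the theorem asserts hardness for it as well, and you offer only a plan (``design a dedicated gadget\dots'') with no construction and no backward-direction argument, so one third of the statement is unproven in your write-up. That said, your diagnosis of \emph{why} the shared construction does not transfer is correct, and it is worth recording that it applies equally to the paper's own proof. The paper disposes of {\sdcm} by asserting that the instance is a {\yesins} iff the manipulators can make \emph{at least one} winning committee intersect \emph{every} manipulative vote. But under the stated definition of stochastic domination, dominating the singleton baseline $\{\w\}$ (with $\w$ disjoint from each manipulative vote $v$) subject to $v$ only requires that \emph{some} winning committee of the new election intersects $v$; domination subject to every $v\in V_{\text{M}}$ is therefore the condition $\forall v\, \exists \w'$, not $\exists \w'\, \forall v$. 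Exactly as you observe, the manipulation in which every manipulator approves all vertex-candidates (resp.\ all set-candidates) makes those candidates tie at the top score, every $\kappa$-subset of them is then winning, and each manipulative vote is intersected by some winning committee whether or not a cover exists --- so the backward direction for {\sdcm} fails, in your construction and in the paper's alike. Proving the {\sdcm} claim genuinely requires a different gadget of the kind you sketch (truthful winning committees already giving each manipulator tight, nonzero coverage, with scarcity forcing trade-offs), and until such a gadget is constructed and its backward direction verified, the {\sdcm} part of the theorem is not established by your proposal.
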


\begin{proof}
We give a reduction from {\prob{Vertex Cover}} that applies to all of {\probb{\cbcm}{AV}}, {\probb{\sbcm}{AV}}, and {\probb{\sdcm}{AV}}.
Let $(G, \kappa)$ be a {\prob{Vertex Cover}} instance where~$G=(\vset, \eset)$ is a $3$-regular graph. Let~$n=\abs{\vset}$ and~$m=\abs{\eset}$. Without loss of generality, we assume that~$m>4$ (otherwise the instance can be solved in polynomial time). For each vertex~$\vere\in \vset$, we create one candidate denoted still by~$\vere$ for simplicity. Additionally, we create~$\kappa$ candidates~$c_1$,~$c_2$,~$\dots$,~$c_{\kappa}$. Let~$C$ be the set of the above~$n+\kappa$ created candidates. We create the following votes. First, we create four nonmanipulative votes in~$V$ each of which approves exactly the~$\kappa$ candidates~$c_1$, $\dots$,~$c_{\kappa}$. In addition, for each edge~$\edge{b}{b'}\in A$, we create one manipulative vote~$v(b,b')=\{b,b'\}$ in~$V_{\text{M}}$. Finally, we set~$k=\kappa$. Let $\w=\{c_1,\dots,c_{\kappa}\}$. Then, $(C, V, V_{\text{M}}, \w)$ is the instance of both {\probb{\cbcm}{AV}} and {\probb{\sbcm}{AV}}, and $(C, V, V_{\text{M}})$ is the instance of {\probb{\sdcm}{AV}}. Let $E=(C, V\cup V_{\text{M}})$. As~$G$ is $3$-regular, every candidate in~$\vset$ has AV score~$3$ in~$E$. As every candidate in~$\w$ has AV score~$4$ in~$E$,~$\w$ is the unique winning $k$-committee of AV at~$E$.

Now we prove the correctness of the reduction. Notice that~$\w$ does not include any candidate approved by any manipulators. As a consequence, the instance of {\probb{\cbcm}{AV}} and {\probb{\sbcm}{AV}} (respectively, {\probb{\sdcm}{AV}}) is a {\yesins} if and only if the manipulators can coordinate their votes so that every (at least one) AV winning $k$-committee of the resulting election contains at least one approved candidate of every manipulative vote.

$(\Rightarrow)$ Suppose that~$G$ has a vertex cover~$S\subseteq N$ of~$\kappa$ vertices. If all manipulators turn to approve exactly the candidates corresponding to~$S$, the AV score of each candidate in~$S$ increases from~$3$ to~$m>4$, and hence~$S$ forms the unique winning $k$-committee. As~$S$ is a vertex cover of~$G$, by the construction of the votes, for every manipulative vote $v(b,b')=\{b,b'\}$ where $\edge{b}{b'}\in A$, at least one of~$b$ and~$b'$
 is included in~$S$.

$(\Leftarrow)$ If there exists no vertex cover of~$\kappa$ vertices in~$G$, then no matter which~$\kappa'\leq \kappa$ candidates from~$N$ are in a winning $k$-committee after the manipulators change their votes, there is at least one manipulative vote~$v(b, b')$, $\edge{b}{b'}\in A$, such that neither~$b$ nor~$b'$ is contained in this winning~$k$-committee.
\end{proof}

Similar to the reduction in the proof of Theorem~\ref{thm-ccm-av-np-hard}, we can show the {\nphns} of {\cbcm}, {\sbcm}, and {\sdcm} for SAV. Clearly, after all manipulators change to approve candidates corresponding to a vertex cover~$S$ of~$\kappa$ vertices, the SAV score of each candidate in~$S$ increases from~$3/2$ to~$m/\kappa$. Given this, to prove the {\nphns} of {\cbcm}, {\sbcm}, and {\sdcm} for SAV, we need only to create~$m-5$ further nonmanipulative votes approving~$c_1$,~$\dots$,~$c_{\kappa}$ so that in the original election each~$c_i$ has SAV score~$(m-1)/\kappa$.\footnote{The correctness of the reduction also relies on the assumption  $\kappa<\frac{2}{3}\cdot (m-1)$ which does not change the {\nphns} of {\prob{Vertex Cover}} restricted to $3$-regular graphs~\cite{DBLP:journals/tcs/GareyJS76,DBLP:journals/jct/Mohar01}.}
\onlyfull{One can check that the reduction also applies to NSAV if we assume that~$\kappa$ is substantially smaller than~$m$, say~$\kappa^5\leq m$, which does not change the complexity of the {\prob{Vertex Cover}} problem.} It is easy to see that the same reduction works for NSAV too. We arrive at the following theorem.

\begin{theorem}
\label{thm-manipulation-np-hard-many-rules}
For $\varphi\in \{\memph{\text{SAV}}, \memph{\text{NSAV}}\}$, {\probb{\cbcm}{$\varphi$}}, {\probb{\sbcm}{$\varphi$}}, and {\probb{\sdcm}{$\varphi$}} are {\memph{\nph}}.
\end{theorem}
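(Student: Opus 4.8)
The plan is to adapt the reduction from \prob{Vertex Cover} used in Theorem~\ref{thm-ccm-av-np-hard} almost verbatim, changing only the number of auxiliary nonmanipulative votes so that the scoring arithmetic works out for SAV and NSAV instead of AV. As in the AV proof, I would start from a \prob{Vertex Cover} instance $(G, \kappa)$ with $G=(\vset,\eset)$ a $3$-regular graph, create one candidate per vertex, add $\kappa$ dummy candidates $c_1,\dots,c_\kappa$, set $k=\kappa$ and $\w=\{c_1,\dots,c_\kappa\}$, and for each edge $\edge{b}{b'}$ create a manipulative vote $v(b,b')=\{b,b'\}$ in $V_{\text{M}}$. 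The key difference is the calibration of the dummy votes: each vertex-candidate sits in three manipulative votes of size two, so it has SAV score $3/2$, while after the manipulators switch to approving a $\kappa$-vertex vertex cover $S$, each candidate in $S$ gains score $m/\kappa$ (the $m$ edges split among $\kappa$ size-$\kappa$ votes). So the plan is to add exactly enough nonmanipulative single-block votes approving $\{c_1,\dots,c_\kappa\}$ to push each $c_i$'s SAV score up to a threshold strictly between $3/2$ and $m/\kappa$; as the excerpt's footnote indicates, $m-5$ such votes give each $c_i$ score $(m-1)/\kappa$, which under the assumption $\kappa<\tfrac{2}{3}(m-1)$ exceeds $3/2$ and is strictly below $m/\kappa$.

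With the scores calibrated, the correctness argument transfers directly from the AV proof. First I would verify that in $E=(C, V\cup V_{\text{M}})$ the committee $\w$ is the unique winning $k$-committee, since each $c_i$ has score $(m-1)/\kappa$ and each vertex-candidate has only $3/2 < (m-1)/\kappa$; and crucially $\w$ contains no candidate approved by any manipulator, so the instance is a \yesins{} exactly when the manipulators can force every (respectively, at least one) winning $k$-committee to contain an approved candidate of each manipulative vote. For the forward direction, if $S$ is a vertex cover, having all manipulators approve $S$ raises each score in $S$ to $m/\kappa>(m-1)/\kappa$, making $S$ the unique winner, and being a cover guarantees every $v(b,b')$ meets $S$. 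For the converse, absence of a $\kappa$-vertex cover forces some edge $\edge{b}{b'}$ with neither endpoint in any candidate winning-committee, defeating every manipulation — identical in structure to the AV argument. Because the winner is unique throughout, the same reduction simultaneously handles \cbcm, \sbcm, and \sdcm, exactly as in Theorem~\ref{thm-ccm-av-np-hard}.

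For NSAV the plan is to argue that the same construction still works, which the excerpt asserts outright (``the same reduction works for NSAV too''). The NSAV score adds a negative term $-\sum_{c\notin v}\frac{1}{m-\abs{v}}$ accounting for disapprovals, so I would check that these correction terms are small and uniform enough not to disturb the strict separation between the dummy candidates and the vertex-candidates established for SAV. The natural safeguard, alluded to in the omitted remark, is to take $\kappa$ substantially smaller than $m$ (e.g.\ $\kappa^5\le m$), which keeps the penalty contributions negligible relative to the $1/\kappa$-scale gaps while preserving \nphns{} of \prob{Vertex Cover} on $3$-regular graphs; alternatively one shows the penalty terms are identical across the relevant candidates and hence cancel in all comparisons.

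The main obstacle I anticipate is precisely this NSAV bookkeeping: unlike AV and SAV, the NSAV score of a candidate depends on \emph{every} vote, including the negative contributions from votes that do not approve it, so I must confirm that after the manipulators move, the ordering between $S$-candidates and everyone else is still governed by the positive $1/\abs{v}$ terms and not accidentally inverted by the disapproval penalties. Verifying that the strict inequalities survive — that $S$ becomes the \emph{unique} winner and that no spurious committee can tie or beat it — is the one place where the arithmetic genuinely differs from the SAV case and where the scaling assumption on $\kappa$ versus $m$ does the real work.
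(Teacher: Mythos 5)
Your reduction is the paper's own, down to the calibration: keep the AV construction of Theorem~\ref{thm-ccm-av-np-hard}, add $m-5$ further nonmanipulative votes approving $\{c_1,\dots,c_\kappa\}$ so that each $c_i$ has SAV score $(m-1)/\kappa$, and assume $\kappa<\frac{2}{3}(m-1)$ so that this threshold lies strictly between the vertex-candidates' score $3/2$ and the post-manipulation score $m/\kappa$; the correctness argument and the simultaneous treatment of {\cbcm}, {\sbcm}, and {\sdcm} transfer verbatim. So for SAV your proposal matches the paper exactly.

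The one genuine flaw is in your primary safeguard for NSAV. In a $3$-regular graph every vertex is incident to exactly three edges, so any vertex cover satisfies $\kappa\geq m/3$; hence $\kappa^5\leq m$ is unsatisfiable beyond constant-size instances, and ``{\prob{Vertex Cover}} on $3$-regular graphs with $\kappa^5\leq m$'' is vacuously trivial rather than {\nph}. That scaling route therefore cannot be used as stated (the remark you are echoing has the same defect), and you must fall back on your alternative: compute the NSAV penalty terms and check they never invert a comparison. This does work, and in fact the penalties only widen the relevant gaps. With $M=|\vset|+\kappa$ candidates in total and $n=|\vset|$, in the original election each $c_i$ incurs penalty $m/(M-2)$ while each vertex-candidate incurs $(m-3)/(M-2)+(m-1)/(M-\kappa)$, so the vertex-candidates are penalized more and $\w$ remains the unique winner; after the manipulators switch to a cover $S$, a candidate in $S$ has NSAV score $m/\kappa-(m-1)/n$ against a dummy's $(m-1)/\kappa-m/n$, a difference of exactly $1/\kappa+1/n>0$, while abandoned vertex-candidates have negative score. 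So no assumption beyond $\kappa<\frac{2}{3}(m-1)$ is needed for NSAV; the explicit bookkeeping, not a scaling hypothesis, is what closes the case.
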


By establishing a different reduction, we can show similar hardness results for MAV.

\begin{theorem}
\label{thm-manipulation-mav-np-hard}
{\probb{\cbcm}{MAV}}, {\probb{\sbcm}{MAV}}, and {\probb{\sdcm}{MAV}} are {\memph{\nph}}, even if there is only one nonmanipulator, and every candidate is approved by at most three votes.
\end{theorem}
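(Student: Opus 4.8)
The plan is to reduce from \prob{Vertex Cover} restricted to cubic ($3$-regular) graphs, which is \nph{} as noted earlier; the $3$-regularity is exactly what will yield the ``every candidate is approved by at most three votes'' restriction, while a single carefully designed non-manipulative vote will anchor the current winner. Although the source problem is the same as in Theorems~\ref{thm-ccm-av-np-hard} and~\ref{thm-manipulation-np-hard-many-rules}, the gadget must be genuinely different, since MAV ranks committees by the \emph{maximum} Hamming distance to the votes rather than by an additive score.

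Given a cubic graph $G=(\vset,\eset)$ with $n=\abs{\vset}$ and $m=\abs{\eset}=3n/2$, and target size $\xsize$, I would introduce one \emph{vertex candidate} $x_\vere$ for each $\vere\in\vset$ together with a small block of auxiliary (dummy) candidates whose only purpose is to calibrate Hamming distances. The manipulative votes are the \emph{edge votes}: for every $\edge{b}{b'}\in\eset$ a manipulator truthfully approves exactly $\{x_b,x_{b'}\}$, so each vertex candidate $x_\vere$ is approved by exactly the three edges incident to $\vere$; adding the lone non-manipulator $v_0$, which approves no vertex candidate, keeps every candidate within the required approval bound. Setting $k=\xsize$ and choosing $\abs{v_0}$ appropriately, I would show that the unique MAV winner of the truthful election is a designated dummy committee $\w$ disjoint from all vertex candidates, so that $\abs{v\cap\w}=0$ for every edge vote $v$. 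Consequently the instance is a \yesins{} precisely when the manipulators can choose replacement votes $U$ so that \emph{every} MAV winner of the new election meets $\{x_b,x_{b'}\}$ for each edge, i.e.\ contains a size-$\xsize$ vertex cover of $G$.

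For the forward direction, given a vertex cover $S$ of size $\xsize$ I would have all manipulators approve exactly $\{x_\vere:\vere\in S\}$ and then verify, by computing the two relevant Hamming distances (to $v_0$ and to the common manipulative vote), that the cover committee attains a strictly smaller maximum distance than the dummy committee $\w$ and than any non-cover committee; the auxiliary candidates and the size of $v_0$ are tuned so that these thresholds separate covers from non-covers exactly at size $\xsize$, which is what prevents $\w$ from tying and resurfacing as a winner. The backward direction is the converse: assuming no vertex cover of size $\xsize$ exists, I would argue that for \emph{every} replacement profile $U$ some MAV winner misses an edge, using the resistance supplied by $v_0$ to guarantee that a ``non-covering'' committee always lies within the winning distance band. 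Finally, since $\abs{v\cap\w}=0$ for every edge vote, the conditions $\abs{v\cap\w'}>\abs{v\cap\w}$ and $(v\cap\w)\subsetneq(v\cap\w')$ coincide, so the same construction settles \probb{\sbcm}{MAV}; and because the comparison is between the current-winner behaviour and the forced winner set, an analogous stochastic-domination argument settles \probb{\sdcm}{MAV}.

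The step I expect to be the main obstacle is the exact calibration of $v_0$ and the dummy block so that the maximum-Hamming-distance thresholds perform the separation cleanly in \emph{both} directions at once. There is an inherent tension: to make $\w$ the current winner, $v_0$ must ``support'' $\w$ and keep it close in Hamming distance, yet this same closeness threatens to keep $\w$ among the winners after manipulation, producing a spurious tie that would break the forward direction. Resolving this---choosing $\abs{v_0}$ and the number of dummies so that a size-$\xsize$ cover strictly beats $\w$ after manipulation while $\w$ strictly beats every vertex-candidate committee beforehand---together with the backward-direction claim that the absence of a size-$\xsize$ cover always leaves a non-covering committee inside the winning band, is where the real work lies.
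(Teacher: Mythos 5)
Your overall scaffolding matches the paper's proof (reduction from {\prob{Vertex Cover}} on $3$-regular graphs, edge votes as manipulators, a single nonmanipulative anchor vote $v_0$ over dummies, current winner $\w$ a dummy committee, and the observation that $\abs{v\cap\w}=0$ makes {\cbcm}, {\sbcm}, and {\sdcm} collapse to a covering condition). However, there is a genuine gap in your forward direction, and it is exactly the point you deferred as ``calibration'': within your ansatz it cannot be fixed by any choice of $\abs{v_0}$ or of the dummy block. Suppose all manipulators switch to approving exactly the cover $S$, with $\abs{S}=k=\xsize$, and let $t=\abs{v_0}$. Since $v_0$ is disjoint from the vertex candidates, the cover committee $S$ has MAV score $\max(0,\xsize+t)=\xsize+t$. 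Now take any $\xsize$-committee $W'$ of vertex candidates with $\abs{S\setminus W'}=j$; its distance to every manipulative ballot is $2j$ and its distance to $v_0$ is again $\xsize+t$, so its score is $\max(2j,\xsize+t)$. Hence \emph{every} vertex committee within swap distance $j\leq(\xsize+t)/2$ of $S$ ties with $S$ as a winner, and for a minimum cover $S$ already $j=1$ produces committees that miss an edge. Since {\cbcm}/{\sbcm} require \emph{every} winning committee of the manipulated election to intersect every edge vote, the forward direction fails for all values of $t$: the binding term $\xsize+t\geq 2$ always swallows the $2j$ term for small $j$. The tension is not between $\w$ and $S$ (that part is handled by $t<\xsize$); it is that the lone vote $v_0$ is equidistant from \emph{all} all-vertex committees, so identical size-$\xsize$ manipulative ballots cannot discriminate covers from near-covers under a max-distance rule.

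The paper's proof resolves this with an ingredient your construction lacks: for each edge $\edge{a}{b}$ it creates a \emph{private} block $Y(\edge{a}{b})$ of $3\xsize+1$ padding candidates (approved by nobody in the truthful election, so the ``at most three approvals'' bound survives), and in the forward direction the manipulator for edge $\edge{a}{b}$ votes $S\cup Y(\edge{a}{b})$ rather than $S$. This makes the distance to the manipulative ballots, not to $v_0$, the binding constraint: $S$ has score exactly $3\xsize+1$, while any committee that deviates from $S$ --- by including a $Y$-candidate (then some other edge's ballot, whose block is entirely missed, is at distance $\geq 3\xsize+3$), by including an $X$-candidate (distance $\geq 3\xsize+3$ to every manipulative ballot), or by being a different vertex committee (distance $\geq 3\xsize+3$ to every manipulative ballot, since both set differences are nonempty) --- is strictly worse, so $S$ is the \emph{unique} winner. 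In short, your backward direction and the reduction's framing are fine, but the missing idea is enlarging the manipulative ballots with per-voter private padding so that deviations from the cover are strictly punished in the maximum Hamming distance; without it the forward direction is false, not merely uncalibrated.
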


\begin{proof}
Our proof is based on a reduction from {\prob{Vertex Cover}} which applies to  all of {\probb{\cbcm}{MAV}}, {\probb{\sbcm}{MAV}}, and {\probb{\sdcm}{MAV}}.
Let~$(G, \kappa)$ be an instance of {\prob{Vertex Cover}} where~$G=(N, A)$ is a~$3$-regular graph. Without loss of generality, we assume that $|A|>\kappa$.

The candidates are as follows. First, for each~$a\in N$, we create one candidate denoted still by~$a$ for simplicity. Then, we create a set~$X$ of~$2\kappa+1$ candidates. In addition, for each edge~$\edge{a}{b}\in A$, we create a set~$Y(\{a,b\})$ of~$3\kappa+1$ candidates. Let $C=\bigcup_{\edge{a}{b}\in A}Y(\edge{a}{b})\cup N\cup X$.  
Now we  construct the votes. First, for each edge~$\edge{a}{b}\in A$, we create one vote~$v(\edge{a}{b})=\{a,b\}$. These votes are the manipulative votes, i.e., we let~$V_{\text{M}}=\{v(a,b) \setmid \edge{a}{b}\in A\}$. In addition, we create one nonmanipulative vote approving exactly the~$2\kappa+1$ candidates in~$X$.
Let~$k=\kappa$, and let~$\w$ be any arbitrary $k$-subset of~$X$. The instances of {\probb{\cbcm}{MAV}} and {\probb{\sbcm}{MAV}} are the same $(C, V, V_{\text{M}}, \w)$, and the instance of {\probb{\sdcm}{MAV}} is $(C, V, V_{\text{M}})$. The constructions of the instances clearly can be done in polynomial time.

One can check that winning $k$-committees of MAV at $(C, V\cup V_{\text{M}})$ are exactly all $k$-subsets of~$X$. In fact, any such a $k$-committee is of Hamming distance $k+2$ from every manipulative vote, and is of Hamming distance $k+1$ from the nonmanipulative vote. So, the MAV score of such a committee is $k+2$. In addition, every $k$-committee which contains at most $k-1$ candidates from~$X$ is of Hamming distance at least $1+(k+2)=k+3$ from the nonmanipulative vote, and hence has MAV score at least $k+3$.

It remains to show the correctness of the reduction. Note that as none of the winning $k$-committees of~$(C, V\cup V_{\text{M}})$ intersects any manipulative votes, the instance of {\probb{\cbcm}{MAV}} and {\probb{\sbcm}{MAV}} (respectively, {\probb{\sdcm}{MAV}}) is a  {\yesins} if and only if the manipulators can coordinate their votes so that every (respectively, at least one) winning $k$-committee of the resulting election intersects every manipulative vote from~$V_{\text{M}}$.

$(\Rightarrow)$ Assume that there is a vertex cover~$S\subseteq N$ of~$\kappa$ vertices in~$G$. Then, we change each manipulative vote~$v(\{a,b\})$ so that after the change it exactly approves all candidates in~$S\cup Y(a,b)$. For the sake of clarity, we use $v'(\{a, b\})$ to denote $v(\{a, b\})$ after the change. Let~$E$ denote the election after these changes. One can check that~$S$ has an MAV score~$3\kappa+1$ in~$E$. We complete the proof by showing that~$S$ is the unique winning $k$-committee of~$E$, i.e., any other~$k$-committee has Hamming distance at least~$3\kappa+2$ in~$E$. Consider first a $k$-committee which contains some candidate in $\bigcup_{\edge{a}{b}\in A}Y(\edge{a}{b})$. Due to the assumption $|A|>k$, there exists an edge $\edge{a'}{b'}\in A$  such that none of $Y(\edge{a'}{b'})$ is in this $k$-committee. It is easy to check that the Hamming distance between this $k$-committee and the vote $v'(\edge{a'}{b'})$ is at least $1+1+(3k+1)=3k+3$. It remains to consider $k$-committees which do not contain any candidate from $\bigcup_{\edge{a}{b}\in A}Y(\edge{a}{b})$ but contain some candidate from~$X$. The Hamming distance between such a $k$-committee and every manipulative vote is at least $(3k+1)+1+1=3k+3$. It follows that~$S$ is the unique winning $k$-committee of~$E$. Finally, as~$S$ intersects each of the manipulative votes, we can conclude that in this case the constructed instance of {\cbcm}/{\sbcm}/{\sdcm} is a {\yesins}.

$(\Leftarrow)$ Similar to the argument in the proof of Theorem~\ref{thm-ccm-av-np-hard}, if there exists no vertex cover of~$\kappa$ vertices in the graph~$G$, for any $k$-committee $\w'\subseteq C$, at least one of the manipulative votes is disjoint from~$\w'$. So, the constructed {\cbcm}/{\sbcm}/{\sdcm} instance is a {\noins}.
\end{proof}

In the proofs of Theorems~\ref{thm-ccm-av-np-hard}--\ref{thm-manipulation-mav-np-hard}, manipulators considerably outnumber nonmanipulators. This stands in contrast to the polynomial-time solvability of the canonical coalition manipulation problem (see~\cite{handbookofcomsoc2016Cha7FR} for the definitions) for many ranking-based single-winner voting rules, where, when there are more manipulators than nonmanipulators, the manipulators can always make the distinguished candidate the winner by ranking the distinguished candidate in the top and ranking other candidates greedily. On the other hand, the canonical coalition manipulation problem for many single-winner voting rules is already {\nph} even when there is only one or two manipulators~\cite{BARTHOLDI89,DBLP:journals/ai/DaviesKNWX14}. However, we show that this is not always the case for our problems. In particular, we show that {\cbcm} and {\sbcm} for AV are polynomial-time solvable when the number of manipulators is a constant.


%



\begin{theorem}
\label{thm-maniuplation-polynomial-time-solvable-constant-number-manipulators}
{\probb{\cbcm}{AV}} and {\probb{\sbcm}{AV}} are polynomial-time solvable if there are a constant number of manipulators.
\end{theorem}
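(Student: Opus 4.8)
The plan is to exploit the fact that, for AV, the manipulators' cast votes affect the outcome only through the final scores of the candidates. Write $t=\abs{V_{\text{M}}}$ for the (constant) number of manipulators. I first observe that any increment vector $(\delta_c)_{c\in C}$ with $\delta_c\in\{0,1,\dots,t\}$ is realizable by exactly $t$ cast votes: letting the $i$-th cast vote approve $\{c\in C:\delta_c\ge i\}$ makes each $c$ approved by exactly $\delta_c$ of the cast votes (some of these sets may be empty, which is harmless under AV). Hence the achievable final score of $c$ is exactly $\abs{V(c)}+\delta_c$ for $\delta_c\in\{0,\dots,t\}$, chosen independently across candidates. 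Since the improvement conditions in {\probb{\cbcm}{AV}} and {\probb{\sbcm}{AV}} are stated against the \emph{fixed} committee~$\w$ and the \emph{fixed} truthful votes in~$V_{\text{M}}$, the whole problem reduces to deciding whether some such increment vector yields an election $E'$ all of whose winning $k$-committees are ``good'', where a $k$-committee $\w'$ is good if for every $v\in V_{\text{M}}$ it satisfies $\abs{v\cap \w'}>\abs{v\cap \w}$ (for {\cbcm}) or $(v\cap \w)\subsetneq(v\cap \w')$ (for {\sbcm}).

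Second, I would guess the $k$-winning-threshold~$s$ of~$E'$. Every achievable final score lies in $\{\abs{V(c)}+j : c\in C,\ 0\le j\le t\}$, so there are only polynomially many candidate values for~$s$ and I can iterate over all of them. Fixing~$s$ pins down, for each candidate, its set of \emph{achievable roles}: a candidate with $\abs{V(c)}>s$ is necessarily a strict winner (it must land in $\swinshort{E'}$), one with $\abs{V(c)}<s-t$ is necessarily a loser (it must land in $\sloseshort{E'}$), and each remaining candidate can, depending on the band in which $\abs{V(c)}$ falls, be pushed to a final score~$>s$, exactly~$s$, or~$<s$. Recalling from the preliminaries that for an additive rule the winning $k$-committees of~$E'$ are exactly the $k$-sets $\w'$ with $\swinshort{E'}\subseteq \w'\subseteq \swinshort{E'}\cup\pwinshort{E'}$, a role assignment is consistent with threshold~$s$ exactly when at most $k-1$ candidates are placed strictly above~$s$ and at least~$k$ are placed at or above~$s$ (the degenerate single-tie case, where the unique candidate at~$s$ is absorbed into $\swinshort{E'}$, being handled by the rule's definition).

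Third, the key to polynomiality for constant~$t$ is that a candidate's contribution to the goodness conditions depends only on its \emph{type} $\{v\in V_{\text{M}} : c\in v\}$, of which there are at most~$2^t$. Two candidates sharing a type and an achievable-role band are interchangeable both for consistency and for goodness, so I would group the candidates into the $O(2^t)$ classes indexed by (type, band) and enumerate only how many members of each class are assigned to $\swinshort{E'}$, to $\pwinshort{E'}$, and to $\sloseshort{E'}$. This is a count vector of constant dimension with entries in $\{0,\dots,\abs{C}\}$, hence $\abs{C}^{O(2^t)}$ possibilities, which is polynomial for constant~$t$. For each enumerated assignment all relevant quantities are computable directly from the counts: $p=\abs{\swinshort{E'}}$, $q=\abs{\pwinshort{E'}}$, and, for each $v\in V_{\text{M}}$, the numbers $\abs{v\cap \swinshort{E'}}$ and $\abs{v\cap\pwinshort{E'}}$. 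Because a winning committee picks an arbitrary $(k-p)$-subset of $\pwinshort{E'}$, the worst case for a manipulator~$v$ arises when that subset avoids $v$'s approved candidates as long as possible, so the minimum of $\abs{v\cap\w'}$ over all winning $\w'$ equals $\abs{v\cap\swinshort{E'}}+\max\{0,(k-p)-(q-\abs{v\cap\pwinshort{E'}})\}$; the {\cbcm} condition is precisely that this minimum exceeds $\abs{v\cap\w}$, and {\sbcm} additionally requires that $v\cap\w$ lie in every winning committee, an easily-checked constraint on the placement of the candidates of $v\cap\w$. I accept the instance iff some threshold~$s$ admits a consistent assignment meeting these conditions for all $v\in V_{\text{M}}$.

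The main obstacle, and the place where the constant bound on~$t$ is essential, is the universal quantifier ``all winning $k$-committees'': the tied block $\pwinshort{E'}$ induces exponentially many committees, and goodness must survive the adversarial choice of the $(k-p)$-subset simultaneously for all~$t$ manipulators. Encapsulating each manipulator's stake in the $\le 2^t$ types, reducing the decision to per-class counts, and replacing the quantifier by the closed-form worst-case expression above is exactly what turns the search into an $\abs{C}^{O(2^t)}$ enumeration rather than an exponential one. This is fully consistent with Theorem~\ref{thm-ccm-av-np-hard}, where an unbounded number of manipulators (hence of types) makes the same counting problem {\nph}.
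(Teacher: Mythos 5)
Your proposal is correct, and it takes a genuinely different route from the paper's proof. The paper first normalizes the solution space through two structural lemmas --- Claim~\ref{claim-b} (for AV one may assume all manipulators cast one and the same ballot, consisting only of candidates in $C^{\vee}(V_{\text{M}})$) and Claim~\ref{claim-c} (within each class $C^{\star}(S)$ the approved candidates may be assumed to form at most two blocks of the score order $\rhd_S$) --- and then enumerates, for each of the $2^t$ classes, the number of approved candidates and the block endpoints, checking each resulting election with essentially the same worst-case formula you derive. You bypass both lemmas by characterizing the manipulators' power under AV as exactly the set of increment vectors $(\delta_c)_{c\in C}\in\{0,\dots,t\}^C$ (realized by nested ballots), which lets you enumerate the final $k$-winning-threshold $s$ and per-class role counts instead of canonical ballots. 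Your route avoids the lengthy case analyses behind the two normalization claims and makes transparent why a constant number of manipulators helps (only the $2^t$ types matter); on the other hand, neither the paper's Claim~\ref{claim-b} (see Example~\ref{ex-1}) nor your increment-decoupling survives for SAV/NSAV, where a candidate's per-ballot contribution depends on the ballot size, which is why Theorem~\ref{thm-maniuplation-sav-nsav-polynomial-time-solvable-constant-number-manipulators} needs a different, dynamic-programming-based argument. Both algorithms run in time $m^{O(2^t)}$.

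One repair is needed for {\probb{\sbcm}{AV}}: your classes must also record whether a candidate belongs to $\w$. The extra condition you invoke --- that $v\cap\w$ lies in every winning $k$-committee --- constrains the placement of specific candidates, and membership in $\w$ is not determined by the type $\{u\in V_{\text{M}} : c\in u\}$: two candidates with the same type and the same band can differ on membership in $\w$, so counts over (type, band) classes cannot express this constraint. Concretely, the condition is that no candidate of $v\cap\w$ receives a final score below $s$, and none is left at exactly $s$ unless the number of candidates with final score at least $s$ equals exactly $k$ (so that every threshold candidate is forced into every winning committee); together with your worst-case counting condition this is equivalent to $(v\cap\w)\subsetneq(v\cap\w')$ for all winning $\w'$. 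Refining each (type, band) class by the indicator of membership in $\w$ keeps the number of classes in $O(2^t)$, so the enumeration remains polynomial and the rest of your argument goes through unchanged.
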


\begin{proof}
We consider first {\probb{\cbcm}{AV}}. Let $I=(C, V, V_{\text{M}}, \w)$ be an instance of {\probb{\cbcm}{AV}}, where $\w\subseteq C$ is a winning $k$-committee of AV at $(C, V)$, and~$V_{\text{M}}$ is the multiset of manipulative votes. Let $m=\abs{C}$ denote the number of candidates, let $k=\abs{\w}$ be the cardinality of~$\w$, and let~$t=\abs{V_{\text{M}}}$ be the number of manipulators which is a constant.
We derive a polynomial-time algorithm by splitting the instance~$I$ into polynomially many subinstances, so that the original instance~$I$ is a {\yesins} if and only if at least one of the subinstances is a {\yesins}. 
Before presenting the algorithm, we first study a property of the solution space of {\probb{\cbcm}{AV}}.

\begin{claimm}
\label{claim-b}
If~$I$ is a {\yesins}, then~$I$ has a feasible solution so that all manipulators turn to approve the same candidates  which are all from~$C^{\vee}(V_{\memph{\text{M}}})$, i.e., there exists a subset $S\subseteq C^{\vee}(V_{\memph{\text{M}}})$ so that every $v\in V_{\memph{\text{M}}}$ prefers every AV winning $k$-committee of $(C, V\cup V')$ to~$\w$, where~$V'$ is the multiset of~$t$ votes each of which approves exactly the candidates from~$S$.
\end{claimm}%
\smallskip

{\noindent{\textit{Proof of Claim~\ref{claim-b}.}}}
Assume that~$I$ is a {\yesins}, and let~$V'$ be a multiset of~$t$ votes over~$C$, one-to-one corresponding to votes of~$V_{\text{M}}$, so that every $v\in V_{\text{M}}$ prefers every AV winning $k$-committee of $E=(C, V\cup V')$ to~$\w$. For every $v\in V_{\memph{\text{M}}}$, let~$v'$ denote the vote in~$V'$ corresponding to~$v$. 
If all votes from~$V'$ approve the same candidates and they are all from~$C^{\vee}(V_{\text{M}})$, we are done. Otherwise, one of the two cases described below may occur.
\begin{description}
\item[Case~1:] $\exists v'\in V'$ so that~$v'$ approves at least one candidate $c\in C\setminus C^{\vee}(V_{\text{M}})$.

Let $\tilde{v}=v'\setminus \{c\}$, and let~$E'$ be the election obtained from~$E$ by replacing~$v'$ with~$\tilde{v}$. Now we prove that every $v\in V_{\text{M}}$ prefers every AV winning $k$-committee of~$E'$ to~$\w$. Let~$\w'$ be a winning $k$-committee of~$E'$. The proof proceeds by distinguishing the following cases.
\begin{itemize}
\item $c\in \swinshort{E}$

In this case, if $c\in \swinshort{E'}$, the AV winning $k$-committees of~$E$ and those of~$E'$ are the same; we are done. If $c\in \pwinshort{E'} \cup \sloseshort{E'}$, then $\swinshort{E'} \subseteq \swinshort{E}\cap \w'$ and $\w'\setminus \swinshort{E'} \subseteq \pwinshort{E'}$.

\begin{itemize}
\item If $\pwinshort{E}=\emptyset$ (equivalently, $\abs{\swinshort{E}}=k$), then~$\pwinshort{E'}$ is composed of~$c$ and a subset $B\subseteq \sloseshort{E}$.
Then, if $c\in \w'$,~$\w'$ is also an AV winning $k$-committee of~$E$, and hence every $v\in V_{\text{M}}$ prefers~$\w'$ to~$\w$.
Otherwise, let~$\{c'\}=\w'\setminus \swinshort{E'}$. Let $w''=\w' \setminus \{c'\}\cup \{c\}$. Then,~$\w''$ is the AV winning $k$-committee of~$E$ and hence is more preferred to~$\w$ by every $v\in V_{\text{M}}$. That is, for every $v\in V_{\text{M}}$ it holds that $\abs{v\cap \w''}>\abs{v\cap w}$. As $c\not\in C^{\vee}(V_{\text{M}})$, it follows that $\abs{v\cap \w'}\geq \abs{v\cap \w''}>\abs{v\cap \w}$ holds for every $v\in V_{\text{M}}$. 

\item Otherwise, it must hold that $\abs{\pwinshort{E}}\geq 2$ and $\pwinshort{E'}=\pwinshort{E} \cup \{c\}$. If $c\in \w'$, $\w'$ is also an AV winning $k$-committee of~$E$, and we are done. Otherwise, let~$c'$ be any arbitrary candidate from $\w'\cap \pwinshort{E}$. Then, similar to the above analysis,  we can show that every $v\in V_{\text{M}}$ prefers~$\w'$ to~$\w$ (by using $\w''=\w'\setminus \{c'\}\cup \{c\}$ as an intermediate committee).
\end{itemize}

\item $c\in \pwinshort{E}$

As $\pwinshort{E} \neq \emptyset$, we know that $\abs{\pwinshort{E}}\geq 2$ and $\abs{\swinshort{E}\cup \pwinshort{E}}>k$. It follows then $c\in \sloseshort{E'}$ and, more importantly, $(\swinshort{E'} \cup \pwinshort{E'})\subseteq (\swinshort{E}\cup \pwinshort{E})$, which implies that the AV winning $k$-committees of~$E'$ is a subcollection of the AV winning $k$-committees of~$E$. As a consequence, every $v\in V_{\text{M}}$ prefers every AV winning $k$-committee of~$E'$ to~$\w$.

\item $c\in \sloseshort{E}$

In this case, $(\swinshort{E'} \cup \pwinshort{E'})= (\swinshort{E} \cup \pwinshort{E})$. In other words, the AV winning $k$-committees of~$E$ and those of~$E'$ coincide. As a result, every $v\in V_{\text{M}}$ prefers every AV winning $k$-committee of~$E'$ to~$\w$.
\end{itemize}

\item[Case~2:] $v'\subseteq C^{\vee}(V_{\text{M}})$ for all $v'\in V'$ but votes in~$V'$ do not approve the same candidates.

In the following, we compute a subset~$S\subseteq C^{\vee}(V_{\text{M}})$  so that if all manipulators turn to approve exactly the candidates in~$S$, every manipulator prefers every winning $k$-committee of the resulting election to~$\w$. First, let $S=C^{\vee}(V')\setminus \sloseshort{E}$. We consider two cases. 
\begin{itemize}
\item $\abs{S}\leq k-\abs{\swinshort{E}}$

 In this case, we let~$E'$ be the election obtained from~$E$ by replacing every $v'\in V'$ with a vote approving exactly the candidates from~$S$. Then, we have that $\swinshort{E'}\subseteq (\swinshort{E}\cup S) \subseteq (\swinshort{E} \cup \pwinshort{E})$, and $\pwinshort{E'} \subseteq \pwinshort{E}$. So, the AV winning $k$-committees of~$E'$ is a subcollection of AV winning $k$-committees of $E$; we are done.

\item $\abs{S}> k-\abs{\swinshort{E}}$

In this case, let~$S'$ be any subset of arbitrary~$k-\abs{\swinshort{E}}$ candidates from $C^{\vee}(V')\cap \pwinshort{E}$. Then, we reset $S:=(S\cap \swinshort{E})\cup S'$. After all manipulators turn to approve candidates in~$S$, the AV winning $k$-committees of the resulting election is a subcollection of those of~$E$; we are done.
\end{itemize}
\end{description}

By the above analysis, if there exists at least one $v'\in V'$ which approves at least one candidate $c\in C\setminus C^{\vee}(V_{\text{M}})$, we can removing~$c$ from the vote~$v'$ with the guarantee that all manipulators still prefer all AV winning $k$-committees of the resulting election to~$\w$. We iteratively apply such deletions until Case~1 is not satisfied. Then, all votes from $V'$ approve only candidates from $C^{\vee}(V_{\text{M}})$. Then, by the  analysis in Case~2, there exists $S\subseteq C^{\vee}(V_{\text{M}})$ so that after replacing all votes in $V'$ with~$S$, all manipulators prefer all AV winning $k$-committee to $\w$.
This completes the proof of Claim~\ref{claim-b}.
\medskip

Now we start to delineate the algorithm. Recall that for~$U\subseteq V$,~$C_V^{{\star}}(U)$ is the set of candidates from~$C$ approved exactly by votes in~$U$ among all votes in~$V$. Clearly, for distinct $U, U'\subseteq V$, it holds that $C_V^{{\star}}(U)\cap C_V^{{\star}}(U')=\emptyset$. In the remainder of the proof, for $S\subseteq V_{\text{M}}$, we use~$C^{\star}(S)$ to denote~$C^{\star}_{V_{\text{M}}}(S)$, for notational brevity. For each nonempty~$S\subseteq V_{\text{M}}$, we guess a nonnegative integer~$x_{S}\leq \abs{C^{\star}(S)}$ which indicates the number of candidates from~$C^{\star}(S)$ that are approved by all manipulators in the final election. Therefore, we guess in total at most~$2^t$ such integers, and there are at most $\prod_{S\subseteq V_{\text{M}}}(1+\abs{C^{\star}(S)})=\bigo{(m+1)^{2^t}}$ different combinations of the guesses. In addition, in light of Claim~\ref{claim-b}, we guess the number~$k'$ of candidates approved by all manipulators in the final election. In effect, these guesses split the original instance~$I$ into $\bigo{m\cdot (m+1)^{2^t}}$ subinstances each of which takes as input~$I$, a positive integer~$k'\leq \abs{C^{\vee}(V_{\text{M}})}$, and a nonnegative integer $x_S\leq \abs{C^{\star}(S)}$ for every nonempty $S\subseteq V_{\text{M}}$, and asks if there is a $k'$-subset $\w'\subseteq C^{\vee}(V_{\text{M}})$ so that the following two conditions hold:
\begin{enumerate}
\item[(1)] for every nonempty $S\subseteq V_{\text{M}}$,~$\w'$ includes exactly~$x_{S}$ candidates from~$C^{\star}(S)$, and
\item[(2)] every manipulator prefers every winning $k$-committee of the election after all manipulators turn to approve exactly the candidates in~$\w'$ to~$\w$.
\end{enumerate}
Now we show how to solve a subinstance associated with~$k'$, and $\{x_{S} \setmid S\subseteq V_{\text{M}}\}$ as given above.
First, if $\sum_{\emptyset\ne S\subseteq V_{\text{M}}} x_S \neq k'$, by Condition~(1), we conclude that the  subinstance is a {\noins}.
Otherwise, we do the following. We let all manipulators approve~$x_{S}$ certain candidates in each~$C^{{\star}}(S)$ where $\emptyset\neq S\subseteq V_{\text{M}}$ and~$x_{S}>0$.
We first prove the following claim. Let $E=(C, V)$. For each $S\subseteq V_{\text{M}}$, let~$\rhd_{S}$ be a linear order on~$C^{\star}(S)$ so that for every $c, c'\in C^{\star}(S)$ it holds that $c \rhd_S c'$ if $\score{V}{c}{E}{\text{AV}} \geq \score{V}{c'}{E}{\text{AV}}$. We use~$\rhd_S[i]$ to denote the $i$-th candidate in~$\rhd_S$. A {\emph{block}} of~$\rhd_{S}$ is set of candidates in~$C^{\star}(S)$ that are consecutive in the order~$\rhd_S$. For two positive integers $i,j$ so that $i\leq j\leq \abs{C^{\star}(S)}$, let $\rhd_S[i,j]=\{\rhd_S[x] \setmid i\leq x\leq j\}$.
\medskip

The following observation is easy to see.

\begin{observation}
\label{obs-a}
Let $X, Y\subseteq C$ be  so that $\abs{X\cap C^{\star}(S)}\geq \abs{Y\cap C^{\star}(S)}$ holds for every $S\subseteq V_{\text{M}}$. Then, for every $v\in V_{\text{M}}$, $v$ prefers~$Y$ to~$\w$ implies that~$v$ prefers~$X$ to~$\w$.
\end{observation}

\begin{claimm}
\label{claim-c}
If~$I$ is a {\yesins},  then it has a feasible solution such that for every $v'\in V'$ and every $S\subseteq V_{\text{M}}$, it holds that $v'$ induces at most two blocks of~$\rhd_S$.
\end{claimm}

{\noindent{\it{Proof of Claim~\ref{claim-c}}}.} We prove the claim by contradiction. Let~$V'$ be a feasible solution of~$I$. Let $E'=(C, V\cup V')$. By Claim~\ref{claim-b}, we may assume that all votes in~$V'$ approve exactly the same candidates. Moreover, we may also assume that none of~$V'$ approves any candidates in $\sloseshort{E'}$, since otherwise we remove from all $v'\in V'$ the candidates in~$\sloseshort{E'}$, and after the removals the AV winning $k$-committees remain the same, and hence every vote in~$V_{\text{M}}$ still prefers every AV winning $k$-committee of the resulting election to~$\w$.
Now if every vote in~$V'$ induces at most two blocks of~$\rhd_S$, we are done. Otherwise, we show below how to transform ~$V'$ into another feasible solution so that the conditions stated in the claim hold.
Let~$v'$ be a vote from~$V'$ which induces at least three blocks of~$\rhd_S$ for some $S\subseteq V_{\text{M}}$. Let~$B_1$,~$B_2$, $\dots$,~$B_z$ be the blocks induced by~$v'$ so that $B_1\rhd_S B_2 \rhd_S\cdots \rhd_S B_z$, where $z\geq 3$ and $B_i \rhd_S B_j$ means $c\rhd_S c'$ for all $c\in B_i$ and all $c'\in B_j$. Let $B=\bigcup_{i=1}^z B_i$. Without loss of generality, let $B_i=\rhd_S[i_{\text{L}}, i_\text{R}]$ where $1\leq i_{\text{L}}\leq i_{\text{R}}\leq \abs{C^{\star}(S)}$.  Obviously, it holds that $\score{V\cup V'}{c}{E'}{\text{AV}} \geq \score{V\cup V'}{c'}{E'}{\text{AV}}$ for any $c, c'\in B$ so that $c\rhd_S c'$.  Our proof proceeds by distinguishing the following cases.
\begin{description}
    \item[Case~1:] $B\subseteq \pwinshort{E'}$ or $B\subseteq \swinshort{E'}$.

    In this case, after replacing candidates from~$B\setminus B_1$ by the $\abs{B\setminus B_1}$ consecutive candidates immediately after~$\rhd_S[1_{\text{R}}]$ in every vote of~$V'$, for every $S'\subseteq V_{\text{M}}$ the number of candidates from~$C^{\star}(S')$ contained in every~AV winning $k$-committee of the resulting election equals that contained in every AV wining $k$-committee of the election before the replacement. Then, by Observation~\ref{obs-a}, after the replacement~$V'$ remains a feasible solution of~$I$.

    \item[Case~2:]  $B\cap \swinshort{E'}\neq \emptyset$ and $B\cap \pwinshort{E'}\neq \emptyset$.

    In this case, let~$c$ be the right-most candidate from~$B\cap \swinshort{E'}$ in~$\rhd_S$, and let~$c'$ be the left-most candidate from~$B\cap \pwinshort{E'}$ in~$\rhd_S$. The following observations are clear:
\begin{itemize}
\item $c\rhd_S c'$, and hence the~AV score~of $c$ is larger than that of~$c'$ in $(C, V)$; and
\item there are no other candidates from~$B$ between~$c$ and~$c'$ in~$\rhd_S$.
\end{itemize}
Let~$x$ be the number of candidates from~$B$ before~$c$ in~$\rhd_S$, and let~$y$ be the number of candidates from~$B$ after~$c'$ in~$\rhd_S$. Then, in every vote in~$V'$ we replace  all candidates in~$B$ before~$c$ by the~$x$ consecutive candidates immediately before~$c$, and replace all candidates in~$B$ after~$c'$ by the~$y$ consecutive candidates immediately after~$c'$.
By Observation~\ref{obs-a}, the new~$V'$ remains a feasible solution of~$I$.
    \end{description}
By the analysis in the above two cases, if in a feasible solution of~$I$ there are votes which induce more than two blocks of~$\rhd_S$, we can transform it into another feasible solution of~$I$ where every vote induces at most two blocks of~$\rhd_S$. This completes the proof of Claim~\ref{claim-c}.
\medskip

Armed with Claim~\ref{claim-c}, for each $S\subseteq V_{\text{M}}$, we guess whether each manipulator's new vote induces one or two blocks, and guess the starting and ending points of the block(s). The number of the combinations of the guesses is bounded from above by $(m^4)^{2^t}$ which is a constant. Each fixed combination of the guesses corresponds to a multiset~$V'$ of~$t$ votes approving the same candidates. Given such a~$V'$ and $E'=(C, V\cup V')$, we compute $\swinshort{E'}$, $\pwinshort{E'}$, and $\sloseshort{E'}$. We conclude that given instance~$I$ is a {\yesins} if and only if there exists at least one such~$V'$ so that every $v\in V_{\text{M}}$ prefers every AV winning $k$-committee of~$E'$ to~$\w$, which can be done by checking if for every $v\in V_{\text{M}}$ whether the following inequality holds:
\[\abs{\swinshort{E'}\cap v}+\max\{0, k+\abs{\pwinshort{E'}\cap v}-\abs{\swinshort{E'}\cup \pwinshort{E'}}\}>\abs{v\cap \w}.\]

The algorithms for {\sbcm} are analogous. We only outline the differences. First, as all approved candidates of a manipulator which are in the original winning~$k$-committee~$\w$ are demanded to be in the final winning~$k$-committee, we can allow first that all manipulators approve all candidates from $\w\cap C^{\vee}(V_{\text{M}})$. Then, we calculate what other candidates from~$C^{\vee}(V_{\text{M}})\setminus \w$ should be approved by the manipulators by guessing an integer $k'\leq k-\abs{\w\cap C^{\vee}(V_{\text{M}})}$, in a way similar to the above algorithm. Third, in this case a vote $v\in V_{\text{M}}$ prefers every AV winning $k$-committee of~$E'$ to~$\w$ if and only if either (1) $\abs{\pwinshort{E'}}=1$ and $\sloseshort{E'}\cap v\cap \w=\emptyset$, or (2) $\abs{\pwinshort{E'}}>1$ and $(\pwinshort{E'}\cup \sloseshort{E'})\cap (v\cap \w)=\emptyset$.
\end{proof}

For SAV and NSAV, we also have polynomial-time algorithms.

\begin{theorem}
\label{thm-maniuplation-sav-nsav-polynomial-time-solvable-constant-number-manipulators}
For $\varphi\in \{\memph{\text{SAV}}, \memph{\text{NSAV}}\}$,
{\probb{\cbcm}{$\varphi$}} and {\probb{\sbcm}{$\varphi$}} are polynomial-time solvable if there are a constant number of manipulators.
\end{theorem}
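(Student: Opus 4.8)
The plan is to follow the proof of Theorem~\ref{thm-maniuplation-polynomial-time-solvable-constant-number-manipulators} almost verbatim, treating the base election as $(C,V)$ and the $t=\abs{V_{\text{M}}}$ manipulative votes as $t$ fresh votes whose approval sets we are free to choose. Since $t$ is a constant, the goal is again to split the instance into polynomially many subinstances, one per ``manipulation profile'', solve each in polynomial time, and accept iff some subinstance is accepted. I would treat {\probb{\cbcm}{$\varphi$}} first and then indicate the (purely cosmetic) modifications for {\probb{\sbcm}{$\varphi$}}, exactly as in the AV case: start by letting all manipulators approve $\w\cap C^{\vee}(V_{\text{M}})$, guess how many further candidates of $C^{\vee}(V_{\text{M}})\setminus\w$ are approved, and replace the final winning test by its subset-based counterpart.

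The one genuinely new ingredient compared with AV is that the score a candidate receives from a manipulative vote now depends on the \emph{cardinality} of that vote: under SAV a vote $u$ of size $s$ contributes $1/s$ to each approved candidate, so unlike AV the manipulators can no longer be assumed to cast identical votes. I would first re-prove the analogue of Claim~\ref{claim-b}, namely that some feasible solution has every manipulator approve only candidates of $C^{\vee}(V_{\text{M}})$; deleting an outside candidate from a manipulative vote only shrinks that vote, thereby (weakly) raising the $1/s$ contribution to the remaining, goal-relevant candidates, so the same exchange reasoning as in the AV proof applies. I would then guess the vector of sizes $(s_1,\dots,s_t)$ of the $t$ (reduced) manipulative votes; since each $s_i\in\{0,1,\dots,\abs{C}\}$ and $t$ is constant, there are only $(\abs{C}+1)^t$ such guesses. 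Once the sizes are fixed, the contribution of manipulator~$i$ to any candidate it approves equals the \emph{constant} $1/s_i$, so a candidate $c$ is fully described for scoring purposes by its \emph{new signature} $T_c\subseteq\{1,\dots,t\}$ — the set of manipulators approving it — yielding one of at most $2^t$ boosts $\beta_{T}=\sum_{i\in T}1/s_i$. With the boosts reduced to constants, the cell decomposition $C^{\star}(S)$ and the two-block normalization of the AV proof carry over: for each $S$ and each new signature I would guess how many candidates of $C^{\star}(S)$ receive that signature, and re-establish the analogue of Claim~\ref{claim-c} — within each cell, ordering candidates by base score, the candidates receiving a fixed boost may be taken to form at most two blocks, since the exchange argument behind Observation~\ref{obs-a} is insensitive to \emph{which} equal-contribution candidates are selected. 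Guessing the block endpoints costs only a factor polynomial in $\abs{C}$ (with exponent depending on $t$), after which, writing $E'$ for the manipulated election, the sets $\swinshort{E'}$, $\pwinshort{E'}$, $\sloseshort{E'}$ and the test $\abs{v\cap\w'}>\abs{v\cap\w}$ are evaluated exactly as before.

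For NSAV the additional subtlety is the disapproval penalty: a vote of size $s_i$ also subtracts $1/(\abs{C}-s_i)$ from every candidate it does \emph{not} approve. The observation that keeps the reduction polynomial is that, once the sizes are fixed, the net score change of a candidate $c$ equals $\sum_{i\in T_c}\bigl(\tfrac{1}{s_i}+\tfrac{1}{\abs{C}-s_i}\bigr)-\sum_{i=1}^{t}\tfrac{1}{\abs{C}-s_i}$, where the second sum is \emph{identical for every candidate} and therefore leaves the relative order of the scores — and hence $\swinshort{E'}$, $\pwinshort{E'}$, $\sloseshort{E'}$ — unchanged. Thus NSAV collapses to exactly the SAV analysis with the per-manipulator boost $1/s_i$ replaced by $1/s_i+1/(\abs{C}-s_i)$, and the same guessing scheme applies.

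The step I expect to be the main obstacle is re-proving the two-block lemma once manipulators may vote differently. In the AV proof Claim~\ref{claim-b} forces all manipulators onto a single approval set, so there is a single ordering $\rhd_S$ per cell to reason about; here a candidate's total boost is governed by its new signature $T_c$, so within one cell several boost levels coexist and compete for membership of the winning committees. I would handle this by refining each cell $C^{\star}(S)$ according to the guessed number of candidates of each new signature it must contain and arguing, signature by signature, that the exchange argument of Observation~\ref{obs-a} lets one slide each chosen group to a contiguous pair of blocks of the base-score order without disturbing, for any manipulator, the count of its approved candidates in the winning committees; verifying that these slides can be carried out simultaneously for all $2^t$ signatures without interfering is the delicate point of the argument.
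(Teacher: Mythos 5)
Your overall skeleton (restrict attention to $C^{\vee}(V_{\text{M}})$, use the constant number~$t$ of manipulators to make polynomially many guesses, solve each guess, and handle NSAV by noting that the disapproval penalty $\sum_{i=1}^{t}\frac{1}{m-s_i}$ is candidate-independent and so preserves the score order) matches the paper's strategy, and the NSAV reduction in particular is sound. The genuine gap is the step you yourself flag as ``the delicate point'': the two-block normalization (your analogue of Claim~\ref{claim-c}) in the regime where up to~$2^t$ distinct boost levels $\beta_T$ coexist inside a single cell $C^{\star}(S)$. This is not a routine extension. The exchange argument behind Observation~\ref{obs-a} is formulated for one common boost; when several signature classes compete inside one cell, sliding one class along the base-score order changes which candidates of the \emph{other} classes end up above, at, or below the threshold, and it is not true in any obvious way that each class can be confined to two intervals. (A repair is plausible --- the thresholds $s-\beta_T$ cut each cell into $O(2^t)$ zones within which candidates are interchangeable, giving ``constantly many'' blocks per class --- but that is a different lemma from the one you state, and it is exactly what you leave unproven.) It is worth knowing that the paper does \emph{not} attempt any block lemma for SAV/NSAV, precisely because Example~\ref{ex-1} kills the identical-votes assumption that makes blocks manageable for AV: instead it guesses, for every $S\subseteq V_{\text{M}}$ and every manipulator~$v$, the number $m(S,v)$ of candidates of $C^{\star}(S)$ that~$v$ approves, guesses the winning threshold~$s$ from a polynomial-size set, and then solves each guess by a dynamic program over the candidates of each cell in base-score order, with tables $T_S(x,i,j,\mathbf{b})$ tracking how many of the first~$x$ candidates end strictly above or exactly at~$s$ and how many are approved by each manipulator, the cells being combined afterwards by a Cartesian-product feasibility check. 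The DP needs no contiguity claim at all, which is how it sidesteps your obstacle.

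A second, smaller problem is your justification for restricting votes to $C^{\vee}(V_{\text{M}})$. In the AV proof, deleting an outside candidate~$c$ from a manipulative vote changes only~$c$'s score, and Claim~\ref{claim-b}'s case analysis relies on that. Under SAV, the deletion raises the contribution to every remaining candidate from $1/s$ to $1/(s-1)$, and such raises are precisely what can break feasibility: outside candidates appear in a solution exactly as dilution, to keep boosts small, and removing them can push a padding candidate into $\swinshort{E'}\cup\pwinshort{E'}$ and displace a candidate some other manipulator needs. So ``the same exchange reasoning as in the AV proof'' does not apply; the claim needs its own argument (the paper asserts it as an unproved observation, so the fact itself is shared ground, but your proposed derivation of it fails). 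Similarly, your {\sbcm} adaptation --- ``first let all manipulators approve $\w\cap C^{\vee}(V_{\text{M}})$'' --- is an AV-specific move: under SAV, forcing extra approvals rescales all of that vote's contributions, and the paper instead handles {\sbcm} by additionally guessing whether the final election has a unique winning $k$-committee and modifying the DP tables and the final acceptance test accordingly.
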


We defer the proof of Theorem~\ref{thm-maniuplation-sav-nsav-polynomial-time-solvable-constant-number-manipulators} to the Appendix.
The algorithms for SAV and NSAV are similar in principle to the one for AV but with a much larger number of subinstances to solve. The reason is that, unlike AV, for SAV and NSAV it is not always optimal for the manipulators to approve the same candidates, as shown in Example~\ref{ex-1}.
For this reason, instead of guessing only one common integer~$x_S$ for manipulators in~$S$, we need to guess many integers for manipulators in~$S$ separately. Nevertheless, as long as the number of manipulators is a constant, we are still ensured with a polynomially many combinations of guesses.

\begin{example}[Under SAV and NSAV it is not always optimal for the manipulators to approve the same candidates.]
\label{ex-1}
Consider an election~$E$ with nine candidates, seven nonmanipulative votes, and three manipulative votes as shown below, where a check mark means that the corresponding voter approves the corresponding candidate. %
  \begin{center}
    \includegraphics[width=0.65\textwidth]{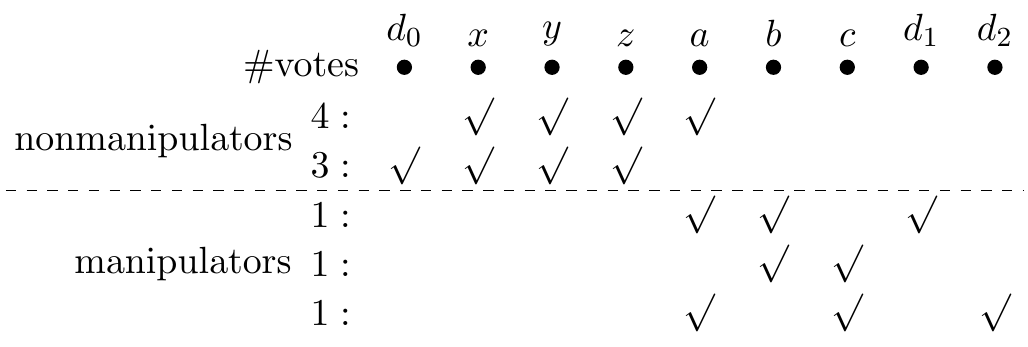}
   \end{center}
Obviously, the SAV score of each of~$x$,~$y$, and~$z$ is~$\frac{7}{4}$, that of~$a$ is $1+\frac{2}{3}=\frac{5}{3}$, that of each of~$b$ and~$c$ is~$\frac{1}{3}+\frac{1}{2}=\frac{5}{6}$, and that of each of the other candidates is strictly smaller than~$1$. Hence, the winning $2$-committees of SAV at~$E$ are exactly the $2$-subsets of $\{x, y, z\}$. To satisfy all manipulators, winning $2$-committees should be among $\{a, b\}$, $\{b, c\}$, $\{a, c\}$, $\{d_1, c\}$, and $\{d_2, b\}$. If all the three manipulators turn to approve the same set of candidates, at least one candidate from $\{b, c, d_1, d_2\}$ has SAV score at most $\frac{3}{2}$, which is strictly smaller than that of each of~$x$,~$y$, and~$z$, implying that the result is not more favorable by at least one manipulator. However, the manipulators are capable of improving the result by coordinating their votes in some other way. For instance, if one of the manipulator approves~$a$, and the other two approve~$b$, the SAV scores of~$a$ and~$b$ both become~$2$, which is strictly larger than that of every other candidate, making $\{a, b\}$ the unique winning $2$-committee. This shows that for SAV it is not always optimal for manipulators to approve the same set of candidates in order to improve the result in their favor.
By adding a large number of dummy candidates not approved by any vote, we can show a similar result for NSAV utilizing Lemma~\ref{lem-relation-sav-nsav} below.
\end{example}


For an election $(C, V)$ and a candidate~$c$, let $\textsf{SAV}_{(C, V)}(c)$ and $\textsf{NSAV}_{(C, V)}(c)$ be, respectively, the SAV score and the NSAV score of~$c$ in~$(C, V)$.

\begin{lemma}[\cite{DBLP:conf/atal/000120}]
\label{lem-relation-sav-nsav}
Let $(C, V)$ be an election where $m=\abs{C}\geq 2$ and $n=\abs{V}$. Let~$B$ be a set of at least $n\cdot m^2$ candidates disjoint from~$C$. Then, for every two candidates~$c$ and~$c'$ in~$C$, it holds that ${\memph{\textsf{SAV}}}_{(C, V)}(c)>{\memph{\textsf{SAV}}}_{(C, V)}(c')$ if and only if ${\memph{\textsf{NSAV}}}_{(C\cup B, V)}(c)>{\memph{\textsf{NSAV}}}_{(C\cup B, V)}(c')$.
\end{lemma}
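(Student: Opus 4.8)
The plan is to express the NSAV score in $(C\cup B,V)$ as the SAV score in $(C,V)$ minus a penalty term, and then show that enlarging the candidate set by $B$ drives this penalty below the resolution of the SAV scores. Write $M=\abs{C\cup B}=m+\abs{B}$. Since every vote in $V$ is a subset of $C$, its size $\abs{v}\le m$ is unchanged in $(C\cup B,V)$, and the only effect of adding $B$ is that the penalty denominator $m-\abs v$ becomes $M-\abs v$. Hence $\textsf{NSAV}_{(C\cup B,V)}(c)=\textsf{SAV}_{(C,V)}(c)-\sum_{v\in V,\,c\notin v}\frac{1}{M-\abs v}$. Subtracting the analogous expression for $c'$ and cancelling the votes that approve both or neither of $c,c'$, I obtain the identity
\[
\textsf{NSAV}_{(C\cup B,V)}(c)-\textsf{NSAV}_{(C\cup B,V)}(c')=\Big(\textsf{SAV}_{(C,V)}(c)-\textsf{SAV}_{(C,V)}(c')\Big)+E,
\]
where $E=\sum_{v\in V_c}\frac{1}{M-\abs v}-\sum_{v\in V_{c'}}\frac{1}{M-\abs v}$, with $V_c=\{v\in V: c\in v,\ c'\notin v\}$ and $V_{c'}=\{v\in V: c'\in v,\ c\notin v\}$.

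Next I would bound the perturbation $E$. Because $\abs v\le m$, each denominator satisfies $M-\abs v\ge\abs B\ge nm^2$, so every term is at most $\frac{1}{nm^2}$; since $\abs{V_c}+\abs{V_{c'}}\le n$, this yields $\abs E<\frac{n}{nm^2}=\frac{1}{m^2}$. Thus $E$ tends to $0$ as $B$ grows, and with the stated size it is already of order $1/m^2$.

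Then I would lower-bound the separation between distinct SAV scores. Each score $\textsf{SAV}_{(C,V)}(c)=\sum_{v\ni c}\frac1{\abs v}$ is a rational whose denominator divides $L=\mathrm{lcm}(1,2,\dots,m)$, so any two \emph{distinct} SAV scores differ by at least $1/L$. If one can guarantee $\abs E<1/L$, then the displayed identity is dominated by its SAV term: when $\textsf{SAV}(c)>\textsf{SAV}(c')$ the gap is $\ge 1/L>\abs E$ and the NSAV difference is positive (the forward direction), while if $\textsf{SAV}(c)<\textsf{SAV}(c')$ the NSAV difference is negative, whose contrapositive gives the backward direction.

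The hard part is exactly this domination step: one must certify that the penalty perturbation $\abs E$ is strictly smaller than the least possible nonzero separation of SAV scores, which is what the choice $\abs B\ge nm^2$ is intended to secure, and so the real content lies in the careful quantitative comparison of $\abs E$ against the minimal SAV gap rather than in the (routine) identity and crude bound on $E$. I would pay particular attention to the boundary case $\textsf{SAV}(c)=\textsf{SAV}(c')$, where the SAV term contributes nothing and the sign of the NSAV difference is decided solely by the residual $E$; here one must verify that the penalties attached to $c$ and $c'$ cannot separate two tied SAV scores, since otherwise the equivalence could fail in one direction. This tie analysis, together with pinning down the precise relationship between the attainable SAV gaps and $\abs E$, is where I would expect the genuine work to be.
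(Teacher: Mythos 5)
First, a point of order: the paper itself contains \emph{no} proof of this lemma --- it is imported verbatim from \cite{DBLP:conf/atal/000120} --- so your proposal has to be judged on its own merits rather than against an argument in the text. Your setup is correct: since every vote is a subset of $C$, writing $M=m+\abs{B}$ one indeed has $\textsf{NSAV}_{(C\cup B,V)}(c)=\textsf{SAV}_{(C,V)}(c)-\sum_{v\in V,\,c\notin v}\frac{1}{M-\abs{v}}$, your cancellation yielding $E$ is right, and $\abs{E}<1/m^2$ follows from $M-\abs{v}\geq\abs{B}\geq nm^2$. But the two steps you defer as ``the real content'' are not routine finishing work; they cannot be completed. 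For the domination step you would need $\abs{E}$ to beat the least positive gap between SAV scores, which, as you observe, is governed by $L=\mathrm{lcm}(1,\dots,m)$; but $L>m^2$ already for $m\geq 5$ (e.g.\ $\mathrm{lcm}(1,\dots,5)=60>25$) and $L$ grows exponentially in $m$, so the bound $\abs{E}<1/m^2$ purchased by $\abs{B}\geq nm^2$ is far weaker than the needed $\abs{E}<1/L$. This is not mere slack in your estimate: distinct SAV scores really can be closer than $1/m^2$ (with $m=6$, $\frac{2}{3}-\frac{1}{4}-\frac{2}{5}=\frac{1}{60}<\frac{1}{36}$), and one can even force an outright failure of the forward direction at $\abs{B}=nm^2$: with $m=8$, let $c$ be approved by $51$ votes of size $4$, $2$ votes of size $7$ and $7$ votes of size $2$, and let $c'$ be approved by $51$ votes of size $5$, $17$ votes of size $6$ and $14$ votes of size $4$ (no vote approving both), so $n=142$; then $\textsf{SAV}(c)-\textsf{SAV}(c')=\frac{1}{420}>0$, yet with $\abs{B}=nm^2=9088$ one computes $E\approx-0.00242$, hence $\textsf{NSAV}_{(C\cup B,V)}(c)<\textsf{NSAV}_{(C\cup B,V)}(c')$.

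The tie case you flagged is even worse: it refutes the stated equivalence for \emph{every} $B$, however large. Take $m=9$, $C=\{c,c',a,x,y,z,u,w,t\}$ and $V=\{\{c,a\},\{c',x,y,z\},\{c',u,w,t\}\}$. Then $\textsf{SAV}_{(C,V)}(c)=\textsf{SAV}_{(C,V)}(c')=\frac{1}{2}$, while for any $B$, with $M=\abs{C\cup B}$, we get $\textsf{NSAV}_{(C\cup B,V)}(c')-\textsf{NSAV}_{(C\cup B,V)}(c)=\frac{2}{M-4}-\frac{1}{M-2}>0$. So $\textsf{NSAV}(c')>\textsf{NSAV}(c)$ although $\textsf{SAV}(c')>\textsf{SAV}(c)$ is false, contradicting the right-to-left direction of the biconditional for the pair $(c',c)$. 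Enlarging $B$ shrinks the penalty terms but can never equalize them when the approving-vote size profiles of $c$ and $c'$ differ, so the residual $E$ genuinely does separate SAV-tied candidates --- exactly the scenario you said had to be ruled out. The conclusion is that there is no way to close your gaps, because the lemma is false as literally stated (under this paper's definition of NSAV); it can only hold under extra hypotheses, e.g.\ all votes of equal size, no SAV ties between candidates with different multisets of approving-vote sizes, or a one-directional statement with $\abs{B}$ calibrated to the actual minimum SAV gap (an $\mathrm{lcm}$-type quantity) rather than to $m^2$. Your instinct about where the danger lies was exactly right; what your write-up misses is that the danger is fatal to the statement itself, not just to your proof of it.
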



\onlyfull{
Now we study the bribery problems RCFB and CCFB. A major difference between RCFB/CCFB and {\cbcm}/{\sbcm} is that in {\cbcm}/{\sbcm} the voters who change their votes (i.e., the manipulators) are given as input, but in RCFB/CCFB who are vulnerable voters depend on the final winning~$k$-committee which is not given as input. Nevertheless, in the {\nphns} reduction for {\cbcm}/{\sbcm} in Theorem~\ref{thm-ccm-av-np-hard}, the four voters approving the current winning candidates $c_1,\dots,c_{\kappa}$ clearly cannot be any~$\w'$-vulnerable voters for any $\w'\neq \{c_1,\dots,c_{\kappa}\}$ because all of their approved candidates have been in the winning $k$-committee. Hence, only the voters corresponding to the edges can be bribed. Based on this observation we can derive the {\nphns} of RCFB/CCFB for AV by modifying the reductions for {\cbcm}/{\sbcm}. In particular, we add one additional candidate which serves as the distinguished candidate, one more dummy candidate~$c_{\kappa+1}$, and reset~$k=\kappa+1$. The {\nphns} reduction of {\cbcm}/{\sbcm} for other rules can be also modified to show the {\nphns} of RCFB/CCFB for the same rules.

\begin{theorem}
\label{thm-bribery-np-hard-many-rules-only-one-distinguished-candidate}
RCFB and CCFB for AV, SAV, NSAV, PAV, ABCCV, and MAV are {\nph} even if there is only one distinguished candidate.
\end{theorem}

We can also obtain {\fpt}-results for the bribery problems with respect to the number of candidates.

\begin{theorem}
\label{thm-bribery-fpt-wrt-candidates}
For $\varphi\in \{\memph{\text{AV}}, \memph{\text{SAV}}, \memph{\text{NSAV}}\}$,
{\probb{RCFB}{$\varphi$}} and {\probb{CCFB}{$\varphi$}} are {\fpt} with respect to the number of candidates.
\end{theorem}
}
%

%

\section{Control}
\label{sec-control}
In this section, we study the complexity of election control problems. We first study control by modifying the set of voters and then study control by modifying the set of candidates.

\subsection{Control of Voters}
\onlyfull{This section is devoted to the control problems.}
When considering AV as a single-winner voting rule (i.e., when $k=1$), it is known that {\probb{CCAV}{AV}} and {\probb{CCDV}{AV}} are {\nph}~\cite{DBLP:journals/ai/HemaspaandraHR07}. Notably, when~$k=1$, PAV, ABCCV, and AV are identical. As a consequence, {\prob{CCAV}} and {\prob{CCDV}} for PAV and ABCCV are also {\nph} even when~$k=1$.
For {\prob{CCAV}} and {\prob{CCDV}}, it remains to consider SAV, NSAV, and MAV\@. Note that these three rules are not equivalent to AV when $k=1$~\cite{DBLP:series/sbis/LacknerS23}. We first show the {\nphns} for SAV and NSAV even when restricted the above-mentioned special case.
Our reduction is from {\prob{RX3C}}. We provide the detailed reductions for SAV, and utilize Lemma~\ref{lem-relation-sav-nsav} to show how to adapt the reductions to make them applicable to NSAV.

\begin{theorem}
\label{thm-ccav-sav-np-hard}
For $\varphi\in \{\memph{\text{SAV}}, \memph{\text{NSAV}}\}$, {\probb{CCAV}{$\varphi$}} and {\probb{CCDV}{$\varphi$}} are {\nph} even when~$k=1$. Moreover, the hardness for {\prob{CCDV}} holds even if every vote approves at most three candidates and every candidate is approved by at most three voters. 
\end{theorem}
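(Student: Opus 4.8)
The plan is to reduce from \prob{RX3C}, using separate gadgets for the deletion and the addition variants, and then to lift both reductions from SAV to NSAV via Lemma~\ref{lem-relation-sav-nsav}. Throughout, since $k=1$ and $J=\{p\}$ for a single distinguished candidate $p$, the requirement that $J$ lie in all winning $1$-committees is equivalent to ``$p$ is the unique SAV (resp.\ NSAV) winner'', i.e.\ $p$ strictly maximizes the score.

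For \probb{CCDV}{SAV} I would start from an instance $(\xs,\xc)$ of \prob{RX3C}, create one candidate $\xse_j$ per element, the distinguished $p$, and three dummies $g_1,g_2,g_3$. The registered votes are one size-$3$ vote $\xce_i$ for each set (so each element candidate, lying in exactly three sets, gets SAV score $3\cdot\frac13=1$), together with $\{p,g_1\}$ and $\{p,g_2,g_3\}$, which pin $p$ to score $\frac12+\frac13=\frac56$. Put $\ell=\kappa$. Deleting a $p$-vote only lowers $p$, so an optimal solution deletes only set-votes; deleting one set containing $\xse_j$ drops its score from $1$ to $\frac23<\frac56$, whereas an untouched element stays at $1>\frac56$. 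Hence $p$ wins iff every element loses one of its sets, and with $\kappa$ deletions each covering three elements this is possible iff the deleted sets form an exact cover. Every vote has size at most three and every candidate is approved by at most three votes (each $\xse_j$ by exactly three, $p$ by two, each dummy by one), so the strengthened \textbf{(3,3)} restriction holds automatically.

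For \probb{CCAV}{SAV} the difficulty is that the monotone operation of adding voters must nevertheless certify \emph{exactness}. I would let the unregistered votes be $u_i=\{p\}\cup \xce_i$ (size $4$), so adding one boosts $p$ and each of its three elements by $\frac14$, and set $\ell=\kappa$. Registered votes (padded with dummies to realize the fractional scores) fix $p$ at a value $q$, every element at $r=q+\frac{2\kappa-3}{8}$, and one \emph{blocker} candidate $d$ at $D=q+\frac{\kappa-1}{4}$. The blocker forces the whole budget to be spent: $p$ overtakes $d$ only after all $\kappa$ additions, since $q+\frac{\kappa}{4}>D$ while $q+\frac{\kappa-1}{4}=D$. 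Given that all $\kappa$ votes are used, an element covered once reaches $r+\frac14=q+\frac{2\kappa-1}{8}<q+\frac\kappa4$ (strictly below $p$), an uncovered element stays at $r<q+\frac\kappa4$, but an element covered at least twice reaches $r+\frac12=q+\frac{2\kappa+1}{8}>q+\frac\kappa4$ (strictly above $p$). By pigeonhole any $\kappa$ sets that do not form an exact cover over-cover some element, so $p$ is the strict winner iff the chosen sets form an exact cover. Tuning these thresholds so that exact coverage is \emph{strictly} safe while over-coverage is \emph{strictly} fatal, and keeping every dummy below $p$, is the main technical point; the score realization itself (choosing vote sizes and dummy multiplicities) is routine bookkeeping.

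Finally, for the NSAV statements I would append to each construction a set $B$ of at least $n\cdot m^2$ fresh candidates approved by nobody, where $n$ bounds the number of votes in every election that can arise ($\abs{V}$ for deletion, $\abs{V}+\abs{U}$ for addition) and $m$ is the number of candidates. Since control of voters never alters the candidate set and the members of $B$ are disapproved by all votes, Lemma~\ref{lem-relation-sav-nsav} guarantees that the strict SAV order among the original candidates coincides with the strict NSAV order in the enlarged election, while the padding candidates (with strongly negative NSAV score) can never beat $p$. Thus $p$ is the unique NSAV winner iff it is the unique SAV winner, so the two SAV reductions transfer verbatim; the \textbf{(3,3)} restriction survives because the added candidates are approved by no vote.
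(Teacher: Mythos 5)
Your proposal is correct. The \probb{CCDV}{SAV} part and the NSAV lifting via Lemma~\ref{lem-relation-sav-nsav} are essentially identical to the paper's own proof (same candidates $p$ plus three dummies, same votes, same score table, the same ``deleting $p$-votes only hurts $p$'' observation, and the same remark that the $(3,3)$-restriction survives the padding by unapproved candidates). Where you genuinely diverge is \probb{CCAV}{SAV}. The paper uses the same unregistered votes $\{p\}\cup \xce$ but calibrates the registered profile differently: it creates $\frac{3}{4}\kappa(\kappa-2)$ registered votes each approving \emph{all} candidates except $p$, so every element candidate sits at score $(\kappa-2)/4$ and $p$ at $0$; no blocker candidate and no padding dummies are needed, because the element candidates themselves play the blocker role (with $j<\kappa$ additions $p$ has score $j/4$, while any element covered by an added vote already has $(\kappa-1)/4\geq j/4$, so $p$ cannot be the strict winner), after which the same ``each element covered at most once plus pigeonhole'' argument forces an exact cover. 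The trade-off: the paper's gadget requires $\kappa>2$ and $\kappa\equiv 0\pmod 4$ so that the number of registered votes is integral (justified by padding the \prob{RX3C} instance), whereas your gadget needs no divisibility assumption but must realize the scores $q$, $r$, $D$; that realization is indeed the routine bookkeeping you claim (e.g., take $q=0$ and give each element $2\kappa-3$, and the blocker $2\kappa-2$, size-$8$ registered votes padded with fresh dummies, so every dummy scores $1/8<\kappa/4$), and your threshold arithmetic checks out: $(2\kappa-1)/8<\kappa/4<(2\kappa+1)/8$ and $D=q+(\kappa-1)/4$ forces the full budget to be spent. One point you were right to make explicit, and which the paper passes over silently, is that Lemma~\ref{lem-relation-sav-nsav} only compares candidates of the original election, so the NSAV transfer additionally needs the (easy) observation that the padding candidates, having negative NSAV score, can never themselves become winners.
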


\begin{proof}
We prove the theorem separately for each problem. All reductions are from {\prob{RX3C}}. Let~$(\xs,\xc)$ be an instance of {\prob{RX3C}} such that~$\abs{\xs}=\abs{\xc}=3\kappa$.

\begin{itemize}
\item {\probb{CCAV}{SAV}}
\end{itemize}

We assume that~$\kappa>2$ and~$\kappa$ is divisible by~$4$. This assumption does not change the complexity of the problem{\footnote{If~$\kappa$ is not divisible by~$4$, we can modify the instance into an equivalent instance where this condition is satisfied. In particular, if $\kappa\equiv 3 \pmod 4$, we create three further elements in~$\xs$, add three $3$-subsets each of which consists of exactly these three newly introduced elements in~$\xc$, and increase~$\kappa$ by one. If $\kappa\equiv 2\pmod 4$, we create six further elements $a_1,\dots,a_6$ in~$\xs$, add three copies of $\{a_1,a_2,a_3\}$ and three copies of $\{a_4,a_5,a_6\}$ into~$\xc$, and increase~$\kappa$ by two. If $\kappa\equiv 1\pmod 4$, we create~$9$ further elements $a_1,\dots,a_9$ in~$\xs$, add three copies of each of $\{a_1,a_2,a_3\}$, $\{a_4, a_5, a_6\}$, and $\{a_7,a_8, a_9\}$ into~$\xc$, and increase~$\kappa$ by three. After the operations, the instance is still an instance of {\prob{RX3C}}, and it is easy to see that the original instance is a {\yesins} if and only if the new instance is a {\yesins}.}}. We create an instance $((C, V), k, U, J, \ell)$ of {\probb{CCAV}{SAV}} as follows. For each~$\xse\in \xs$, we create one candidate denoted still by~$\xse$ for simplicity. In addition, we create one candidate~$p$ which is the only distinguished candidate. Let~$C=\xs\cup \{p\}$ and let $J=\{p\}$. We create $\frac{3}{4} \kappa \cdot(\kappa-2)$ registered votes each of which approves all candidates except the distinguished candidate~$p$. Let~$V$ be the multiset of these registered votes. Note that as $\kappa\equiv 0 \pmod{4}$ and $\kappa>2$, $\frac{3}{4} \kappa \cdot(\kappa-2)$ is a positive integer, and hence~$V$ is well-defined. Then, for each $\xce=\{\xse_x,\xse_y,\xse_z\}\in \xc$, we create in~$U$ an unregistered vote~$v(\xce)$ which approves the four candidates~$p$,~$\xse_x$,~$\xse_y$, and~$\xse_z$. We complete the construction by setting~$k=1$ and~$\ell=\kappa$. The construction clearly can be done in polynomial time. In the following we show that the given {\prob{RX3C}} instance is a {\yesins} if and only if the above constructed instance of {\probb{CCAV}{SAV}} is a {\yesins}.

$(\Rightarrow)$ Assume that~$\xc$ contains an exact~$3$-set cover $\xc'\subseteq \xc$ of~$\xs$. Let us consider the election after adding into~$V$ all unregistered votes corresponding to the~$\kappa$ elements in~$\xc'$, i.e., the election $E=(C, V\cup U')$ where $U'=\{v(\xce)\in U \setmid \xce\in \xc'\}$. As each unregistered vote approves exactly four candidates including~$p$, the SAV score of~$p$ in~$E$ is~$\frac{\kappa}{4}$. In addition, as~$\xc'$ is an exact~$3$-set cover of~$\xs$, due to the above construction, for each candidate~$\xse\in \xs$,~$U'$ contains exactly one unregistered vote approving~$\xse$, implying that the SAV score of~$\xse$ increases by~$\frac{1}{4}$ after the addition of votes in~$U'$ to~$V$. Given that the SAV score of each candidate~$\xse\in \xs$ in $(C, V)$ is~$\frac{3}{4}\kappa \cdot (\kappa-2)\cdot \frac{1}{3\kappa}=\frac{\kappa-2}{4}$, we know that the SAV score of~$a$ in~$E$ is~$\frac{\kappa-1}{4}$. Therefore, the distinguished candidate~$p$ has the unique highest score and hence~$p$ uniquely wins~$E$.

$(\Leftarrow)$ Assume that there is a multiset~$U'\subseteq U$ of cardinality at most $\ell=\kappa$ so that~$p$ becomes the unique SAV winner of $E=(C, V\cup U')$. Let $\xc'=\{\xce\in \xc \setmid v(\xce)\in U'\}$. Since the distinguished candidate~$p$ has the unique least SAV score~$0$ in $(C, V)$, it holds that $\abs{U'}\geq 1$. Then, it is easy to see that~$\abs{U'}=\kappa$, since otherwise at least one candidate in~$\xs$ has SAV score at least $\frac{\kappa-2}{4}+\frac{1}{4}=\frac{\kappa-1}{4}$, and the distinguished candidate has SAV score at most~$\frac{\kappa-1}{4}$ in~$E$, contradicting that~$p$ is the unique winner of~$E$. It follows that the SAV score of~$p$ in~$E$ is~$\frac{\kappa}{4}$, and the SAV score of every~$\xse\in \xs$ in~$E$ is strictly smaller than~$\frac{\kappa}{4}$. Due to the construction of the votes, this means that for every~$\xse\in \xs$ there is at most one vote $v(\xce)\in U'$ which approves~$\xse$, and by the definition of~$v(\xse)$ and~$\xc'$ it holds that~$\xse\in \xce\in \xc'$. Given $\abs{\xc'}=\abs{U'}=\kappa$, it holds that~$\xc'$ is an exact~$3$-set cover of~$\xs$.

\begin{itemize}
\item {\probb{CCDV}{SAV}}
\end{itemize}

We create an instance $(C, V, k, J, \ell)$ of {\probb{CCDV}{SAV}} as follows. For each~$\xse\in \xs$, we create one candidate denoted by the same symbol for simplicity. In addition, we create four candidates~$p$,~$d_1$,~$d_2$, and~$d_3$, where~$p$ is the distinguished candidate. Let $C=\xs\cup \{p, d_1, d_2, d_3\}$ and let~$J=\{p\}$. We create the following votes in~$V$. First, we create two votes~$v_1=\{p,d_1\}$ and $v_2=\{p, d_2, d_3\}$. Then, for each $\xce=\{\xse_x,\xse_y,\xse_z\}\in \xc$, we create one vote $v(\xce)=\{\xse_x, \xse_y, \xse_z\}$. We complete the construction by setting~$k=1$ and~$\ell=\kappa$. Clearly, the construction can be done in polynomial time. It remains to show the correctness of the reduction. The SAV scores of all candidates are summarized in Table~\ref{tab-ccdv-sav-scores}.
\begin{table}
\caption{SAV scores of candidates in the instance of {\probb{CCDV}{SAV}} in the proof of Theorem~\ref{thm-ccav-sav-np-hard}.}
\begin{center}
\begin{tabular}{llllll}\toprule
&$p$ & $\xse\in \xs$ & $d_1$ & $d_2$ & $d_3$ \\ \midrule

SAV scores &$\frac{1}{2}+\frac{1}{3}=\frac{5}{6}$ & $1$ & $\frac{1}{2}$ & $\frac{1}{3}$ & $\frac{1}{3}$ \\ \bottomrule
\end{tabular}
\end{center}
\label{tab-ccdv-sav-scores}
\end{table}

$(\Rightarrow)$ Let~$\xc'\subseteq \xc$ be an exact~$3$-set cover of~$\xs$. Let $V'=\{v(\xce) \setmid \xce\in \xc'\}$ be the votes corresponding to~$\xc'$. We claim that after removing all the~$\kappa$ votes in~$V'$ from~$V$, the distinguished candidate~$p$ becomes the unique winner. Let $E=(C, V\setminus V')$. Clearly, the SAV scores of~$p$,~$d_1$,~$d_2$, and~$d_3$ in~$E$ remain the same as summarized in Table~\ref{tab-ccdv-sav-scores}. As~$\xc'$ is an exact~$3$-set cover of~$\xs$, due to the construction of the votes, for each candidate~$\xse\in \xs$, $V'$ contains exactly one vote approving~$\xse$. Therefore, the SAV score of~$\xse$ in~$E$ decreases to  $1-\frac{1}{3}=\frac{2}{3}$, resulting in~$p$ being the unique SAV winner of~$E$.

$(\Leftarrow)$ Assume that there is $V'\subseteq V$ of at most~$\ell=\kappa$ votes whose removal results in~$p$ being the unique SAV winner. We may assume that $\{v_1,v_2\}\cap V'=\emptyset$, since it is easy to see that if~$p$ uniquely wins $(C, V\setminus V')$, then~$p$ also uniquely wins $(C, V\setminus (V'\setminus \{v_1, v_2\}))$. Under this assumption, the SAV score of~$p$ in $(C,V\setminus V')$ remains $\frac{5}{6}$. To decrease the score of a candidate~$\xse\in \xs$, at lease one vote approving~$\xse$ must be in~$V'$. Given $\ell=\kappa$, we know that the submultiset $\{\xce\in \xc \setmid v(\xce)\in V'\}$ corresponding to~$V'$ is an exact~$3$-set cover of~$\xs$.

\begin{itemize}
\item {\probb{CCAV}{NSAV}} and {\probb{CCDV}{NSAV}}
\end{itemize}

Our reduction for {\probb{CCAV}{NSAV}} (respectively, {\probb{CCDV}{NSAV}}) is obtained from the above reduction for {\probb{CCAV}{SAV}} (respectively, \probb{CCDV}{SAV}) by adding additional $n\cdot m^2$ candidates not approved by any vote. Here,~$m$ and~$n$ are respectively the number of candidates and the number of votes created in the instance of {\probb{CCAV}{SAV}} (respectively, \probb{CCDV}{SAV}). Then, by Lemma~\ref{lem-relation-sav-nsav}, we know that for any $V'\subseteq V\cup U$ (respectively,~$V'\subseteq V$), a candidate has the unique highest SAV score in $(C, V')$ if and only if it has the unique NSAV score in $(C, V')$. This implies that the constructed instance of {\probb{CCAV}{SAV}} (respectively, \probb{CCDV}{SAV}) is a {\yesins} if and only if the instance of {\probb{CCAV}{NSAV}} (respectively, \probb{CCDV}{NSAV}) is a {\yesins}.
\end{proof}

Now we consider {\prob{CCAV}} and {\prob{CCDV}} for MAV.
Unlike the above results, we show that, somewhat interestingly, {\probb{CCAV}{MAV}} and {\probb{CCDV}{MAV}} have different complexity. Concretely, {\probb{CCAV}{MAV}} is {\nph} when $k=1$, while {\probb{CCDV}{MAV}} turns out to be polynomial-time solvable as long as~$k$ is a constant.
The following lemma, which characterizes the space of MAV winning $1$-committees, is useful in establishing the {\nphns} of {\probb{CCAV}{MAV}}.

\begin{lemma}
\label{lem-a}
Let~$(C, V)$ be an election, and let~$A\subseteq V$ be the submultiset of votes in~$V$ each approving the maximum number of candidates, i.e., $A=\argmax_{v\in V}\{\abs{v}\}$. Moreover, let~$C'$ be the set of candidates approved by all votes from~$A$, i.e., $C'=\bigcap_{v\in A} v$. Then, if $C'\neq \emptyset$, all candidates in~$C'$ are tied as MAV single winners of $(C, V)$. Otherwise, all candidates in~$C$ are tied as MAV single winners of $(C, V)$.
\end{lemma}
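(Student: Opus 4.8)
The plan is to reduce the whole question to Hamming distances of single-candidate committees against the maximum-size votes $A$. First I would record the elementary distance formula: for a candidate $c$, regarded as the committee $\{c\}$, and a vote $v$, the Hamming distance $\abs{\{c\}\setminus v}+\abs{v\setminus\{c\}}$ equals $\abs{v}-1$ if $c\in v$ and $\abs{v}+1$ if $c\notin v$. Writing $M$ for the common cardinality of the votes in $A$ (so $\abs{v}=M$ for $v\in A$ and $\abs{v}\le M-1$ for $v\in V\setminus A$), this yields two immediate consequences: every candidate has MAV score at most $M+1$, since each single distance is at most $\abs{v}+1\le M+1$; and against a vote $v\in A$ the distance to $\{c\}$ is exactly $M-1$ when $c\in v$ and exactly $M+1$ when $c\notin v$.

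Next I would split into the two regimes of the statement. If $C'=\emptyset$, then no candidate lies in all votes of $A$, so every $c\in C$ misses some $v\in A$ and is at distance $M+1$ from it; together with the universal upper bound $M+1$ this pins the score of every candidate to \emph{exactly} $M+1$, so all of $C$ is tied, as claimed. If $C'\ne\emptyset$, the same observation shows that every candidate outside $C'$ still attains score exactly $M+1$, whereas a candidate $c\in C'$ belongs to every vote of $A$ and is therefore at distance exactly $M-1$ from each of them. Hence every $c\in C'$ has score at most $M$ and strictly beats every candidate outside $C'$, so the winning singletons are confined to $C'$; it then remains only to argue that all members of $C'$ receive one and the same score.

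The hard part will be exactly this last point, and it is driven by the votes that are not of maximum size. A vote $v$ with $\abs{v}=M-1$ that fails to approve a given $c\in C'$ contributes distance $\abs{v}+1=M$, strictly larger than the $M-1$ coming from $A$; and since different members of $C'$ are approved by different smaller votes, these contributions are a priori asymmetric across $C'$. The crux is therefore to control the non-maximum-size votes and show that for every $c\in C'$ the binding distance is still governed by $A$, so that no smaller vote can separate the scores of two candidates in $C'$. I would attack this by a direct comparison of the smaller votes' contributions against the value $M-1$ forced by $A$; once that uniformity across $C'$ is secured, the two regimes above immediately deliver the claimed ties.
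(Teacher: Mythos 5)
The portion you actually prove is correct, and it is exactly what the paper's own proof establishes: the distance formula, the tie of all of $C$ at score $M+1$ when $C'=\emptyset$, and, when $C'\neq\emptyset$, that every candidate outside $C'$ has score exactly $M+1$ while every member of $C'$ has score at most $M$, so the winning singletons are confined to $C'$. The step you defer, however---that the votes of non-maximum size can never separate the scores of two members of $C'$---is not a missing technicality: it is false, so no ``direct comparison'' can close it. Take $C=\{a,b,c,d\}$ and $V=\{v_1,v_2\}$ with $v_1=\{a,b,c\}$ and $v_2=\{a,d\}$. Then $A=\{v_1\}$, $M=3$, and $C'=\{a,b,c\}\neq\emptyset$, but the MAV scores of the singleton committees are $\max(2,1)=2$ for $\{a\}$ and $\max(2,3)=3$ for $\{b\}$ and for $\{c\}$ (the binding vote being $v_2$, which has size $M-1$ and misses $b$ and $c$). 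Hence $\{a\}$ is the \emph{unique} MAV winning $1$-committee, and the members of $C'$ are not tied: the asymmetry you flagged is real and cannot be argued away.

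You should also know that the paper's own proof contains precisely this gap: it derives the same two bounds (score at most $x$ inside $C'$, at least $x+1$ outside, in the paper's notation $x=M$) and then concludes that ``every candidate $a\in C'$ is a MAV single winner,'' an inference the bounds do not support and which the example above refutes. What is actually provable---and what your argument already yields---is the weaker statement that, when $C'\neq\emptyset$, the MAV winning $1$-committees form a nonempty subcollection of $\{\{c\} : c\in C'\}$. The full tie across $C'$ does hold under the extra hypothesis that all votes in $V$ approve the same number of candidates: then $V=A$, every member of $C'$ has score exactly $M-1$, every candidate outside $C'$ has score $M+1$, and your two regimes close immediately. That equal-cardinality situation is the one arising in the paper's reductions where all constructed votes have the same size (e.g., the CCAV and CCDC constructions), so the honest fix is either to add that hypothesis to the lemma or to weaken its conclusion to the confinement statement.
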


\begin{proof}
Let~$(C, V)$, $A$, and~$C'$ be as stipulated in Lemma~\ref{lem-a}. Let~$x$ be the number of candidates approved by each vote in~$A$, i.e., for every $v\in A$ it holds that $x=|v|$. Lemma~\ref{lem-a} clearly holds if $C'=C$. So, in the following let us assume that $C\setminus C'\neq\emptyset$.

If $C'\neq \emptyset$, then for any singleton committee~$\{a\}\subseteq C'$, the Hamming distance between~$\{a\}$ and every vote in~$A$ is $x-1$, and that between~$\{a\}$ and every vote not in~$A$ is at most~$x$. Hence,~$\{a\}$ has MAV score at most~$x$. Now we analyze the MAV score of a singleton committee~$\{b\}$ where $b\in C\setminus C'$. Due to the definition of~$C'$, there exists at least one vote $v\in A$ such that $b\not\in v$. The Hamming distance between~$\{b\}$ and~$v$ is $x+1$, implying that $\{b\}$ has MAV score at least $x+1$. Therefore, every candidate~$a\in C'$ is a MAV single winner.

If $C'=\emptyset$, then for any singleton committee~$\{a\}\subseteq C$, there exists at least one vote $v\in A$ such that $a\not\in v$. The Hamming distance between~$\{a\}$ and~$v$ is $x+1$. Clearly,  the Hamming distance between~$\{a\}$ and any vote not in~$A$ is at most~$x$. Hence, we know that all singleton committees of~$C$ have the same MAV score $x+1$, implying that all candidates are tied as MAV single winners of~$(C, V)$.
\end{proof}

We mention in passing that Lackner and Skowron~\cite{DBLP:series/sbis/LacknerS23} call rules which are not identical to AV when $k=1$ {\textit{non-standard}} rules, and gave two small examples to show that both MAV and SAV are nonstandard. Lemma~\ref{lem-a} fully characterizes the space of MAV winning $1$-committees, and from the characterization it is easy to see that MAV does not necessarily select candidates receiving the most approvals when $k=1$.

Now we are ready to give the {\nphns} of {\probb{CCAV}{MAV}}.

\begin{theorem}
\label{thm-ccav-mav-nph-k-1}
{\probb{CCAV}{MAV}} is {\nph}, even when $k=1$, there is only one registered vote, and all votes approve the same number of candidates.
\end{theorem}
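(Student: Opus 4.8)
The plan is to reduce from \prob{Vertex Cover} restricted to $3$-regular graphs, which is \nph{} as noted above. The key leverage is Lemma~\ref{lem-a} together with the size restriction demanded by the theorem. Indeed, once every vote is forced to approve the same number $x$ of candidates, then in \emph{any} election $(C, V\cup U')$ that can arise during control the set of maximum-approval votes is the \emph{entire} vote multiset, and so the intersection from Lemma~\ref{lem-a} is the global intersection $C'=\bigcap_{v\in V\cup U'} v$. By that lemma, $J=\{p\}$ lies in all winning $1$-committees (i.e.\ $p$ is the unique MAV winner) if and only if $C'=\{p\}$. With a single registered vote $v_0$ this reads $v_0\cap\bigcap_{u\in U'}u=\{p\}$: every chosen vote must contain $p$, and every \emph{other} candidate of $v_0$ must be dropped by at least one chosen vote. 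I will engineer the unregistered votes so that this "covering by exclusion'' is exactly a vertex cover.

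For the construction, let $G=(N,A)$ be a $3$-regular graph with $m=\abs{A}$, and let $\kappa$ be the target. Introduce the distinguished candidate~$p$, a candidate $c_e$ for each edge $e\in A$, and three dummy candidates $d_1,d_2,d_3$; set $C=\{p,d_1,d_2,d_3\}\cup\{c_e\setmid e\in A\}$. The single registered vote is $v_0=\{p\}\cup\{c_e\setmid e\in A\}$. For each vertex $w\in N$ create one unregistered vote $u_w=\{p,d_1,d_2,d_3\}\cup\{c_e\setmid w\notin e\}$, i.e.\ $u_w$ contains $p$, the three dummies, and exactly the edge-candidates of the edges \emph{not} incident to~$w$. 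Finally set $J=\{p\}$, $k=1$, and $\ell=\kappa$. Because $G$ is $3$-regular, each vertex is incident to exactly three edges, so $\abs{u_w}=4+(m-3)=m+1=\abs{v_0}$; hence every vote approves the same number of candidates and there is a unique registered vote, matching all restrictions in the statement.

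For correctness, take any $S\subseteq N$ with chosen set $U'=\{u_w\setmid w\in S\}$. The dummies lie outside $v_0$, so they are automatically excluded from the intersection, and a direct computation gives
\[
v_0\cap\bigcap_{w\in S}u_w=\{p\}\cup\bigl(\{c_e\setmid e\in A\}\setminus\{c_e\setmid \exists\,w\in S,\ w\in e\}\bigr),
\]
which equals $\{p\}$ precisely when every edge is incident to some $w\in S$, i.e.\ when $S$ is a vertex cover of~$G$. By the characterization in the first paragraph, $p$ is the unique MAV winner of $(C,V\cup U')$ iff $S$ is a vertex cover. Thus a vertex cover of size $\kappa$ yields a feasible solution with $\abs{U'}=\kappa=\ell$, and conversely any feasible $U'$ with $\abs{U'}\le\kappa$ gives a vertex cover $S$ with $\abs{S}\le\kappa$ (which can be padded to size $\kappa$). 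This establishes the equivalence.

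The main point requiring care is keeping all votes of equal size without corrupting the winner characterization: the three dummies in each $u_w$ serve only to equalize cardinalities, and since they never appear in $v_0$ they cannot survive in $v_0\cap\bigcap_{u\in U'}u$, so Lemma~\ref{lem-a} continues to pin the unique winner to the global intersection. The conceptual crux is the "exclusion equals coverage'' correspondence — each non-$p$ candidate of $v_0$ must be \emph{omitted} by some chosen vote — which is exactly vertex cover, and $3$-regularity is what makes this translation size-uniform.
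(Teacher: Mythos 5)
Your proof is correct and follows essentially the same route as the paper's: both force every vote (registered and unregistered) to approve the same number of candidates so that Lemma~\ref{lem-a} pins the unique MAV winner to the global intersection of all votes, and then encode a covering problem through the condition that each non-$p$ candidate of the registered vote must be excluded by some added vote. The only difference is cosmetic — you reduce from \prob{Vertex Cover} on $3$-regular graphs using three shared dummy candidates to equalize vote sizes, whereas the paper reduces from \prob{RX3C} using three fresh candidates per set — and this changes nothing structural in the argument.
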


\begin{proof}
We prove the theorem via a reduction from {\prob{RX3C}}. Let $(\xs,\xc)$ be an {\prob{RX3C}} instance where $\abs{\xs}=\abs{\xc}=3\xsize>0$. We create a {\probb{CCAV}{MAV}} instance $((C, V), k, U, J, \ell)$ as follows. For each $\xse\in \xs$, we create one candidate denoted still by~$\xse$ for simplicity. In addition, for each $\xce\in \xc$, we create three candidates denoted by~$c(\xce_1)$,~$c(\xce_2)$, and~$c(\xce_3)$. Moreover, we create one candidate~$p$ which is the only distinguished candidate. Let~$C$ be the set of all these $12\kappa+1$ created candidates, and let $J=\{p\}$. Concerning the votes, we create only one registered vote in~$V$ which approves all candidates in~$\xs$ and the distinguished candidate~$p$, and disapproves all the other candidates. Unregistered votes are created according to $\xc$. In particular, for each $\xce\in \xc$, we create one unregistered vote~$v(\xce)$ which approves exactly the four candidates~$p$,~$c(\xce_1)$,~$c(\xce_2)$,~$c(\xce_3)$, and every candidate $\xse\in\xs$ such that $\xse\not\in \xce$. Let~$U$ be the set of all the created~$3\kappa$ unregistered votes. Note that all created votes approve exactly $3\xsize+1$ candidates. We complete the reduction by setting $\ell=\xsize$. The construction can be done in polynomial time. It remains to prove the correctness of the reduction.

$(\Rightarrow)$ Suppose that there is an exact $3$-set cover $\xc'\subsetneq \xc$ of~$\xs$. Let $U'=\{v(\xce) \setmid \xce\in \xc'\}$ be the set of the~$\ell$ unregistered votes corresponding to~$\xc'$. Let $E=(C, V\cup U')$. Due to the definition of~$\xc'$ and the construction of the election,~$p$ is the unique candidate that is approved by all votes from $V\cup U'$. By Lemma~\ref{lem-a},~$p$ is the unique MAV winner of~$E$.

$(\Leftarrow)$ Suppose that there exists $U'\subseteq U$ of cardinality at most~$\ell$ so that~$p$ is the unique MAV winner of $E=(C, V\cup U')$. Due to Lemma~\ref{lem-a}, in~$E$, for every candidate $\xse\in \xs$, there must be at least one vote from~$U'$ not approving~$\xse$. Due to the construction of the unregistered votes, this means that~$U'$ contains at least one vote~$v(\xce)$ such that $\xse\in \xce\in \xc$. It follows that $\xc'=\{\xce\in \xc \setmid v(\xce)\in U'\}$ is a set cover of~$\xs$. Moreover, as every $\xce\in \xc$ is of cardinality three, it holds that $\abs{U'}=\kappa$ and~$\xc'$ is an exact $3$-set cover of~$\xs$.
\end{proof}

In contrast to the {\nphns} of {\probb{CCAV}{MAV}} even when restricted to the special case as stated in Theorem~\ref{thm-ccav-mav-nph-k-1}, {\probb{CCDV}{MAV}} is polynomial-time solvable as long as~$k$ is a constant. As far as we know, MAV is the first natural voting rule for which the complexity of {\prob{CCAV}} and {\prob{CCDV}} differs.

\begin{theorem}
\label{thm-ccav-ccdv-mav-polynomial-time-solvable-k-constant}
{\probb{CCDV}{MAV}} is polynomial-time solvable when~$k$ is a constant.
\end{theorem}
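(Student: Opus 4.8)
The plan is to exploit that, for constant $k$, the number of $k$-committees is $\binom{\abs{C}}{k}=O(\abs{C}^{k})$, hence polynomial, and to turn the deletion problem into a polynomially bounded enumeration. For a candidate deletion set $V'\subseteq V$, write $f_{V'}(w)=\max_{v\in V\setminus V'}(\abs{w\setminus v}+\abs{v\setminus w})$ for the MAV score of a $k$-committee $w$ in the remaining election, and set $s^{+}(V')=\min_{w\in \mathcal{C}_{k,C}(J)} f_{V'}(w)$ and $s^{-}(V')=\min_{w\in \overline{\mathcal{C}_{k,C}}(J)} f_{V'}(w)$. Since the winning $k$-committees are exactly those of minimum score, the first step is to observe that $J$ is contained in all winning $k$-committees of $(C,V\setminus V')$ if and only if $s^{+}(V')<s^{-}(V')$. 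So the instance is a {\yesins} precisely when some $V'$ with $\abs{V'}\le \ell$ satisfies $s^{+}(V')<s^{-}(V')$, under the conventions that a maximum over an empty set of votes is $-\infty$ and a minimum over an empty family of committees is $+\infty$.

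The second step is the structural heart: deleting voters can only decrease scores, because $f_{V'}(w)$ is a maximum over the surviving votes. Consequently, to lower $s^{+}$ one deletes voters lying far from some good $J$-committee, whereas to keep $s^{-}$ large one should delete as few voters as possible, since extra deletions can only hurt. This lets me restrict attention to canonical minimal deletions: for a $J$-committee $w^{+}\in\mathcal{C}_{k,C}(J)$ and an integer threshold $d$, let $B(w^{+},d)=\{v\in V : \abs{w^{+}\setminus v}+\abs{v\setminus w^{+}}>d\}$ be the unique smallest set whose deletion brings $f(w^{+})$ down to at most $d$. The algorithm then enumerates every pair $(w^{+},d)$ with $w^{+}\in\mathcal{C}_{k,C}(J)$ and $d\in\{0,\dots,k+\abs{C}\}$; for each pair it forms $B(w^{+},d)$, checks $\abs{B(w^{+},d)}\le\ell$, and, if so, tests whether every $w^{-}\in\overline{\mathcal{C}_{k,C}}(J)$ still has $f_{B(w^{+},d)}(w^{-})\ge d+1$. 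It returns \yes\ iff some pair passes both checks. Each ingredient---enumerating $O(\abs{C}^{k})$ committees, enumerating $O(\abs{C})$ thresholds, and computing a maximum Hamming distance to at most $\abs{V}$ votes---is polynomial, so for constant $k$ the whole procedure runs in polynomial time.

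For correctness I would argue both directions. Soundness is immediate: if $(w^{+},d)$ passes, then $V'=B(w^{+},d)$ has $\abs{V'}\le\ell$, $s^{+}(V')\le f_{V'}(w^{+})\le d$, and the test gives $s^{-}(V')\ge d+1$, so $s^{+}(V')<s^{-}(V')$. Completeness is where the monotonicity pays off: given any witness $V'_{0}$ with $s^{+}(V'_{0})<s^{-}(V'_{0})$, set $d_{0}=s^{+}(V'_{0})$ and pick $w^{+}_{0}$ attaining it; every voter at distance $>d_{0}$ from $w^{+}_{0}$ must already lie in $V'_{0}$, so $B(w^{+}_{0},d_{0})\subseteq V'_{0}$ and $\abs{B(w^{+}_{0},d_{0})}\le\ell$, while $V\setminus B(w^{+}_{0},d_{0})\supseteq V\setminus V'_{0}$ forces $f_{B(w^{+}_{0},d_{0})}(w^{-})\ge f_{V'_{0}}(w^{-})\ge d_{0}+1$ for every $w^{-}$, so the pair $(w^{+}_{0},d_{0})$ passes. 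The hard part is precisely this completeness argument, namely justifying that canonical minimal deletions lose no generality; it hinges entirely on the monotonicity of MAV scores under voter deletion. The only remaining nuisance is the degenerate branch in which $B(w^{+},d)$ deletes all votes, which the empty-maximum convention handles automatically by failing the threshold test and correctly rejecting that branch.
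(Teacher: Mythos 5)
Your proposal is correct and is essentially the paper's own algorithm: both enumerate pairs consisting of a $k$-committee $w$ containing $J$ and a score threshold, delete exactly those votes whose Hamming distance from $w$ exceeds the threshold (the forced minimal deletion), and then check the deletion budget together with the requirement that every $k$-committee omitting $J$ scores strictly above the threshold. Your explicit monotonicity-based completeness argument is simply a spelled-out version of the paper's remark that any feasible solution is captured by the subinstance with the same committee and a smaller threshold.
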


\begin{proof}
Let $I=(C, V, k, J, \ell)$ be a {\probb{CCDV}{MAV}} instance. Let $m=|C|$ denote the number of candidates. We derive an algorithm as follows.
First, we compute~$\mathcal{C}_{k, C}(J)$ and~$\overline{\mathcal{C}_{k,C}}({J})$, i.e., the collection of $k$-committees containing~$J$, and the collection of $k$-committees not containing~$J$, respectively.
As the number of $k$-committees is~$m\choose k$ and~$k$ is a constant, they can be computed in polynomial time. Then, we split the given instance~$I$ into polynomially many subinstances each of which takes as input~$I$, a nonnegative integer~$x\leq m$, and a $k$-committee $\w\in \mathcal{C}_{k, C}(J)$, and determines whether we can delete at most~$\ell$ votes from~$V$ in $(C,V)$ so that in the remaining election
\begin{enumerate}
\item[(1)] $\w$~has MAV at most~$x$, and
\item[(2)] all $k$-committees from $\overline{\mathcal{C}_{k, C}}(J)$ have MAV scores at least~$x+1$.
\end{enumerate}
The above two conditions ensure that in the remaining election winning $k$-committees must be from~$\mathcal{C}_{k, C}(J)$.
Obviously,~$I$ is a {\yesins} if and only if at least one of the subinstances is a {\yesins}.
We focus on solving a subinstance $(I, x, w)$. Our algorithm proceeds as follows.
First, by Condition~(1), all votes in~$V$ which are of Hamming distance at least~$x+1$ from~$\w$ need to be deleted; we do so and decrease~$\ell$ by the number of votes deleted. If $\ell<0$ after doing so, we immediately conclude that the subinstance is a {\noins}.
Otherwise, if there exists $\w'\in \overline{\mathcal{C}_{k, C}}(J)$ whose MAV score in the remaining election is at most~$x$, we conclude that the subinstance is a {\noins} too. (Note that in this case, the original instance~$I$ might be a {\yesins}. Nevertheless, a feasible solution of~$I$ will be captured by another subinstance associated with the same~$\w$ but with a smaller~$x$.) Otherwise, the above two conditions are satisfied, and we conclude that the subinstance is a {\yesins}.
\end{proof}

The algorithm in the proof of Theorem~\ref{thm-ccav-ccdv-mav-polynomial-time-solvable-k-constant} runs in time~$\bigos{m^k}$. In the language of parameterized complexity theory this is an {\xp}-algorithm with respect to~$k$. It is interesting to study if this algorithm can be improved to an {\fpt}-algorithm for the same parameter. We leave it as an open question.

\subsection{Control of Candidates}
Now we consider control by modifying the candidate set. Notice that for AV it is impossible to change the scores of registered candidates by adding unregistered candidates, as already observed  in the context of single-winner voting~\cite{baumeisterapproval09,DBLP:journals/jcss/HemaspaandraH07}. This implies that AV is immune to CCAC. However, this is not the case for SAV and NSAV, since in these two cases adding candidates may increase the number of approved candidates of some votes and hence affect the scores of these candidates. We show that {\probb{CCAC}{SAV}} and {\probb{CCAC}{NSAV}} are {\nph} even when treating them  as single-winner voting rules.

\begin{theorem}
\label{thm-ccac-sav-np-hard}
{\probb{CCAC}{SAV}} and {\probb{CCAC}{NSAV}} are {\nph}. Moreover, this holds even when~$k=1$, every vote approves at most four candidates, and every candidate is approved by at most three votes.
\end{theorem}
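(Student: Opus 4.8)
The plan is to reduce from \prob{RX3C}. Given an instance $(\xs, \xc)$ with $\abs{\xs} = \abs{\xc} = 3\kappa$ in which every $\xse \in \xs$ lies in exactly three sets of $\xc$, I would build a \probb{CCAC}{SAV} instance (with $k=1$) as follows. The registered candidates are one candidate per element of $\xs$ (named $\xse$) together with the single distinguished candidate $p$, so $C = \xs \cup \{p\}$ and $J = \{p\}$; the unregistered set is $D = \{d(\xce) : \xce \in \xc\}$, one dedicated ``dilutor'' per set. For the votes I would create a single registered vote $\{p\}$ and, for each $\xce = \{\xse_x, \xse_y, \xse_z\} \in \xc$, a registered vote $v(\xce) = \{\xse_x, \xse_y, \xse_z, d(\xce)\}$. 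Finally set $\ell = \kappa$. Note immediately that every vote approves at most four candidates and every candidate is approved by at most three votes ($p$ by one, each $\xse$ by the three sets containing it, each $d(\xce)$ by one), so the claimed restrictions hold by construction.

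The key mechanism I would exploit is that, unlike \textsf{AV}, adding a candidate under \textsf{SAV} dilutes the co-approved candidates: restricted to $C$ the vote $v(\xce)$ has size three and gives each of $\xse_x, \xse_y, \xse_z$ weight $1/3$, whereas once $d(\xce)$ is added its size becomes four and each element gets only $1/4$. Hence in $(C, V)$ every element has \textsf{SAV} score $3 \cdot \tfrac13 = 1$, exactly tying $p$, whose score is $1$ and, crucially, is untouched by any addition since $p$'s only vote contains no element of $D$. The heart of the argument is that $s_p = 1$ sits at precisely the right threshold: an undiluted element scores $1$ and merely ties $p$, while an element whose covering vote is diluted scores $\tfrac23 + \tfrac14 = \tfrac{11}{12} < 1$. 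For the forward direction I would add the dilutors of an exact cover; each element, covered exactly once, drops to $\tfrac{11}{12}$, each added $d(\xce)$ scores only $\tfrac14$, and $p$ becomes the unique winner. For the converse I would use that $p$ wins uniquely only if every element is strictly below $1$, which forces every element to be diluted, hence covered, by the chosen $D'$; a counting argument (each dilutor covers three elements, there are $3\kappa$ elements, and $\abs{D'} \le \kappa$) then forces $D'$ to correspond to an exact cover.

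The two points requiring care are the counting argument just sketched and the transfer to \textsf{NSAV}. For the latter I would reuse Lemma~\ref{lem-relation-sav-nsav}: I would append to the registered candidate set a set $B$ of dummy candidates approved by no vote, with $\abs{B}$ chosen at least $\abs{V}\cdot\abs{C\cup D}^2$ so that the bound of the lemma holds simultaneously for every sub-election $(C\cup D', V)$ that can arise; the lemma then guarantees that $p$ has the unique highest \textsf{NSAV} score in $(C\cup D'\cup B, V)$ exactly when it has the unique highest \textsf{SAV} score in $(C\cup D', V)$, so the same $D'$ witnesses the equivalence, and the restrictions are preserved since $B$ adds no approvals. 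I expect the main obstacle to be nailing the numeric threshold under the imposed degree restrictions: the cap of four candidates per vote forces the dilution to be only $\tfrac13 \to \tfrac14$, and the trick that makes everything fit is choosing $p$'s score to equal the undiluted element score $1$ exactly, so that uncovered elements tie (rather than beat) $p$ while diluted elements fall strictly below it.
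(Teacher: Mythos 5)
Your reduction is correct, and it follows the same skeleton as the paper's proof: a reduction from \prob{RX3C} with $k=1$ that exploits the fact that adding candidates dilutes the SAV scores of co-approved candidates, followed by exactly the same use of Lemma~\ref{lem-relation-sav-nsav} (padding with dummy candidates, sized against $\abs{V}\cdot\abs{C\cup D}^2$ so the lemma applies to every sub-election $(C\cup D', V)$) to transfer the hardness to NSAV. The gadget itself, however, is genuinely different. The paper organizes votes per \emph{element}: each $\xse$ gets a singleton vote approving only $c(\xse)$ and a second vote approving $c(\xse)$ together with the three unregistered candidates $c(\xce)$ with $\xse\in\xce$, so the added set-candidates themselves act as dilutors (each appearing in three votes), and three extra dummy candidates $d_1,d_2,d_3$ are used to pin $p$'s score at $11/6$, strictly between the diluted element score $3/2$ and the undiluted score $2$. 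You organize votes per \emph{set}, with one dedicated dilutor $d(\xce)$ inside each set-vote, and you pin $p$'s score at exactly the undiluted element score $1$, so that uncovered elements tie $p$ while covered ones drop to $11/12$. Your construction is leaner ($3\kappa+1$ votes versus $6\kappa+3$, no dummy candidates), and it satisfies the stated restrictions just as the paper's does. What the paper's extra machinery buys is tie-freeness: in its election, no candidate can ever attain $p$'s score $11/6$ no matter which unregistered candidates are added (element scores lie in $\{2,3/2,4/3,5/4\}$ and added candidates score at most $3/2$), which is what supports the paper's remark that its hardness reductions carry over to resolute variants under arbitrary tie-breaking schemes. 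In your election, all of $\xs\cup\{p\}$ are tied initially, so under a tie-breaking scheme that favors the distinguished candidate the empty addition would already succeed; your reduction therefore proves the stated theorem (the all-winning-committees model) but does not directly extend to every tie-breaking variant.
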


\begin{proof}
We prove the theorem by reductions from {\prob{RX3C}}. Let~$(\xs,\xc)$ be an {\prob{RX3C}} instance where $\abs{\xs}=\abs{\xc}=3\kappa>0$. We first provide the reduction for SAV, and then show how to utilize Lemma~\ref{lem-relation-sav-nsav} to adapt the reduction for NSAV.

\begin{itemize}
\item {\probb{CCAC}{SAV}}
\end{itemize}

We create an instance $((C\cup D, V), k, J, \ell)$ of {\probb{CCAC}{SAV}} as follows. For each~$\xce\in \xc$, we create one candidate~$c(\xce)$. For each $\xse\in \xs$, we create one candidate~$c(\xse)$. In addition, we create a distinguished candidate~$p$ and three dummy candidates~$d_1$,~$d_2$, and~$d_3$. Let
\[C=\{c(\xse) \setmid \xse\in \xs\}\cup \{p\}\cup \{d_1, d_2, d_3\},\] $D=\{c(\xce) \setmid \xce\in \xc\}$, and~$J=\{p\}$.
We create the following votes. First, we create three votes~$v_1=\{p\}$, $v_2=\{p, d_1\}$, and $v_3=\{p, d_2, d_3\}$. Then, for each~$\xse\in \xs$, we create two votes~$v(\xse)$ and~$v'(\xse)$. In particular,~$v'(\xse)$ approves exactly~$c(\xse)$, and~$v(\xse)$ approves exactly~$c(\xse)$ and every~$c(\xce)$ such that $\xse\in \xce\in \xc$. Hence, the vote~$v(\xse)$ approves exactly four candidates, one from~$C$ and three from~$D$. We complete the reduction by setting~$k=1$ and~$\ell=\kappa$. The above instance clearly can be constructed in polynomial time. We show the correctness of the reduction as follows. The SAV scores of the candidates in~$(C, V)$ are summarized in Table~\ref{tab-ccac-sav-nph}.
\begin{table}
\caption{The SAV scores of candidates in the election restricted to registered candidates constructed in the proof of Theorem~\ref{thm-ccac-sav-np-hard}.}
\begin{center}
\begin{tabular}{ll}\toprule
candidates & SAV scores \\ \midrule
$p$ & $1+1/2+1/3=11/6$\\

$c(\xse)$ & $1+1=2$ \\

$d_1$ & $1/2$\\

$d_2$, $d_3$ & $1/3$ \\ \bottomrule
\end{tabular}
\end{center}
\label{tab-ccac-sav-nph}
\end{table}

$(\Rightarrow)$ Let $\xc'\subseteq \xc$ be an exact~$3$-set cover of~$\xs$, and let $D'=\{c(\xce) \setmid \xce\in \xc'\}\subseteq D$ be the set of the~$\kappa$ candidates corresponding to~$\xc'$. Consider the election $E=(C\cup D', V)$. Clearly, the SAV scores of~$p$,~$d_1$,~$d_2$, and~$d_3$ in~$E$ remain the same as in~$(C, V)$ (see Table~\ref{tab-ccac-sav-nph}). Now we analyze the SAV scores of all~$c(\xse)$ where~$\xse\in \xs$. As~$\xc'$ is an exact $3$-set cover of~$\xs$, in~$E$ each vote~$v(\xse)$ approves exactly two candidates---~$c(\xse)$ and some~$c(\xce)$ such that~$\xse\in \xce\in \xc'$. Therefore, after adding the candidates from~$D'$ into~$C$, the SAV score of~$c(\xse)$ decreases to~$2-1/2=3/2$. Each candidate in~$D'$ has SAV score~$3/2$ in~$E$ too. Therefore,~$p$ is the unique SAV winner of~$E$.

$(\Leftarrow)$ Assume that~$p$ becomes the unique SAV winner after adding a set~$D'\subseteq D$ of at most~$\ell=\kappa$ candidates into~$C$. Let $E=(C\cup D', V)$. Similar to the above analysis, we know that the SAV score of~$p$ in~$E$ remains $11/6$. As~$p$ uniquely wins~$E$ under SAV, we know that the SAV score of every candidate~$c(\xse)$ in~$E$, where~$\xse\in \xs$, must be decreased compared to that in $(C, V)$. Due to the construction of the election, this means that for every candidate~$c(a)$ where~$a\in \xs$ there exists at least one candidate~$c(\xce)\in D'$ where~$\xce\in \xc$ such that~$\xse\in \xce$. After adding such a candidate~$c(\xce)$ from~$D$ into~$C$, the SAV score of~$c(\xse)$ with respect to the vote~$v(\xse)$ decreases from~$1$ to~$1/2$, leading to a final SAV score~$3/2$, smaller than the score of~$p$. As $\ell=\kappa$, it follows that $\{\xce\in \xc \setmid c(\xce)\in D'\}$ is an exact~$3$-set cover of~$\xs$.

\begin{itemize}
\item {\probb{CCAC}{NSAV}}
\end{itemize}

Our reduction for {\probb{CCAC}{NSAV}} is obtained from the above reduction for {\probb{CCAC}{SAV}} by adding~$n\cdot m^2$ new registered candidates in~$C$ not approved by any votes.  Here,~$m$ and~$n$ are respectively the number of all candidates (registered and unregistered) and the number of votes created in the instance of {\probb{CCAC}{SAV}}. The newly created registered candidates have enough small SAV scores so that no matter which unregistered candidates from~$D$ are added into~$C$, none of them is winning. Then, by Lemma~\ref{lem-relation-sav-nsav}, we know that for any $D'\subseteq D$, a candidate has the unique highest SAV score in $(C\cup D', V)$ if and only if it has the unique NSAV score in $(C\cup D', V)$. This implies that the constructed instance of {\probb{CCAC}{SAV}} is a {\yesins} if and only if the instance of {\probb{CCAC}{NSAV}} is a {\yesins}.
\end{proof}

Now we move on to the three nonadditive rules PAV, ABCCV, and MAV. The immunity of single-winner AV to CCAC also implies that PAV and ABCCV are immune to CCAC when~$k=1$. More generally, one can observe that PAV and ABCCV are immune to CCAC when the number of distinguished candidates equals~$k$\onlyfull{, the size of the desired winning $k$-committee}. In this case, the question of {\prob{CCAC}} is degenerated to whether we can add at most~$\ell$ unregistered candidates so that a given~$k$-committee~$J$ is the unique $k$-winning committee. To see that ABCCV and PAV are immune to CCAC in this special case, observe that if the given $k$-committee~$J$  is not a winning $k$-committee of $(C, V)$, there exists a committee~$\w\subseteq C$ other than~$J$ which has at least the same score as that of~$J$. As the scores of committees in $(C, V)$ do not change by adding further candidates into~$C$, the committee~$\w$ prevents~$J$ from being the unique winning $k$-committee no matter which candidates are added. This reasoning in fact applies to all multiwinner voting rules that satisfy an axiomatic property defined below, which in general states that if a committee is not uniquely winning it cannot be uniquely winning when additional candidates are introduced.

\begin{definition}[Negated Revealed Preference (NRP)]
A multiwinner voting rule~$\varphi$ satisfies NRP if for every election $(C, V)$ and every $k$-committee $\w\subseteq C$, if~$\w$ is not the unique $k$-winning committee of~$\varphi$ at $(B, V)$ for some $B\subseteq C$ such that $\w\subseteq B$, then for any $B'\subseteq C$ such that $B\subseteq B'$ the committee~$\w$ is not the unique winning $k$-committee of~$\varphi$ at $(B', V)$.
\end{definition}

It should be noted that the above definition is a variant of the notion of the unique version of weak axiom of revealed preference (unique-WARP) that has been extensively studied for single-winner voting rules. Concretely, a single-winner voting rule satisfies unique-WARP if the unique winner remains as unique winner when restricted to any subset of candidates containing this winner. Compared to unique-WARP, NRP specifies the nonwinning status of some committee without mentioning the identities of winning committees.\footnote{There are several extensions of WARP to choice correspondences in the literature (see, e.g.,~\cite{DBLP:journals/jet/BrandtH11,PetesPtheorydecision2021}), where a choice correspondence is a function that assigns to each subset $S\subseteq C$ a subset $C'\subseteq S$. So, these notions apply to resolute multiwinner voting rules which always select exactly one winning committee.}

\begin{theorem}
\label{thm-pav-abbcv-mav-immue-to-ccac-k-equal-distinguished-candidates}
Every NRP multiwinner voting rule is immune to {\memph{CCAC}} when the number of distinguished candidates is~$k$.
\end{theorem}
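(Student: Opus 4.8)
The plan is to reduce the statement directly to the definition of NRP by exploiting the fact that when $\abs{J}=k$ the requirement that~$J$ be contained in \emph{all} winning $k$-committees collapses to the requirement that~$J$ be the \emph{unique} winning $k$-committee. Indeed, since $\abs{J}=k$, any winning $k$-committee that contains~$J$ must equal~$J$; hence ``$J$ belongs to all winning $k$-committees of $(C\cup D', V)$'' is equivalent to ``$J$ is the unique winning $k$-committee of $(C\cup D', V)$.'' This equivalence is the hinge that lets the NRP axiom speak to the control problem.

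First I would restate immunity in the contrapositive form suited to the definition. By definition of immunity it suffices to show: if~$J$ is \emph{not} contained in all winning $k$-committees of~$\varphi$ at the current election $(C, V)$, then for every $D'\subseteq D$ the set~$J$ is not contained in all winning $k$-committees of~$\varphi$ at $(C\cup D', V)$. Using the equivalence above, and regarding $C\cup D$ as the fixed candidate universe over which~$V$ is defined (so that $J\subseteq C\subseteq C\cup D$), this becomes: if~$J$ is not the unique winning $k$-committee at $(C, V)$, then~$J$ is not the unique winning $k$-committee at $(C\cup D', V)$ for any $D'\subseteq D$.

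The core step is a single invocation of NRP. I would apply the axiom with committee $\w=J$, with $B=C$ and $B'=C\cup D'$, all regarded as subsets of the universe $C\cup D$. The hypotheses of NRP are met: $J\subseteq B=C$ because $J\subseteq C$, and $B=C\subseteq B'=C\cup D'$. The premise of NRP---that~$J$ fails to be the unique winning $k$-committee at $(B, V)=(C, V)$---is exactly our assumption. The conclusion of NRP then yields that~$J$ is not the unique winning $k$-committee at $(B', V)=(C\cup D', V)$, which is what we need. Since $D'\subseteq D$ is arbitrary, no addition of candidates can turn~$J$ into the unique (equivalently, the universally contained) winning $k$-committee, establishing immunity.

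I expect no serious obstacle, as the argument is essentially a definitional unwinding. The only point requiring care is the bookkeeping of ambient candidate sets: one must ensure that the election $(C, V)$ in the immunity definition is the restriction of the votes to the registered candidates, and that NRP is applied within the full universe $C\cup D$, so that both~$C$ and $C\cup D'$ qualify as admissible subsets $B$ and $B'$. A secondary point worth stating explicitly (rather than proving) is the $\abs{J}=k$ equivalence between ``contained in all winners'' and ``unique winner,'' since it is precisely what connects the control-theoretic notion of immunity to the choice-theoretic axiom NRP.
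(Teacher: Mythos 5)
Your proof is correct and follows essentially the same route as the paper: the paper's (informal) argument, given in the discussion surrounding the definition of NRP, likewise observes that when $\abs{J}=k$ the CCAC goal collapses to making~$J$ the unique winning $k$-committee, and then invokes NRP with $B=C$ and $B'=C\cup D'$ inside the universe $C\cup D$ exactly as you do. Your write-up is simply a more careful formalization of that definitional unwinding, including the correct handling of ties via the ``unique winner'' equivalence.
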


It is easy to see that ABCCV and PAV fulfill NRP. However, this is not the case for MAV, as shown in Example~\ref{ex-c}.

\begin{example}[MAV fails NRP.]
\label{ex-c}
Let $C=\{a,b\}$, $D=\{c, d\}$, and $V=\{\{b\}, \{a,c\}, \{a,d\}\}$. Clearly, both~$\{a\}$ and~$\{b\}$ are winning $1$-committees of~MAV at $(C, V)$. However, $\{a\}$ is the unique winning $1$-committee of~MAV at $(C\cup D, V)$.
\end{example}

The above example also shows that MAV is not immune to CCAC even when $k=1$. From the complexity point of view, we have the following result.

\begin{theorem}
\label{thm-ccac-mav-nph-k-1}
{\probb{CCAC}{MAV}} is {\nph} even when $k=1$ and there are only two registered candidates.
\end{theorem}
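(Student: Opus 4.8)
The plan is to reduce from \prob{RX3C}. Given an instance $(\xs, \xc)$ with $\abs{\xs} = \abs{\xc} = 3\xsize$ (and we may assume $\xsize \geq 2$, since smaller instances are trivial), I would build a \probb{CCAC}{MAV} instance with exactly two registered candidates $C = \{p, q\}$, where $p$ is the unique distinguished candidate ($J = \{p\}$), $k = 1$, and budget $\ell = \xsize$. For every set $\xce \in \xc$ I introduce one unregistered candidate $d(\xce) \in D$. The votes use a \emph{complementary} encoding: for each element $\xse \in \xs$ create a vote $v(\xse) = \{p\} \cup \{d(\xce) : \xse \notin \xce\}$, i.e.\ $v(\xse)$ approves $p$ together with the candidates of all sets that do \emph{not} contain $\xse$; in addition create a single vote $\{q\}$. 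Since votes are restricted to $C \cup D'$, after adding a set $D' \subseteq D$ the vote $v(\xse)$ has length $1 + \abs{\{d(\xce) \in D' : \xse \notin \xce\}} = 1 + \abs{D'} - t(\xse)$, where $t(\xse)$ is the number of added sets containing $\xse$.

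The whole analysis runs through Lemma~\ref{lem-a}, which identifies the MAV single winners with the candidates common to all longest votes (when that intersection is nonempty). For the forward direction, if $\xc' \subseteq \xc$ is an exact cover I add $D' = \{d(\xce) : \xce \in \xc'\}$, so $\abs{D'} = \xsize = \ell$. Then every $\xse$ lies in exactly one added set, so every $v(\xse)$ has length $\xsize$ and misses precisely the candidate of the set covering $\xse$, while $\{q\}$ has length $1 < \xsize$. Hence the longest votes are exactly the $v(\xse)$; their common intersection is $\{p\}$ (each $d(\xce) \in D'$ is excluded by the votes of the three elements of $\xce$, and $q$ is excluded by all of them), so Lemma~\ref{lem-a} makes $p$ the unique winner.

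For the converse, the decisive point — and the reason for the complementary encoding — is that any element $\xse$ left uncovered by $D'$ satisfies $\xse \notin \xce$ for every added $\xce$, so $v(\xse) = \{p\} \cup D'$, which is strictly the longest vote; the votes of all uncovered elements equal $\{p\} \cup D'$, every covered vote is shorter, and $\{q\}$ has length $1$. Thus the common intersection of the longest votes is $\{p\} \cup D' \supsetneq \{p\}$, so $p$ ties with every added candidate and $J = \{p\}$ fails to lie in all winning $1$-committees; when $D' = \emptyset$ the vote $\{q\}$ forces the same conclusion via an empty intersection. Consequently $p$ can be made the unique winner only if $D'$ covers every element, and since $\abs{D'} \leq \xsize$ with each set of size three and $\abs{\xs} = 3\xsize$, full coverage forces $D'$ to be an exact cover. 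This yields both directions, and the construction is plainly polynomial.

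I expect the main obstacle to be exactly the step that forces \emph{exact} coverage rather than mere partial coverage. A naive encoding in which $v(\xse)$ approves the candidates of the sets that \emph{contain} $\xse$ fails: there, any spread-out partial collection of sets already pushes the competing candidates' MAV scores above $p$'s, so $p$ wins without covering all elements. The complementary encoding is what guarantees that each uncovered element manufactures a longest vote equal to $\{p\} \cup D'$, tying $p$ with all added candidates, while the lone vote $\{q\}$ is what stops the empty addition from trivially electing $p$. Reconciling these two safeguards with the longest-vote characterization of Lemma~\ref{lem-a} is the crux of the proof.
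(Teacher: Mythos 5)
Your proposal is correct and follows essentially the same route as the paper's own proof: the identical reduction from {\prob{RX3C}} with registered candidates $\{p,q\}$, the complementary encoding $v(\xse)=\{p\}\cup\{c(\xce) \setmid \xse\notin\xce\}$, the single vote $\{q\}$, budget $\ell=\kappa$, and the winner characterization of Lemma~\ref{lem-a} driving both directions. The only differences are notational, and your case analysis (uncovered element forces the tie $\{p\}\cup D'$; the vote $\{q\}$ blocks the empty addition) matches the paper's argument step for step.
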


\begin{proof}
We prove the theorem via a reduction from {\prob{RX3C}}. Let $(\xs, \xc)$ be an instance of {\prob{RX3C}} where $\abs{\xs}=\abs{\xc}=3\kappa$. Without loss of generality, we assume that $\kappa>1$. We create an instance $((C\cup D, V), k, J, \ell)$ of {\probb{CCAC}{MAV}} as follows. First, we create two candidates denoted by~$p$ and~$q$. Then, for every $\xce\in \xc$, we create one candidate~$c(\xce)$. Let $C=\{p, q\}$ be the set of registered candidates, let $J=\{p\}$, and let $D=\{c(\xce) \setmid \xce\in \xc\}$ be the set of unregistered candidates. Regarding the votes, we first create one vote which approves only~$q$. Then, for every $\xse\in \xs$, we create one vote~$v(\xse)$ which approves all candidates except~$q$ and the three candidates corresponding to~$\xce\in \xc$ containing~$\xse$, i.e., $v(\xse)=\{p\}\cup \{c(\xce) \setmid \xse\not\in \xce, \xce\in \xc\}$. So,~$v(\xse)$ approves exactly $3\kappa-2$ candidates in $C\cup D$. Let~$V$ denote the multiset of all $3\kappa+1$ votes created above. Finally, we set $k=1$ and $\ell=\kappa$. The instance of {\probb{CCAC}{MAV}} clearly can be constructed in polynomial time. In the following, we show the correctness.

$(\Rightarrow)$ Suppose that~$\xc$ contains an exact $3$-set cover~$\xc'$ of~$\xs$. Let $D'=\{c(\xce) \setmid \xce\in \xc'\}$ be the unregistered candidates corresponding to~$\xc'$. Let $E=(C\cup D', V)$. We show below that the {\probb{CCAC}{MAV}} instance constructed above is a {\yesins} by showing that~$p$ is the unique MAV winning $1$-committee of~$E$. As~$\xc'$ is an exact set cover of~$\xs$, for every vote~$v(\xse)$ where $\xse\in \xs$ there is exactly one~$c(\xce)\in D'$ such that~$v(\xse)$ does not approve~$c(\xce)$. Therefore, every vote~$v(\xse)$ where $\xse\in \xs$ approves exactly~$\kappa$ candidates from $C\cup D'$ and, moreover,~$p$ is the only candidate that is approved by all votes corresponding to~$\xs$. Then, by Lemma~\ref{lem-a} and the assumption that $\kappa>1$, we know that~$\{p\}$ is the unique MAV winning $1$-committee of~$E$.

$(\Leftarrow)$ Suppose that there is a $D'\subseteq D$ such that $\abs{D'}\leq \ell=\kappa$ and~$\{p\}$ becomes the unique MAV winning $1$-committee of $E=(C\cup D', V)$. We prove below that $\xc'=\{\xce\in \xc \setmid c(\xce)\in D'\}$ corresponding to~$D'$ is an exact $3$-set cover of~$\xs$. For the sake of contradiction, assume that this is not the case. Let $\xse\in \xs$ be any arbitrary element in~$\xs$ that is not covered by~$\xc'$, i.e., $\xse\not\in \xce$ holds for all $\xce\in \xc'$. Then,~$v(\xse)$ approves all candidates in $C\cup D'$ except only~$q$. As~$q$ is not approved by any vote corresponding to~$\xs$, this implies that~$v(\xse)$ approves the largest number of candidates from $C\cup D'$, and any other vote approving the maximum number of candidates from $C\cup D'$ must approve exactly the same candidates among $C\cup D'$ as~$v(a)$. Then, by Lemma~\ref{lem-a}, every candidate in $C\cup D'$ except~$q$ is an MAV single winner of~$E$. This contradicts that~$\{p\}$ is the unique MAV winning $1$-committee of~$E$.
\end{proof}

One may wonder whether the restriction that the number distinguished candidates equals~$k$ is necessary for ABCCV and PAV to be immune to CCAC. Example~\ref{ex-d} below answers the question in the affirmative by illustrating that for every $k\geq 2$, ABCCV and PAV are susceptible to CCAC when there are at most~$k-1$ distinguished candidates.

\begin{example}[ABCCV and PAV are susceptible to CCAV for $k\geq 2$.]
\label{ex-d}
Let $C=\{a,b,c\}$, $D=\{d\}$, and $J=\{a\}$.
For ABCCV, we have five votes $v_1=\{a\}$, $v_2=v_3=\{b,d\}$, and $v_4=v_5=\{c,d\}$.
%
%
For PAV, we have eight votes $v_1=v_2=\{a\}$, $v_3=v_4=v_5=\{b,d\}$, and $v_6=v_7=v_8=\{c,d\}$.
It is easy to verify that with respect to~$C$ the only ABCCV/PAV winning $2$-committee is $\{b,c\}$. However, if we add the candidate~$d$, $\{a, d\}$ becomes the unique ABCCV/PAV winning $2$-committee.
We can show the susceptibility of ABCCV and PAV to CCAC for every $k\geq 3$ by slight modifying the above elections. We first we create $k-2$ copies of~$a$ in~$C$, and let~$J$ be the set consisting of~$a$ and all the copies of~$a$. Then, in addition to the above votes, for ABCCV, we create $k-2$ new votes each of which approves exactly one copy of~$a$, and each copy of~$a$ is approved by one of these $k-2$ votes, and for PAV, we create new $2(k-2)$ votes so that each of them approves one copy of~$a$, and each copy of~$a$ is approved by two of these votes.
\end{example}

Concerning the complexity, {\probb{CCAC}{ABCCV}} and {\probb{CCAC}{PAV}} are {\conph}. In fact, we can show the {\conphns} even for a class of rules and for the special case where there is only one distinguished candidate and we do not allow to add any unregistered candidate, i.e., $\abs{J}=1$ and $\ell=0$. In this case, the question becomes whether a distinguished candidate~$p$ is included in all winning $k$-committees, which is exactly the {\probb{$p$-CC}{$\varphi$}} problem.

\begin{theorem}
\label{thm-cc-abccv-pav-co-np}
For every $\omega$-Thiele rule~$\varphi$ such that $\omega(2)<2\omega(1)$, {\probb{$p$-CC}{$\varphi$}} is {\conph}, even when every vote approves at most two candidates, and every candidate is approved by  three votes.
\end{theorem}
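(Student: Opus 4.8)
The plan is to first collapse the behaviour of \emph{every} rule in the class to one combinatorial quantity, and then reduce {\prob{Vertex Cover}} on $3$-regular graphs to the complement of {\probb{$p$-CC}{$\varphi$}}. The linchpin is a score identity. Since every vote has size at most two, $\abs{v\cap \w}\in\{0,1,2\}$, so only the values $\omega(0)=0$, $\omega(1)$, $\omega(2)$ ever matter. I would build the election from a $3$-regular graph $G=(N,A)$, $n=\abs{N}$, by turning each edge $\edge{u}{v}$ into a size-two vote $\{u,v\}$ and, for the distinguished candidate $p$, adding three singleton votes $\{p\}$; then every candidate lies in exactly three votes and every vote has size at most two, matching the claimed restrictions. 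For a committee $\w$ with $\w_N=\w\cap N$, writing $e(\w_N)$ for the number of edges inside $\w_N$ and using $\sum_{u\in \w_N}\deg(u)=3\abs{\w_N}=2e(\w_N)+\partial(\w_N)$, the $\varphi$-score of $\w$ equals $3\abs{\w}\,\omega(1)-c\cdot e(\w_N)$, where $c=2\omega(1)-\omega(2)>0$ by hypothesis (the three $\{p\}$-votes add $3\omega(1)$ exactly when $p\in\w$, which is absorbed into the first term). Because $\abs{\w}=k$ is fixed, winning $k$-committees are precisely those minimizing $e(\w_N)$. This identity holds uniformly for all $\omega$ with $\omega(2)<2\omega(1)$, which is exactly what lets a single reduction cover the whole subclass (including ABCCV and PAV).

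Next I would run the reduction. Given a {\prob{Vertex Cover}} instance $(G,\kappa)$ with $G$ $3$-regular, I take $C=N\cup\{p\}$, $J=\{p\}$, and set $k=n-\kappa$. Let $g(r)$ denote the maximum number of edges coverable by $r$ vertices. Leaving $r$ vertices of $N$ out of the committee leaves, in the best case, exactly $\abs{A}-g(r)$ internal edges; hence the minimum of $e(\w_N)$ over committees containing $p$ (which keep $n-\kappa-1$ vertices, leaving out $\kappa+1$) is $\abs{A}-g(\kappa+1)$, and over committees avoiding $p$ (keep $n-\kappa$, leave out $\kappa$) it is $\abs{A}-g(\kappa)$. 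As $g$ is nondecreasing, the global minimum $\abs{A}-g(\kappa+1)$ is always attained by a committee containing $p$, so $p$ lies in all winning committees iff no $p$-avoiding committee attains it, i.e.\ iff $g(\kappa+1)>g(\kappa)$. The decisive monotonicity fact is that $g$ increases strictly up to saturation: if $g(\kappa)<\abs{A}$ there is an uncovered edge, and adding one of its endpoints covers at least one more, so $g(\kappa+1)>g(\kappa)$; if $g(\kappa)=\abs{A}$ then $g(\kappa+1)=g(\kappa)$. Thus $p$ is contained in all winning $k$-committees iff $G$ has \emph{no} vertex cover of $\kappa$ vertices, i.e.\ the constructed {\probb{$p$-CC}{$\varphi$}} instance is a {\yesins} exactly when $(G,\kappa)$ is a {\noins}. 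Since {\prob{Vertex Cover}} on $3$-regular graphs is {\nph}, this yields {\conphns}, and the instance already has $\abs{J}=1$, $\ell=0$, votes of size at most two, and each candidate in exactly three votes.

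The hard part, I expect, is precisely the choice of source problem that makes the unavoidable ``off-by-one'' between committees that spend a slot on $p$ and those that do not into a clean criterion. Reducing from {\prob{Independent Set}} with a small target fails: in sparse graphs a near-optimal set can almost always be extended without creating a conflict, so the two minima coincide, $p$ is never forced, and the minimum-internal-edge objective plateaus. Routing the argument through maximum edge coverage (vertex cover) removes this obstruction, because coverage grows strictly until all edges are covered, delivering the exact equivalence $g(\kappa+1)>g(\kappa)\Leftrightarrow$ no cover of size~$\kappa$. Everything else is routine: the degree/vote-size bookkeeping, the bounds $1\le k\le \abs{C}$ (note that for $3$-regular $G$ any cover has at least $n/2$ vertices, so the relevant $\kappa$ keep $k$ well inside range), and the observation that no tie-breaking assumption is needed since the whole argument concerns the \emph{set} of score-maximizing committees.
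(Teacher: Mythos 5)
Your reduction is correct, and the election you build is in fact \emph{identical} to the one in the paper's proof: the paper also uses one candidate per vertex, one vote $\edge{u}{u'}\mapsto\{u,u'\}$ per edge, and $t$ singleton votes $\{p\}$ (with $t=3$ for the stated restrictions). The only difference is the parameterization of the source problem: the paper reduces from \prob{Independent Set} on regular graphs and sets $k=\kappa$, while you reduce from \prob{Vertex Cover} and set $k=n-\kappa$; since $G$ has an independent set of size $\kappa$ iff it has a vertex cover of size $n-\kappa$, the two reductions coincide under the substitution $\kappa\mapsto n-\kappa$. What genuinely differs is the analysis. The paper argues locally: in a winning $k$-committee avoiding $p$ there are two adjacent members $u,u'$, and swapping one of them for $p$ raises the $\varphi$ score by at least $t\omega(1)-\bigl((t-1)\omega(1)+\omega(2)-\omega(1)\bigr)=2\omega(1)-\omega(2)>0$. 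You argue globally, via the identity that the $\varphi$ score of a $k$-committee $\w$ equals $3k\omega(1)-\bigl(2\omega(1)-\omega(2)\bigr)e(\w\cap N)$, reducing winner determination to minimizing induced edges and then invoking strict monotonicity of maximum edge coverage below saturation. Both are sound; your identity is a pleasant way to make explicit that every rule in the subclass behaves identically on these instances, whereas the paper's exchange argument is shorter.

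One correction to your closing paragraph: the claim that reducing from \prob{Independent Set} ``fails'' is mistaken. With $k=\kappa$, your own framework shows that $p$ lies in all winning $k$-committees iff $\min_{|S|=\kappa-1}e(S)<\min_{|S|=\kappa}e(S)$, and these two minima coincide \emph{exactly} when $\min_{|S|=\kappa}e(S)=0$, i.e., exactly when an independent set of size $\kappa$ exists: whenever $\min_{|S|=\kappa}e(S)\geq 1$, deleting an endpoint of an internal edge from a minimizer strictly decreases the induced-edge count, so the objective cannot plateau at a positive value. This is precisely the equivalence one needs for {\conphns}, it is the route the paper takes, and after complementation it is the same reduction as yours.
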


\begin{proof}
We prove the theorem by a reduction from {\prob{Independent Set}} on regular graphs to {\probb{$p$-CC}{$\varphi$}}, where~$\varphi$ is an $\omega$-Thiele rule such that $\omega(2)<2\omega(1)$.

Let $(G, \kappa)$ be an {\prob{Independent Set}} instance where~$G=(\vset,\eset)$ is a regular graph. Let~$t$ be the degree of the vertices in~$G$.
For each vertex $u\in \vset$, we create one candidate denoted by the same symbol for simplicity. In addition, we create a candidate~$p$. Let $C=\{p\}\cup \vset$, and let $J=\{p\}$. We create the following votes. First, for each edge $\edge{u}{u'}\in \eset$, we create one vote~$v(\edge{u}{u'})$ which approves exactly~$u$ and~$u'$. Besides, we create~$t$ votes each of which approves exactly the distinguished candidate~$p$. Finally, we set $k=\kappa$. The instance of {\probb{$p$-CC}{$\varphi$}} is $((C, V), J, k)$ which can be constructed in polynomial time.
It remains to prove the correctness of the reduction.

$(\Rightarrow)$ If the instance of {\prob{Independent Set}} is a {\yesins}, then it is easy to verify that every $k$-committee corresponding to an independent set of~$\kappa$ vertices is a~$\varphi$ winning $k$-committee with~$\varphi$ score~$\kappa\cdot t\cdot \omega(1)$, implying that the above constructed instance of {\probb{$p$-CC}{$\varphi$}} is a {\noins}.

$(\Leftarrow)$ If~$G$ does not contain any independent set of~$\kappa$ vertices, we claim that the distinguished candidate~$p$ is included in all winning $k$-committees of $(C, V)$ under~$\varphi$. Assume for the sake of contradiction that there is a~$\varphi$ winning~$k$-committee~$C'$ such that $p\not\in C'$.
Clearly, $C'\subseteq\vset$. As~$C'$ is not an independent set, there exist distinct $u, u'\in C'$ which are both approved in the vote~$v(\edge{u}{u'})$. As~$p$ is approved by~$t$ votes not approving any candidates from~$C'$, if we replace (any) one of~$u$ and~$u'$ with~$p$ in~$C'$, the~$\varphi$ score of~$C'$ increases by at least $t\cdot \omega(1)-((t-1)\cdot \omega(1)+(\omega(2)-\omega(1)))>0$, which contradicts that~$C'$ is a~$\varphi$ winning $k$-committee.

As {\prob{Independent Set}} remains {\nph} when restricted to $3$-regular graphs~\cite{DBLP:journals/jct/Mohar01}, the hardness of {\probb{$p$-CC}{$\varphi$}} remains when restricted to the case where every vote approves at most two candidates, and every candidate is approved by three votes.
\end{proof}

As {\prob{Independent Set}} restricted to regular graphs is {\wah} with respect to~$\kappa$~\cite{DBLP:conf/cats/MathiesonS08}, the proof of Theorem~\ref{thm-cc-abccv-pav-co-np} implies the following corollary.

\begin{corollary}
\label{cor-p-cc-thiele-cowah-k}
For every $\omega$-Thiele rule~$\varphi$ such that $\omega(2)<2\omega(1)$, {\probb{$p$-CC}{$\varphi$}} is {\cowah} with respect to~$\kappa$, even when every vote approves at most two candidates.
\end{corollary}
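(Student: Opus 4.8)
The plan is to observe that Corollary~\ref{cor-p-cc-thiele-cowah-k} follows immediately from the reduction already constructed in the proof of Theorem~\ref{thm-cc-abccv-pav-co-np} by examining the parameter it produces, together with a stronger hardness hypothesis for the source problem. The theorem reduces \prob{Independent Set} on regular graphs to \probb{$p$-CC}{$\varphi$}, and the co-hardness direction corresponds to the fact that $G$ has an independent set of $\kappa$ vertices if and only if the constructed instance is a \noins. The key point is that the reduction sets $k=\kappa$, so the solution-size parameter of the \prob{Independent Set} instance is mapped verbatim onto the parameter $\kappa$ (i.e.\ the committee size $k$) of the \probb{$p$-CC}{$\varphi$} instance.

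First I would invoke the fact, stated in the excerpt and attributed to~\cite{DBLP:conf/cats/MathiesonS08}, that \prob{Independent Set} restricted to regular graphs is \wah{} with respect to the solution size~$\kappa$. Since the reduction in Theorem~\ref{thm-cc-abccv-pav-co-np} is a polynomial-time many-one reduction in which the target parameter equals the source parameter (both are~$\kappa$), it is in particular a parameterized reduction from \prob{Independent Set} (parameterized by $\kappa$) to the complement of \probb{$p$-CC}{$\varphi$} (parameterized by $\kappa$). A parameterized reduction from a \wa-hard problem witnesses \wah{}ness of its target; because the reduction sends \yesins{}es of \prob{Independent Set} to \noins{}es of \probb{$p$-CC}{$\varphi$} and vice versa, it is a parameterized reduction to the \emph{co}-problem, which is exactly what \cowah{}ness with respect to~$\kappa$ asserts. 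The restriction to ``every vote approves at most two candidates'' is inherited directly: the votes built in the proof are the edge-votes $v(\edge{u}{u'})$, each approving exactly two candidates, and the $t$ votes approving only~$p$, each approving one candidate, so the bound holds unchanged.

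The only subtlety — and what I expect to be the main (though minor) obstacle — is checking that the map genuinely qualifies as a parameterized reduction rather than merely a polynomial-time reduction. For this I would verify the two standard requirements: the running time is $f(\kappa)\cdot |I|^{O(1)}$ (here it is plainly polynomial, since the construction is polynomial-time and independent of any blow-up in $\kappa$), and the output parameter is bounded by a computable function of the input parameter (here it is literally equal, $k=\kappa$, which trivially satisfies this). Both conditions are immediate from the explicit construction in Theorem~\ref{thm-cc-abccv-pav-co-np}, so no new combinatorial work is required; one simply reads off that the reduction preserves the parameter and reverses the answer. Consequently \probb{$p$-CC}{$\varphi$} is \cowah{} with respect to~$\kappa$ for every $\omega$-Thiele rule with $\omega(2)<2\omega(1)$, even under the stated approval restriction, completing the corollary.
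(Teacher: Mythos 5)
Your proposal is correct and is essentially identical to the paper's own argument: the paper derives the corollary in exactly this way, observing that the reduction of Theorem~\ref{thm-cc-abccv-pav-co-np} runs in polynomial time, sets $k=\kappa$ (so the parameter is preserved verbatim), reverses answers (independent sets of size~$\kappa$ correspond to {\noins}s of {\probb{$p$-CC}{$\varphi$}}), and produces only votes approving at most two candidates, so the {\wahns} of {\prob{Independent Set}} on regular graphs yields {\cowah}ness. Your additional verification of the two requirements of a parameterized reduction is exactly the (implicit) content of the paper's one-sentence justification.
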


For MAV, we have the following result.

\begin{theorem}
\label{thm-cc-mav-nph}
{\probb{$p$-CC}{MAV}} is {\nph}. Moreover, this holds even when every vote approves three candidates and every candidate is approved by at most three votes.
\end{theorem}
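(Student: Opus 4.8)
The plan is to reduce from \prob{RX3C}. Given an instance $(\xs,\xc)$ with $\abs{\xs}=\abs{\xc}=3\kappa$ in which every element occurs in exactly three members of $\xc$ (we may assume $\kappa\ge 2$, since smaller cases are trivial), I would build the following election. For every set $\xce\in\xc$ create a candidate $c(\xce)$; create the distinguished candidate $p$ and set $J=\{p\}$; and create six private dummy candidates $d_1,\dots,d_6$. There are two kinds of votes: for each element $\xse\in\xs$ an \emph{element-vote} $v(\xse)$ approving exactly the three candidates $c(\xce)$ with $\xse\in\xce$; and three \emph{gadget-votes} $\{p,d_1,d_2\}$, $\{p,d_3,d_4\}$, $\{p,d_5,d_6\}$. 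Finally set $k=\kappa+1$. Every vote approves exactly three candidates; each $c(\xce)$ is approved by exactly the three element-votes of the elements of $\xce$; $p$ is approved by three votes and each dummy by one; so the claimed degree bounds hold.

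The crux is that the budget $k=\kappa+1$ must be split between \emph{covering} the $\xs$-elements (which only set-candidates can do) and \emph{covering} the three gadget-votes (which only $p$ or the dummies can do). For a $k$-committee $w$, the Hamming distance to an element-vote $v(\xse)$ is $\kappa+4-2\sigma$, where $\sigma$ is the number of chosen sets containing $\xse$, and the distance to a gadget-vote is $\kappa+4-2\abs{w\cap\{p,d_{2i-1},d_{2i}\}}$. Crucially, since $\abs{w}=\kappa+1$ and all votes have size three, every distance is $\equiv\kappa\pmod 2$, so the achievable MAV scores lie in $\{\kappa+4,\kappa+2,\kappa,\dots\}$ with \emph{no value between $\kappa+2$ and $\kappa+4$}. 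A short counting argument (each chosen set covers three elements, and $3(\kappa+1)<6\kappa$ for $\kappa\ge2$) shows the smallest intersection number is always at most $1$, hence every committee scores at least $\kappa+2$.

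For the forward direction, if $\xc$ has an exact cover $\xc'$ then $w=\{p\}\cup\{c(\xce):\xce\in\xc'\}$ covers every element exactly once and has $p$ in all three gadget-votes, so every intersection number equals $1$ and the score is exactly $\kappa+2$. I then claim every $p$-free committee scores at least $\kappa+4$: keeping all element-votes below $\kappa+4$ forces covering $\xs$, which needs at least $\kappa$ sets, while covering all three gadget-votes without $p$ needs three dummies from distinct gadget-votes; as $\kappa+3>\kappa+1=k$, one family of votes is left at intersection $0$, i.e.\ distance $\kappa+4$ (using the parity gap above). Hence all winning committees contain $p$ and we obtain a \yesins. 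Conversely, if $\xc$ has no exact cover, then $\kappa$ sets never cover $\xs$, so any committee containing $p$ leaves some element uncovered and scores $\kappa+4$; the same value $\kappa+4$ is attained by the $p$-free committee made of any $\kappa+1$ set-candidates (every gadget-vote then has intersection $0$), and a matching lower-bound check shows $\kappa+4$ is optimal. Thus a $p$-free winning committee exists, giving a \noins.

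The main obstacle is exactly this gap-engineering: because MAV minimises a \emph{maximum} of Hamming distances, $p$ can be forced into all winners only by making $p$ the single candidate able to satisfy several dedicated votes that would otherwise each demand a separate committee slot, and by tuning $k=\kappa+1$ so that the one spare slot suffices precisely when $\kappa$ sets already cover $\xs$ (an exact cover) and never otherwise. The delicate verification is that no mixed committee—combining dummies, $p$, and sets in some other proportion—can beat the intended optimum in either case; this is where the parity observation (distances jump by two) together with the counting bound $3(\kappa+1)<6\kappa$ and the per-vote distance formula must be applied carefully to rule out every alternative.
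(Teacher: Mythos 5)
Your proposal is correct and is essentially the paper's own proof: the same reduction from \prob{RX3C} with one candidate per $3$-set, a distinguished candidate~$p$, dummy-padded gadget votes $\{p,d_{2i-1},d_{2i}\}$ that only~$p$ can satisfy cheaply, element-votes approving the three covering set-candidates, and budget $k=\kappa+1$, so that score $\kappa+2$ is attainable exactly when an exact cover exists and score $\kappa+4$ is otherwise shared by $p$-free committees. The only differences are cosmetic: you use three gadget votes (six dummies) where the paper uses two (four dummies), and you argue optimality via the parity/counting bound where the paper does a short case analysis over committees missing~$p$.
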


\begin{proof}
We prove Theorem~\ref{thm-cc-mav-nph} by a reduction from {\prob{RX3C}} to {\probb{$p$-CC}{MAV}}.
Let $(\xs, \xc)$ be an {\prob{RX3C}} instance such that $\abs{\xs}=\abs{\xc}=3\xsize$ for some positive integer~$\xsize$.
For each $\xce\in \xc$, we create one candidate~$c(\xce)$. In addition, we create five candidates~$p$,~$d_1$,~$d_2$,~$d_3$, and~$d_4$.
Let $C=\{p, d_1, d_2, d_3, d_4\}\cup \{c(\xce) \setmid \xce\in \xc\}$, and let $J=\{p\}$. We create the following votes. First, for each $\xse\in \xs$, we create one vote~$v(\xse)$ which approves exactly the three candidates~$c(\xce)$ such that $\xse\in \xce\in \xc$. In addition, we create two votes $v_1=\{p, d_1,d_2\}$ and $v_2=\{p, d_3, d_4\}$. Note that every vote approves exactly three candidates. Finally, we set $k=\xsize+1$. The instance of {\probb{$p$-CC}{MAV}} is $((C, V), J, k)$ which can be constructed in polynomial time.
It remains to prove the correctness of the reduction.

$(\Rightarrow)$ Suppose that there is an exact $3$-set cover $\xc'\subsetneq \xc$ of~$\xs$. Let $\w=\{c(\xce) \setmid \xce\in \xc'\}$ be the subset of candidates corresponding to~$\xc'$. It is not difficult to see that $\w\cup \{p\}$ is a winning $k$-committee with MAV score $\kappa+2$, i.e.,~$\w$ contains at least one approved candidate of every vote. For the sake of contradiction, assume that there is another winning $k$-committee~$\w'$ such that $p\not\in \w'$. First,~$w'$ must contain at least~$\xsize$ candidates corresponding to~$\xc$, since otherwise there must be at least one vote~$v(\xse)$, $\xse\in \xs$,  such that none of its approved candidates~$c(\xce)$ where $\xse\in \xce\in \xc$ is included in~$\w'$, implying that the MAV score of~$\w'$ is at least~$\kappa+4$, a contradiction. Second, if~$\w'$ contains $k=\xsize+1$ candidates corresponding to~$\xc$, then the Hamming distances from~$\w'$ to~$v_1$ and~$v_2$ are both $\kappa+4$, a contradiction too. Therefore,~$\w'$ contains exactly~$\xsize$ candidates corresponding to~$\xc$ and contains exactly one candidate from $\{d_1, d_2, d_3, d_4\}$. However, if $w'\cap \{d_1, d_2\}\neq\emptyset$, the Hamming distance from~$\w'$ to~$v_2$ is  at least~$\kappa+4$, and if $\w'\cap \{d_3, d_4\}\neq \emptyset$, the Hamming distance from~$\w'$ to~$v_1$ is at least $\kappa+4$, both contradicting that~$w'$ is a winning $k$-committee. So, we can conclude that such a winning $k$-committee~$\w'$ does not exist.

$(\Leftarrow)$ Assume that~$\xc$ does not contain any exact $3$-set cover of~$\xs$. Suppose that there is a winning $k$-committee~$\w$ which contains~$p$. Then, there exists at least one vote~$v(\xse)$ where $\xse\in \xs$ such that none of its three approved candidates~$c(\xce)$ where $\xse\in \xce\in \xc$ is in~$\w$. Hence, the MAV score of~$\w$ is exactly~$\kappa+4$. In this case, by replacing~$p$ with some candidate approved by~$v(\xse)$, we obtain another winning $k$-committee, implying that the instance of {\probb{$p$-CC}{MAV}} is a {\noins}.
\end{proof}

We mention in passing that Aziz~et~al.~\cite{DBLP:conf/atal/AzizGGMMW15} studied a problem named {\prob{$R$-TestWS}} which determines if a given $k$-committee is a winning committee of a given election under a multiwinner voting rule. This problem has a flavor of {\prob{$p$}-CC} in the sense that both problems aim to test the winning status of some particular candidates. Aziz~et~al.~\cite{DBLP:conf/atal/AzizGGMMW15} showed {\conphns} of {\prob{$R$-TestWS}} for PAV by a reduction from {\prob{Independent Set}}, but did not study ABCCV and MAV.






We would also like to point out that the {\nphns} and {\conphns} of {\prob{CCAV}} and {\prob{CCDV}} for ABCCV and PAV suggest that when~$k$ is unbounded, {\prob{CCAV}} and {\prob{CCDV}} for ABCCV and for PAV may belong to a much harder class of problems. 
We leave this as an open question for future research.

Let us move on to {\prob{CCDC}}. Unlike the immunity of AV to CCAC, it is easy to see that AV is susceptible to CCDC. Concerning the complexity, it has been shown by Meir~et~al.~\cite{DBLP:journals/jair/MeirPRZ08} that {\probb{CCDC}{AV}} is polynomial-time solvable. 
%
However, for SAV and NSAV, the complexity of {\prob{CCDC}} is the same as {\prob{CCAC}}.

\begin{theorem}
\label{thm-ccdc-sav-np-hard}
{\probb{CCDC}{SAV}} and {\probb{CCDC}{NSAV}} are {\nph} even when~$k=1$. 
\end{theorem}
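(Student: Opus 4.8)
The plan is to reduce from \prob{RX3C}, giving the construction for SAV first and then transferring it to NSAV via Lemma~\ref{lem-relation-sav-nsav}. Throughout, $k=1$, so a winning $1$-committee is just a highest-scoring candidate, and I want to make the single distinguished candidate $p$ the unique SAV winner precisely when the instance admits an exact cover. The lever I would exploit is specific to deleting \emph{candidates} under SAV: removing a candidate $c$ shrinks every vote that approved $c$, so the factor $1/\abs{v}$ grows and the SAV scores of the \emph{surviving} co-approved candidates go \emph{up}. This is the opposite of the mechanism used for \prob{CCDV} in the previous proof, where deleting \emph{voters} drives competitor scores \emph{down}; here deletions can only raise surviving scores, which is exactly what makes the construction delicate.

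Concretely, from an instance $(\xs,\xc)$ with $\abs{\xs}=\abs{\xc}=3\xsize$ I would introduce one ``set candidate'' $c(\xce)$ for every $\xce\in\xc$, the distinguished candidate $p$, and a few auxiliary candidates together with votes that fix $p$'s baseline score. For each element $\xse\in\xs$ I create a vote approving exactly the three set candidates $c(\xce)$ with $\xse\in\xce$ (and, in some votes, $p$), so that deleting a set candidate covering $\xse$ shrinks that vote and nudges the scores that decide whether $p$ wins. The budget is $\ell=\xsize$. For the forward direction, given an exact cover $\xc'\subseteq\xc$ I delete the $\xsize$ set candidates $\{c(\xce)\setmid\xce\in\xc'\}$; because $\xc'$ covers every element exactly once, every element-vote shrinks by exactly one candidate, and the resulting uniform shift makes $p$ the unique top scorer. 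I would calibrate the baseline so that without any deletion $p$ is \emph{not} a unique winner, forcing the controller to act.

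The converse, and the main obstacle, is to guarantee that \emph{only} an exact cover works. The key combinatorial fact is that covering all $3\xsize$ elements with at most $\xsize$ sets of size three forces a partition, i.e.\ an exact cover; so it suffices to argue that $p$ becomes the unique winner if and only if the deleted set candidates cover every element exactly once. This needs two things. First, \emph{under}-coverage must fail: if some element is left uncovered, its element-vote keeps its original size and the associated candidates retain a score at least matching $p$'s, blocking uniqueness. Second, and harder, \emph{over}-coverage must fail: since deletions only raise surviving scores, deleting two of an element's three sets shrinks its vote more and risks pushing $p$ (or a surviving set candidate) too high. The crux is to choose the vote sizes and the baseline so that the convex gain from over-covering one element is always paid for by some surviving candidate overtaking $p$, while the uniform one-per-element coverage is the unique configuration in which $p$ alone reaches the top. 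Pinning down these thresholds with exact SAV fractions is the technically delicate part; I expect to rely, as in the \prob{CCDV} and \prob{CCAC} reductions, on the fact that \prob{RX3C} leaves no slack once the budget equals $\xsize$.

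Finally, to obtain the result for NSAV I would reuse the SAV instance verbatim and adjoin $n\cdot m^2$ dummy candidates approved by no vote, where $m$ and $n$ are the numbers of candidates and votes in the SAV instance. By Lemma~\ref{lem-relation-sav-nsav}, for every candidate set that survives a deletion the strict SAV order on the original candidates coincides with the strict NSAV order, so $p$ is the unique SAV winner of the reduced election if and only if it is the unique NSAV winner; the reduced \prob{CCDC} instance is therefore a \yesins{} for NSAV exactly when the SAV instance is. This mirrors the SAV-to-NSAV transfer already carried out for \prob{CCAV}/\prob{CCDV} and \prob{CCAC}.
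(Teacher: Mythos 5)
Your high-level plan matches the paper's strategy (reduce from \prob{RX3C} with budget $\ell=\xsize$, exploit the fact that deleting a candidate shrinks the votes approving it and thereby raises surviving co-approved candidates' scores, then transfer to NSAV via Lemma~\ref{lem-relation-sav-nsav}), but the proposal has a genuine gap: the gadget that makes both directions work is never constructed, and the blocking mechanism you sketch for the converse does not function in the design you describe. Your election contains only $p$, the set candidates $c(\xce)$, and unspecified auxiliaries, so the only possible ``blockers'' are the set candidates themselves --- yet these are exactly the candidates being deleted, and, worse, your claim that an uncovered element's vote leaves ``the associated candidates [with] a score at least matching $p$'s'' is backwards under SAV: an \emph{unshrunken} vote of size three contributes only $\frac{1}{3}$ to each of its surviving candidates, i.e.\ less than the $\frac{1}{2}$ it would contribute after a deletion, so under-coverage makes those candidates \emph{weaker}, not stronger. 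The paper resolves this by introducing a second layer of candidates you omit: one element candidate $c(\xse)$ per $\xse\in\xs$, calibrated (via four vote classes with multiplicities $6$, $12\xsize$, $8\xsize-2$, and $6(3\xsize+1)$) so that initially $c(\xse)$ beats $p$ ($11\xsize+4$ versus $9\xsize+6$), deleting one covering set candidate raises $c(\xse)$ by exactly $\xsize$ (to $12\xsize+4$, just below $p$'s boosted score $12\xsize+6$), while deleting two raises it by $3\xsize$ (to $14\xsize+4$, above it). With that calibration, under-coverage by a deletion of exactly $\xsize$ set candidates is killed \emph{indirectly}: by counting, $\xsize$ three-sets that are not an exact cover must cover some element twice, and it is that \emph{over-covered} element candidate that overtakes $p$; deletions using fewer than $\xsize$ set candidates, or touching element candidates, require a separate three-case score analysis. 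None of this --- the blockers, the multiplicities, or the case analysis --- can be recovered from the thresholds you leave unspecified, and your own text concedes the crux (``pinning down these thresholds \dots is the technically delicate part'') is open.

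Two smaller points. First, your convexity worry is real but incomplete: if $p$ sits inside the element votes, then $p$'s own gain is also convex in the per-element coverage, so concentrated (overlapping) deletions may boost $p$ \emph{more} than an exact cover does, which threatens the calibration in both directions; the paper avoids this by keeping $p$ out of the element votes entirely and feeding $p$'s gain through dedicated two-candidate votes $\{p, c(\xce)\}$, making $p$'s boost exactly linear ($3$ per deleted set candidate). Second, in the NSAV transfer you adjoin exactly $n\cdot m^2$ dummy candidates, but a solution to \probb{CCDC}{NSAV} may spend its budget deleting dummies, after which fewer than $n\cdot m^2$ remain and Lemma~\ref{lem-relation-sav-nsav} no longer applies; the paper adds $n\cdot m^2+\xsize$ dummies precisely so that the lemma's hypothesis survives any deletion of at most $\xsize$ candidates.
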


\begin{proof}
We prove the theorem by reductions from {\prob{RX3C}}. Let~$(\xs, \xc)$ be an {\prob{RX3C}} instance where $\abs{\xs}=\abs{\xc}=3\xsize$. Without loss of generality, we assume $\kappa\geq 3$.
We consider first SAV.

\begin{itemize}
\item {\probb{CCDC}{SAV}}
\end{itemize}
We create an instance $((C, V), k, J, \ell)$ of {\probb{CCDC}{SAV}} as follows.
We create in total~$6\xsize+1$ candidates. In particular, for each~$\xse\in \xs$, we create one candidate~$c(\xse)$. For each~$\xce\in \xc$, we create one candidate~$c(\xce)$. For a given $\xs'\subseteq \xs$ (respectively, $\xc'\subseteq \xc$), let $\candc{\xs'}=\{c(\xse) \setmid \xse\in \xs'\}$ (respectively, $\candc{\xc'}=\{c(\xce) \setmid \xce\in \xc'\}$) be the set of candidates corresponding to~$\xs'$ (respectively, $\xc'$). In addition, we create one candidate~$p$.
Let $C=\candc{\xs} \cup \candc{\xc} \cup \{p\}$ and let~$J=\{p\}$. We create the following votes:
\begin{itemize}
    \item First, for each~$\xce\in \xc$, we create six votes~$v(\xce, 1)$,~$v(\xce, 2)$, $\dots$, $v(\xce, 6)$ each of which approves exactly~$p$ and~$c(\xce)$.
    \item Second, for each $\xse\in \xs$, we create~$12\xsize$ votes $v(\xse,1),\dots,v(\xse,12\xsize)$ each of which approves exactly~$c(\xse)$ and every~$c(\xce)$ such that $\xse\in \xce\in \xc$.
    \item Additionally,  for each $\xse\in \xs$, we create $8\xsize-2$ votes each of which approves exactly~$c(\xse)$.
    \item Finally, we create $6 (3\xsize+1)$ votes each of which approves exactly~$p$ and all the~$3\xsize$ candidates corresponding to~$\xs$. These votes together give to~$p$ and every~$c(\xse)$ where $\xse\in \xs$ six points.
\end{itemize}
Let~$V$ be the multiset of the above created votes. It is clear that $\abs{V}=60\kappa^2+30\kappa+6$. We complete the construction by setting~$k=1$ and~$\ell=\xsize$. Obviously, we can construct the above instance in polynomial time. The SAV scores of the candidates are summarized in Table~\ref{tab-ccdc-sav-nph}.
\begin{table}
\centering{
\caption{A summary of the SAV scores of candidates in the election constructed in the proof of Theorem~\ref{thm-ccdc-sav-np-hard}.}
\label{tab-ccdc-sav-nph}
\begin{tabular}{ll}\\ \toprule
candidates & SAV scores \\ \midrule

$p$ & $18\xsize \cdot \frac{1}{2}+6=9\xsize+6$ \\[2mm]

$c(\xse)$ & $12\xsize\cdot\frac{1}{4}+8\xsize-2+6=11\xsize+4$ \\[2mm]

$c(\xce)$ & $3+12\xsize\cdot \frac{3}{4}=3+9\xsize$ \\ \bottomrule
\end{tabular}
}
\end{table}
We prove the correctness of the reduction as follows.

$(\Rightarrow)$ Assume that there exists an exact $3$-set cover $\xc'\subseteq \xc$ of~$\xs$. We show that after removing the candidates corresponding to~$\xc'$, the distinguished candidate~$p$ becomes the unique winner. Let $E=(C\setminus \candc{\xc'}, V)$. First, after removing a candidate~$c(\xce)$ where $\xce\in \xc$, the SAV score of~$p$ given by~$v(\xce, i)$, $i\in [6]$, increases from~$1/2$ to~$1$. As~$\candc{\xc'}$ contains exactly~$\xsize$ candidates, the removal of these candidates leading to~$p$ having an SAV score $9\xsize+6+6\xsize\cdot \frac{1}{2}=12\xsize+6$ in~$E$. In addition, removing one candidate~$c(\xce)$ where~$\xce\in \xc$ increases the score of each~$c(\xse)$ such that~$\xse\in \xce$ by $12\xsize\cdot (1/3-1/4)=\xsize$, because after removing~$c(\xce)$ the vote~$v(\xse)$ approves three candidates (it approves four candidates in advance). As~$\xc'$ is an exact $3$-set cover, the SAV score of each~$c(\xse)$ where~$\xse\in \xs$ increases to $11\xsize+4+\xsize=12\xsize+4$ in~$E$. Analogously, we can show that the SAV score of every~$c(\xce)$ where $\xce\in \xc\setminus \xc'$ is  $3+9\xsize+\frac{1}{12}\cdot 3\cdot 12\xsize=12\xsize+3$ in~$E$. Therefore,~$p$ becomes the SAV unique-winner of~$E$.

$(\Leftarrow)$ Assume that there exists~$C'\subseteq C\setminus \{p\}$ of at most $\ell=\xsize$ candidates so that~$p$ becomes the unique winner of $E=(C\setminus C', V)$. Observe first that~$C'$ contains at least one candidate from~$\candc{\xc}$. The reason is that if this is not the case, the SAV score of~$p$ in~$E$ can be at most $9\kappa+\frac{6(3\kappa+1)}{2\kappa+1}< 9\kappa+9$, but there exists~$c(a)\in \candc{\xs}\setminus C'$ of SAV score at least $11\kappa+4$ which is larger than $9\kappa+9$ given $\kappa\geq 3$. However, this contradicts that~$p$ uniquely wins~$E$.
Then, to complete the proof, we claim that~$C'$ does not contain any candidate from~$\candc{\xs}$ and, moreover,~$C'$ contains exactly~$\xsize$ candidates.
The claim holds, since otherwise~$C'$ contains at most $\xsize-1$ candidates from~$\candc{\xc}$ which leads to some~$c(\xse)$ where $\xse\in \xs$ having at least the same SAV score as~$p$, and hence contradicts that~$p$ uniquely wins~$E$. To verify this, we distinguish the following cases. Let $C'\cap \candc{\xc}=\candc{\xc'}$ where $\xc'\subseteq \xc$ and let $C'\cap \candc{\xs}=\candc{\xs'}$ where $\xs'\subseteq \xs$. In other words,~$\xc'$ (respectively,~$\xs'$) is the submultiset (respectively, subset) of~$\xc$ (respectively,~$\xs$) corresponding to candidates contained in~$C'$ from~$\xc$ (respectively,~$\xs$).
\begin{description}
    \item[Case~1:] $\abs{\xc'}=\xsize-1$.

    In this case, the SAV score of~$p$ in~$E$ can be at most $9\xsize+6(\xsize-1)\cdot \frac{1}{2}+\frac{6(3\kappa+1)}{3\kappa}< 12\xsize+4$. As $\kappa\geq 3$,~$C'$ contains at least one candidate $c(\xce)\in \candc{\xc}$ where $\xce\in \xc'$. Then, by the construction of the votes, there exists at least one candidate~$c(\xse)\in \candc{\xs} \setminus C'$ such that $\xse\in \xce$ whose SAV score in~$E$ is at least $11\xsize+4+12\kappa (\frac{1}{3}-\frac{1}{4})=12\xsize+4$. However, this contradicts that~$p$ uniquely wins~$E$.

    \item[Case~2:] $\abs{\xc'}\leq \xsize-2$ and $\abs{\xc'}\geq \frac{2\kappa}{3}$.

    In this case, the SAV score of~$p$ in~$E$ can be at most $9\xsize+6(\xsize-2)\cdot \frac{1}{2}+\frac{6(3\kappa+1)}{2\kappa+2}< 12\xsize+3$. As $\abs{\xc'}\geq \frac{2\kappa}{3}$ and $\abs{C'}\leq \kappa$, we know that~$\abs{\xs'}\leq \frac{\kappa}{3}$. As each $\xse\in \xs$ is contained in at most three elements of~$\xc'$ and every element of~$\xc'$ is a $3$-subset, we know that~$\xc'$ covers at least $\frac{2\kappa}{3}$ elements of~$\xs$. This implies that there exists at least one candidate~$c(\xse)\in \candc{\xs}\setminus C'$ such that $\xse\in \xs\setminus \xs'$ and~$\xse$ is contained in at least one~$\xce\in \xc'$.
    By the construction of the votes, the deletion of~$c(\xce)$ increases the SAV score of~$c(\xse)$ by $12\kappa (\frac{1}{3}-\frac{1}{4})=\kappa$. It follows that the SAV score of the candidate~$c(\xse)$ in~$E$ is at least $11\kappa+4+\kappa=12\kappa+4$. However, this contradicts that~$p$ wins~$E$.

    \item[Case~3:] $\abs{\xc'}\leq \frac{2\kappa}{3}-1$.

    In this case,  the SAV score of~$p$ minors that of any candidate $c(\xse)\in \candc{\xs}\setminus C'$ where $\xse\in \xs\setminus \xs'$ in~$E$ is at most $9\xsize+6\cdot (\frac{2\kappa}{3}-1)\cdot \frac{1}{2}-(\frac{12\kappa}{4}+8\kappa-2)=-1$, which contradicts that~$p$ wins~$E$.
\end{description}
By the claim, we assume that~$C'$ consists of exactly~$\xsize$ candidates from~$\candc{\xc}$, i.e., $\abs{\xc'}=\kappa$ and $\xs'=\emptyset$.
Similar to the analysis, we know that the SAV score of~$p$ in~$E$ is $9\xsize+6+3\kappa=12\kappa+6$. This implies that for every $\xse\in \xs$, $\candc{\xc'}$ contains at most one candidate~$c(\xce)$ such that $\xse\in \xce\in \xc'$, since otherwise the SAV score of~$c(\xse)$ in~$E$ is at least $11\xsize+4+12\kappa \cdot (\frac{1}{2}-\frac{1}{4})=14\xsize+4$ which contradicts that~$p$ wins~$E$. It directly follows that~$\xc'$ is set packing, i.e., none of two elements from~$\xc'$ intersect. Then, from the facts that $\abs{\xc'}=\kappa$, every $\xce\in\xc'$ is a $3$-set, and $\abs{\xs}=3\kappa$, it follows that~$\xc'$ covers~$\xs$. Thus, the instance of {\prob{RX3C}} is a {\yesins}.

\begin{itemize}
\item {\probb{CCDC}{NSAV}}
\end{itemize}

Our reduction for {\probb{CCDC}{NSAV}} is obtained from the above reduction for {\probb{CCDC}{SAV}} by adding $n\cdot m^2+\kappa$ new candidates not approved by any vote. Here,~$m$ and~$n$ are respectively the number of candidates and the number of votes created in the instance of {\probb{CCDC}{SAV}}. Then, by Lemma~\ref{lem-relation-sav-nsav}, we know that for any $C'\subseteq C$ such that $\abs{C'}\leq \kappa$, a candidate has the unique highest SAV score in $(C\setminus C', V)$ if and only if it has the unique NSAV score in $(C\setminus C', V)$. This implies that the constructed instance of {\probb{CCDC}{SAV}} is a {\yesins} if and only if the instance of {\probb{CCDC}{NSAV}} is a {\yesins}.
\end{proof}

Now we consider the three nonadditive rules ABCCV, PAV, and MAV. Theorem~\ref{thm-cc-abccv-pav-co-np} already  implies the {\conphns} of {\prob{CCDC}} for ABCCV and PAV.
When $k=1$, the {\nphns} for ABCCV and PAV vanishes. In fact, in this special case, {\probb{CCDC}{ABCCV}} and {\probb{CCDC}{PAV}} are polynomial-time solvable because CCDC for single-winner AV is polynomial-time solvable and ABCCV, PAV, and AV are identical when functioned as single winner voting rules. However, for MAV, we can establish an {\nphns} reduction inspired by Lemma~\ref{lem-a}.

\begin{theorem}
\label{thm-ccdc-mav-nph-k-1}
{\probb{CCDC}{MAV}} is {\nph} even when $k=1$, every vote approves three candidates, and every candidate is approved by at most three votes.
\end{theorem}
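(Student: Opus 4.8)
The plan is to reduce from \prob{RX3C}, reusing in spirit the gadget from the proof of Theorem~\ref{thm-cc-mav-nph} but adapting it to deletion of candidates and to $k=1$. Given an instance $(\xs,\xc)$ with $\abs{\xs}=\abs{\xc}=3\xsize$, I would introduce one candidate $c(\xce)$ for each $\xce\in\xc$, the distinguished candidate $p$, and four dummy candidates $d_1,d_2,d_3,d_4$; set $C=\{p,d_1,d_2,d_3,d_4\}\cup\{c(\xce)\setmid \xce\in\xc\}$, $J=\{p\}$, $k=1$, and $\ell=\xsize$. For each $\xse\in\xs$ I create a vote $v(\xse)$ approving exactly the three candidates $c(\xce)$ with $\xse\in\xce\in\xc$, and I add the two votes $v_1=\{p,d_1,d_2\}$ and $v_2=\{p,d_3,d_4\}$. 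Then every vote approves exactly three candidates, every $c(\xce)$ is approved by exactly the three votes $v(\xse)$ with $\xse\in\xce$, $p$ is approved by two votes, and each dummy by one, so the claimed degree restrictions hold.

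The entire argument runs through Lemma~\ref{lem-a}: for $k=1$ the unique MAV winner is exactly the single element of the common intersection of the maximum-size votes, when that intersection is a singleton. For the forward direction, given an exact cover $\xc'$ I delete $C'=\{c(\xce)\setmid \xce\in\xc'\}$, so $\abs{C'}=\xsize=\ell$. Because the cover is exact, each $v(\xse)$ loses exactly one approved candidate and shrinks to size two, while $v_1$ and $v_2$ are untouched and keep size three; hence the maximum-size votes are exactly $v_1,v_2$, whose intersection is $\{p\}$, and Lemma~\ref{lem-a} makes $\{p\}$ the unique winning $1$-committee.

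For the converse I would show that any successful deletion $C'$ with $\abs{C'}\le\xsize$ encodes an exact cover. By Lemma~\ref{lem-a} the common intersection of the surviving maximum-size votes must be exactly $\{p\}$; since only $v_1$ and $v_2$ approve $p$, every maximum-size vote lies in $\{v_1,v_2\}$, and because $v_1\cap v_2=\{p\}$ whereas each of $v_1,v_2$ individually still contains dummies, \emph{both} $v_1$ and $v_2$ must survive at full size three (if only one survived, the intersection would be that whole vote and would contain a dummy). Consequently no dummy is deleted, the maximum size equals three, and all $\xsize$ deletions fall on $c(\xce)$-candidates; moreover no $v(\xse)$ may keep size three, so the deleted candidates must hit all $3\xsize$ votes $v(\xse)$. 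Each deleted $c(\xce)$ hits exactly the three votes $v(\xse)$ with $\xse\in\xce$, so $\xsize$ deletions hit at most $3\xsize$ votes, with equality only when the chosen sets are pairwise disjoint, i.e.\ when $\{\xce\setmid c(\xce)\in C'\}$ is an exact cover.

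The step I expect to be the crux is ruling out, in the converse, the ``cheaper'' strategy of driving the maximum vote size \emph{down} instead of trimming the $v(\xse)$ one by one: a short counting check is needed to confirm that forcing the maximum size to two would require reducing every $v(\xse)$ to size at most one (at least $2\xsize$ deletions of $c(\xce)$'s, since each such deletion benefits only three votes) together with a dummy deletion from each of $v_1,v_2$, which exceeds the budget $\xsize$. Making this budget inequality precise, in tandem with the observation that both $p$-approving votes must survive intact, is where the correctness really rests; the degree bookkeeping is routine.
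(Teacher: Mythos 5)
Your proposal is correct and matches the paper's proof essentially verbatim: the same RX3C reduction (same candidates, dummies, votes, and budget $\ell=\xsize$), the same forward direction, and the same reliance on Lemma~\ref{lem-a}. The only difference is that the paper's converse is more direct and skips the case analysis you flag as the crux: since any surviving full-size vote $v(\xse)$ would itself be a maximum-size vote avoiding $p$ (immediately contradicting, via Lemma~\ref{lem-a}, that $\{p\}$ is the unique winning $1$-committee), every $v(\xse)$ must lose an approved candidate in \emph{every} successful deletion scenario, and then covering $3\xsize$ elements by at most $\xsize$ $3$-sets already forces an exact cover, so the separate budget count ruling out maximum vote size two is subsumed and unnecessary.
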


\begin{proof}
We prove the theorem by a reduction from {\prob{RX3C}}. Let $(\xs, \xc)$ be an instance of {\prob{RX3C}} where $\abs{\xs}=\abs{\xc}=3\kappa>0$. We create an instance $((C, V), k, J, \ell)$ of {\probb{CCDC}{MAV}} as follows. First, we create five candidates~$p$,~$d_1$,~$d_2$,~$d_3$, and~$d_4$. Then, for every $\xce\in \xc$, we create one candidate~$c(\xce)$. Let $C=\{p, d_1, d_2, d_3, d_4\}\cup \{c(\xce) \setmid \xce\in \xc\}$. Let $J=\{p\}$, $k=1$, and $\ell=\kappa$. We create the following votes. First, we create two votes $v_1=\{p, d_1, d_2\}$ and $v_2=\{p, d_3, d_4\}$. Then, for every $\xse\in \xs$, we create one vote
$v(\xse)=\{c(\xce) \setmid \xse\in \xce\in \xc\}$ which approves exactly the three candidates corresponding to the $3$-subsets in~$\xc$ containing~$\xse$. Let~$V$ be the set of the above $3\kappa+2$ votes. By Lemma~\ref{lem-a}, all candidates are tied as MAV single winners. The above instance of {\probb{CCDC}{MAV}} clearly can be constructed in polynomial time. It remains to show the correctness of the reduction.

$(\Rightarrow)$ Assume that there is an exact $3$-set cover $\xc'\subseteq \xc$ of~$\xs$. Let $C'=\{c(\xce) \setmid \xce\in \xc'\}$ be the set of the~$\kappa$ candidates corresponding to~$\xc'$. Let $E=(C\setminus C', V)$. As~$\xc'$ is an exact set cover of~$\xs$, every vote~$v(\xse)$ approves exactly two candidates in~$C\setminus C'$. Then,~$v_1$ and~$v_2$ become the only two votes approving the maximum number of candidates in~$E$. As~$p$ is the only candidate approved by both~$v_1$ and~$v_2$, due to Lemma~\ref{lem-a}, $\{p\}$ is the unique MAV winning $1$-committee of~$E$.

$(\Leftarrow)$ Assume that there exists $C'\subseteq C$ of at most $\ell=k$ candidates so that $\{p\}$ is the unique MAV winning $1$-committee of $E=(C\setminus C', V)$. Let $\xc'=\{\xce\in \xc \setmid c(\xce)\in C'\}$. By Lemma~\ref{lem-a}, this means that for every vote $v(\xse)$ where $\xse\in \xs$, at least one of the candidates approved in~$v(\xse)$ is from~$C'$. By the definition of~$v(\xse)$, this means that~$\xc'$ contains at least one~$\xce$ such that $\xse\in \xce$. As this holds for all $\xse\in \xs$, we conclude that~$\xc'$ covers~$\xs$. As $\abs{\xc'}\leq \abs{C'}=\kappa$, we know that~$\xc'$ is an exact set cover of~$\xs$.
\end{proof}

However, when~$k$ increases to two, the complexity of {\prob{CCDC}} for ABCCV and PAV radically changes, as implied by the following theorem. 

\begin{theorem}
\label{thm-ccdc-abccv-pav-nph-k-2}
Let~$\varphi$ be an $\omega$-Thiele rule such that $\omega(2)<2\omega(1)$. Then, {\probb{CCDC}{$\varphi$}} is {\nph} even when $k=2$, there is only one distinguished candidate, and every vote approves at most two candidates.
\end{theorem}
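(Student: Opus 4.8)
The plan is to exploit the very restricted structure of $k=2$ together with votes of size at most two, which turns the $\omega$-Thiele score of a pair into a simple graph quantity. First I would record that for a pair $\{a,b\}$ and a vote $v$ with $\abs{v}\le 2$ the term $\omega(\abs{v\cap\{a,b\}})$ equals $\omega(1)$ if $v$ approves exactly one of $a,b$ and $\omega(2)$ if $v=\{a,b\}$. Writing $d(c)$ for the number of votes approving $c$ and $e(a,b)$ for the number of votes equal to $\{a,b\}$, this yields
\[
\score{V}{\{a,b\}}{(C,V)}{\varphi}=\omega(1)\bigl(d(a)+d(b)\bigr)-\delta\, e(a,b),\qquad \delta:=2\omega(1)-\omega(2)>0 .
\]
The crucial point is \emph{locality}: deleting a candidate $c\notin\{a,b\}$ replaces each vote $v$ by $v\setminus\{c\}$, which leaves $\abs{v\cap\{a,b\}}$ unchanged, so the score of any surviving pair is unaffected by deletions. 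Hence, after deleting $C'\subseteq C\setminus\{p\}$, a pair wins iff it attains the deletion-invariant maximum score among surviving pairs, and ``$p$ lies in all winning $2$-committees'' becomes ``every surviving pair of maximum score contains $p$.'' Since $\delta>0$, a joint vote on $\{a,b\}$ strictly lowers that pair's score, so among equal-degree candidates the dangerous (highest-scoring) non-$p$ pairs are precisely the \emph{non-adjacent} ones; this sign is what dictates a reduction from {\prob{Clique}} rather than from a covering problem.

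Second I would give a reduction from {\prob{Clique}} on $t$-regular graphs, which is {\nph} by the results cited in the excerpt. Given $(G,\kappa)$ with $G=(N,A)$ and $\abs{N}=n$, create one candidate per vertex together with the distinguished candidate $p$, and set $J=\{p\}$, $k=2$, and $\ell=n-\kappa$ (we may assume $2\le\kappa\le n$). For every edge $\edge{u}{u'}\in A$ add a size-$2$ vote $\{u,u'\}$, and add $t$ size-$1$ votes $\{p\}$. By regularity every vertex candidate then has degree $t$, while $d(p)=t$ and $e(p,u)=0$ for all $u$. Plugging into the formula, a $p$-pair $\{p,u\}$ and a non-adjacent vertex pair both score $2t\omega(1)$, whereas an adjacent vertex pair scores $2t\omega(1)-\delta<2t\omega(1)$. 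All votes have size at most two, so the claimed restriction holds, and the global maximum pair score is $s^{\star}=2t\omega(1)$, attained exactly by the $p$-pairs and the non-adjacent vertex pairs. (For an arbitrary, non-regular {\prob{Clique}} instance the same works after padding each candidate's degree up to $\Delta(G)$ with size-$1$ votes.)

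Finally I would check the equivalence using locality. If $G$ has a clique $S$ of size $\kappa$, deleting the $n-\kappa=\ell$ vertices of $N\setminus S$ leaves every surviving vertex pair adjacent (score $2t\omega(1)-\delta$) while each $p$-pair $\{p,u\}$ with $u\in S$ keeps score $s^{\star}$; so the winners are exactly the $p$-pairs and $p$ lies in all of them. Conversely, let $C'$ be any feasible deletion with $\abs{C'}\le\ell$ making $p$ lie in all winning $2$-committees. Since $\abs{C'}\le\ell=n-\kappa<n$, some vertex candidate $u$ survives, so the surviving pair $\{p,u\}$ has score $s^{\star}$ and the maximum equals $s^{\star}$; if any surviving vertex pair were non-adjacent it would also score $s^{\star}$ and be a $p$-free winner, a contradiction. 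Thus the surviving vertex candidates are pairwise adjacent, i.e.\ form a clique of size $n-\abs{C'}\ge n-\ell=\kappa$. I expect the main obstacle to be not any deep combinatorics---$\delta>0$ already forces the Clique/non-edge correspondence---but the careful bookkeeping of the two directions: pinning $d(p)=t$ so that $p$-pairs tie with the dangerous non-edges and strictly beat the edges, and handling the boundary requirement $\abs{C\setminus C'}\ge k$ so that at least one vertex candidate survives.
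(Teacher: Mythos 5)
Your proposal is correct and follows essentially the same route as the paper: a reduction from {\prob{Clique}} on $t$-regular graphs with one candidate per vertex, $t$ votes approving only~$p$, one vote per edge, $k=2$, and $\ell=\abs{N}-\kappa$, with the same two-directional argument that adjacent pairs score $2t\omega(1)-(2\omega(1)-\omega(2))$ while $p$-pairs and non-adjacent pairs score $2t\omega(1)$. Your explicit score formula and the deletion-invariance (``locality'') observation are just a cleaner packaging of what the paper verifies directly.
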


\begin{proof}
We prove the theorem by a reduction from {\prob{Clique}} restricted to regular graphs. Let $(G, \kappa)$ be an instance of {\prob{Clique}}, where $G=(\vset, \eset)$ is a $t$-regular graph. We create an instance $((C, V), k, J, \ell)$ of {\probb{CCDC}{$\varphi$}} as follows. First, we create one candidate~$p$. Then, for every vertex $u\in \vset$, we create one candidate denoted still by the same symbol for simplicity. Let $C=\vset\cup \{p\}$ and let $J=\{p\}$. In addition, let $k=2$ and let $\ell=\abs{\vset}-\kappa$. We create the following votes. First, we create a multiset~$V_p$ of~$t$ votes each of which approves exactly~$p$. Then, for every edge $\edge{u}{u'}\in \eset$, we create one vote $v(\edge{u}{u'})=\{u, u'\}$. Let~$V$ be the set of the $t+\abs{A}$ votes created above. This completes the construction of the instance of {\probb{CCDC}{$\varphi$}}, which can be done in polynomial time. In the following, we prove the correctness of the reduction.

$(\Rightarrow)$ Assume that~$G$ has a clique~$\vset'$ of~$\kappa$ vertices. 
Let $E=(\{p\}\cup \vset', V)$. By the construction of the votes and candidates, every candidate in~$C$ is approved by exactly~$t$ votes. Given $\omega(2)<2\omega(1)$, the largest possible~$\varphi$ score of any $2$-committee is~$2t\cdot \omega(1)$, and this is achieved by any~$2$-committee containing the distinguished candidate. As~$N'$ is a clique, for any $2$-committee $\{u, u'\}$ such that $u, u'\in N'$, the vote $v(\edge{u}{u'})$ approves both~$u$ and~$u'$, implying that the~$\varphi$ score of this committee can be at most $2(t-1)\cdot \omega(1)+\omega(2)$ which is strictly smaller than $2t\cdot \omega(1)$. Therefore, we know that all~$\varphi$ winning $2$-committees of~$E$ contain~$p$.

$(\Leftarrow)$ Assume that there exists a subset $C'\subseteq \vset$ of cardinality at most $\ell=\abs{N}-\kappa$ so that all~$\varphi$ winning $2$-committees of $E=(C\setminus C', V)$ contains the distinguished candidate~$p$. Let $N'=N\setminus C'$. Clearly, $\abs{N'}\geq \kappa$. We claim that~$N'$ is a clique in~$G$. Assume, for the sake of contradiction, that this is not the case. Then, there exist $u, u'\in N'$ not adjacent in~$G$. Then, according to the construction of the votes and candidates, 
the~$\varphi$ score of $\{u, u'\}$ is~$2t\cdot \omega(1)$ in~$E$, implying that $\{\vere, \vere'\}$ is a~$\varphi$ winning $2$-committee of~$E$. However, this contradicts that every~$\varphi$ winning $2$-committee of~$E$ contains~$p$.
\end{proof}

As {\prob{Clique}} restricted to regular graphs is {\wah} with respect to~$\kappa$~\cite{DBLP:journals/cj/Cai08,DBLP:conf/iwpec/Marx04,DBLP:conf/cats/MathiesonS08}, the proof of Theorem~\ref{thm-ccdc-abccv-pav-nph-k-2} implies that for any $\omega$-Thiele rule~$\varphi$ such that $\omega(2)<2\omega(1)$, it holds that {\probb{CCDC}{$\varphi$}} is {\wah} with respect to the number of candidates not deleted even when $k=2$, there is only one distinguished candidate, and every vote approves at most two candidates. As ABCCV and PAV are both such $\omega$-Thiele rules, we arrive at the following corollary.

\begin{corollary}
    \label{cor-ccdc-pav-abccv-wa-hard-dual}
For $\varphi\in \{{\emph{\text{ABCCV}}}, {\emph{\text{PAV}}}\}$,  {\probb{CCDC}{$\varphi$}} is {\wah} when parameterized by the number of candidates not deleted. Moreover, this holds even when $k=2$, there is only one distinguished candidate, and every vote approves at most two candidates.
\end{corollary}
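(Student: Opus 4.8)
The plan is to reduce from \prob{Clique} restricted to regular graphs, exploiting the edge-structure that arises naturally once votes may approve at most two candidates and $k=2$. Given a $t$-regular graph $G=(\vset,\eset)$ and a target size $\kappa$, I would create one candidate per vertex $u\in\vset$, a single distinguished candidate $p$ with $J=\{p\}$, and set $C=\vset\cup\{p\}$, $k=2$, and $\ell=\abs{\vset}-\kappa$. The votes mirror the graph: each edge $\edge{u}{u'}\in\eset$ becomes a size-two vote $\{u,u'\}$, and I add a block of exactly $t$ votes each approving only $p$. The intended correspondence is that deleting $\ell$ vertex-candidates leaves a $\kappa$-element vertex set $N'$, and I want $p$ to be forced into \emph{every} winning $2$-committee precisely when $N'$ induces a clique. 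Note that $C'\subseteq\vset=C\setminus J$ and $\abs{C\setminus C'}\geq\kappa+1\geq k$ hold automatically, so the instance is well-formed.

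The heart of the argument is a $\varphi$-score computation for $2$-committees in the restricted election $E=(\{p\}\cup N',V)$. A useful preliminary observation is that for any committee $\w\subseteq C\setminus C'$ the restricted vote satisfies $v_{C\setminus C'}\cap\w=v\cap\w$, so deletion of candidates never changes the relevant intersection sizes and I may score against the original votes. By $t$-regularity every surviving vertex-candidate is approved by exactly $t$ edge-votes, and $p$ is approved by exactly its $t$ dedicated votes. Hence a committee $\{p,u\}$ has score $t\omega(1)+t\omega(1)=2t\omega(1)$, and a non-adjacent pair $\{u,u'\}\subseteq N'$ also has score $2t\omega(1)$, since no vote approves both endpoints. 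In contrast, an adjacent pair $\{u,u'\}$ gets $\omega(2)$ from the single vote $\{u,u'\}$ plus $\omega(1)$ from each of the $2(t-1)$ remaining incident edge-votes, giving $\omega(2)+2(t-1)\omega(1)$, which by the hypothesis $\omega(2)<2\omega(1)$ is \emph{strictly} below $2t\omega(1)$. Thus $2t\omega(1)$ is the maximum achievable $2$-committee score, and the winning committees are exactly all pairs $\{p,u\}$ together with all non-adjacent pairs of surviving vertices.

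From this characterization both directions are immediate. If $G$ has a clique of $\kappa$ vertices, I delete the complementary $\abs{\vset}-\kappa$ vertices so that $N'$ is exactly that clique; then no non-adjacent pair survives, every top-scoring committee contains $p$, and we have a \yesins. Conversely, if after deleting at most $\ell$ vertices every winning $2$-committee contains $p$, then $N'=\vset\setminus C'$ can contain no non-adjacent pair --- such a pair would itself be a winning committee avoiding $p$ --- so $N'$ is a clique, and since $\abs{N'}\geq\kappa$ the graph $G$ has a clique of size $\kappa$.

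I expect the main obstacle to be the score bookkeeping after candidate deletion, namely justifying cleanly that each edge-vote's contribution to a committee depends only on how many committee members it approves and that restriction to the surviving candidate set is irrelevant for committees within $C\setminus C'$. Regularity is exactly what keeps the baseline score of every vertex-candidate uniform, so that the strict inequality $\omega(2)<2\omega(1)$ separates adjacent from non-adjacent pairs without any auxiliary gadget candidates; the delicate point is confirming that the $t$ dedicated $p$-votes are enough to let $\{p,u\}$ tie the global maximum $2t\omega(1)$ while never letting an adjacent pair reach it. Since ABCCV and PAV are both $\omega$-Thiele rules with $\omega(2)<2\omega(1)$, the theorem specializes to them, and because \prob{Clique} on regular graphs is {\wah} in~$\kappa$, the same reduction yields the parameterized hardness recorded in the subsequent corollary.
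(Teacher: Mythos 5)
Your proposal is correct and is essentially the paper's own argument: the paper derives this corollary from exactly the reduction you describe (Theorem~\ref{thm-ccdc-abccv-pav-nph-k-2}), with the same election, the same score computation $\omega(2)+2(t-1)\omega(1)<2t\omega(1)$ separating adjacent from non-adjacent pairs, and the same two directions. The only point worth making explicit is the parameter accounting that turns the NP-hardness reduction into a parameterized one: the number of candidates not deleted is $\abs{C}-\ell=\kappa+1$, a function of the {\prob{Clique}} parameter alone, which is what makes this a valid parameterized reduction from a {\wah} problem.
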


\section{Some Fixed-Parameter Algorithms}
\label{sec-fpt}
In the previous sections, we showed that manipulation and control problems are generally computationally hard with only a few exceptions. In this section, we consider these problems from the parameterized complexity point of view.

An important parameter that has been frequently studied in voting problems is the number of candidates (see, e.g.,~\cite{DBLP:journals/tcs/BredereckFNST20,DBLP:journals/jacm/ConitzerSL07,DBLP:journals/jair/FaliszewskiHHR09,DBLP:conf/ecai/Yang14}). In many real-world applications, this parameter is small~\cite{Fishburm05,DBLP:conf/aldt/MatteiW13}. It is easy to see that {\prob{CCAC}} and {\prob{CCDC}} for all rules studied in this paper are {\fpt} with respect to this parameter:  we enumerate all possible choices of at most~$\ell$ candidates to add ({\probb{CCAC}{$\varphi$}}) or to delete ({\probb{CCDC}{$\varphi$}}), 
and check whether at least one of the enumerations leads to a ``{\yes}''-answer. Another natural parameter is the number of votes. It is easy to see that {\probb{CCAV}{$\varphi$}} and {\probb{CCDV}{$\varphi$}} where $\varphi\in \{\text{AV},  \text{SAV}, \text{NSAV}\}$ are {\fpt} with respect to this parameter: we enumerate all possible choices of at most~$\ell$ votes to add ({\probb{CCAV}{$\varphi$}}) or to delete ({\probb{CCDV}{$\varphi$}}), 
and check whether at least one of the enumerations leads to a ``{\yes}''-answer.
For manipulation, we have similar results.

\begin{theorem}
\label{thm-manipulation-fpt-wrt-candidate}
{\probb{\cbcm}{AV}} and {\probb{\sbcm}{AV}} are {\fpt} with respect to the number of candidates~$m$. More precisely, it can be solved in time~$\bigos{2^m}$.
\end{theorem}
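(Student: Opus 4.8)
The plan is to collapse the manipulators' search space from all achievable configurations down to merely $\bigos{2^m}$ candidate solutions, by exploiting the structural reduction already isolated in Claim~\ref{claim-b}. A naive approach would enumerate the effect of the manipulators on the AV scores, namely the vector $(n_c)_{c\in C}$ where $n_c$ is the number of votes in the new multiset $U$ approving~$c$; since any such vector with $0\le n_c\le \abs{V_{\text{M}}}$ is realizable by $\abs{V_{\text{M}}}$ suitable votes, this gives $(\abs{V_{\text{M}}}+1)^m$ configurations, which is \emph{not} $\bigos{2^m}$. The key is that Claim~\ref{claim-b} guarantees that a {\yesins} always admits a feasible solution in which \emph{all} manipulators approve one common set $S\subseteq C^{\vee}(V_{\text{M}})\subseteq C$, i.e.\ every $n_c$ lies in $\{0,\abs{V_{\text{M}}}\}$. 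There are at most $2^m$ such sets, so I would enumerate them.

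Concretely, for each guessed $S\subseteq C$ I would let $U$ consist of $\abs{V_{\text{M}}}$ identical copies of $S$ (which trivially realizes the common set), set $E'=(C,V\cup U)$, and compute $\swinshort{E'}$, $\pwinshort{E'}$, and $\sloseshort{E'}$ in polynomial time, as noted in the preliminaries for AV. Since a $k$-committee $\w'$ is AV-winning at $E'$ exactly when $\swinshort{E'}\subseteq\w'\subseteq(\swinshort{E'}\cup\pwinshort{E'})$, the worst case for a manipulator $v$ is captured in closed form: for {\probb{\cbcm}{AV}} I would test, for every $v\in V_{\text{M}}$, whether $\abs{\swinshort{E'}\cap v}+\max\{0,\,k+\abs{\pwinshort{E'}\cap v}-\abs{\swinshort{E'}\cup\pwinshort{E'}}\}>\abs{v\cap\w}$, i.e.\ whether even the winning committee minimizing $\abs{v\cap\w'}$ still beats~$\w$. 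For {\probb{\sbcm}{AV}} I would instead use the analogous polynomial-time checkable subset condition described in the proof of Theorem~\ref{thm-maniuplation-polynomial-time-solvable-constant-number-manipulators} (ensuring $v\cap\w$ is forced into every winning committee together with at least one additional approved candidate). Each such check runs in polynomial time, so a single guess $S$ costs only $\mathrm{poly}(m,\abs{V},\abs{V_{\text{M}}})$.

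The algorithm then accepts iff some enumerated $S$ passes its test. Correctness is immediate in the forward direction, since a passing $S$ yields an explicit witness multiset $U$; the backward direction is exactly Claim~\ref{claim-b} (and its {\sbcm} analogue), which ensures that any {\yesins} has a common-approval-set solution that appears among the $2^m$ enumerated sets. Multiplying the $2^m$ guesses by the polynomial cost per guess gives the claimed $\bigos{2^m}$ bound, and fixed-parameter tractability in $m$ follows.

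The main obstacle, and the reason the bound is $\bigos{2^m}$ rather than worse, is precisely the justification for restricting to a single common approval set: without Claim~\ref{claim-b} one would be forced to enumerate $(\abs{V_{\text{M}}}+1)^m$ configurations, which is not parameterized only by $m$. Thus the substantive content is entirely borrowed from the already-established Claim~\ref{claim-b} (and its straightforward {\sbcm} counterpart); the remaining work---realizing a common set by identical votes, computing $\swinshort{E'},\pwinshort{E'},\sloseshort{E'}$, and evaluating the per-manipulator preference condition in closed form---is routine and polynomial. I would therefore present this theorem as a short corollary of the machinery developed for the constant-manipulator case, merely re-packaging the enumeration over common approval sets as the source of the $2^m$ factor.
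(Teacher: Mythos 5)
Your proposal is correct and takes essentially the same route as the paper's own proof: enumerate the at most $2^m$ possible common approval sets, replace every manipulative vote by that set, compute $\swinshort{E}$, $\pwinshort{E}$, $\sloseshort{E}$ in polynomial time, and test the per-manipulator preference conditions (the $\max$-formula for {\probb{\cbcm}{AV}} and the subset conditions for {\probb{\sbcm}{AV}}) exactly as the paper does. The only differences are presentational: you explicitly invoke Claim~\ref{claim-b} (and its {\sbcm} analogue) to justify restricting to common approval sets drawn from $C^{\vee}(V_{\text{M}})$, whereas the paper leaves this justification implicit and instead enumerates subsets of at most~$k$ candidates---both yield the same $\bigos{2^m}$ bound.
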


\begin{proof}
Let $I=((C, V\cup V_{\text{M}}), \w)$ be an instance of {\probb{\cbcm}{AV}} (respectively,~{\probb{\sbcm}{AV}}). Let $m=\abs{C}$ be the number of candidates, and let $k=\abs{w}$ be the size of the winning committee~$\w$.
We enumerate all subsets $C'\subseteq C$ of at most~$k$ candidates. For each enumerated $C'\subseteq C$, we let all manipulators approve only candidates in~$C'$, i.e., we replace every $v\in V_{\text{M}}$ with $v'=C'$. Let $V'=\{v' \setmid v\in V_{\text{M}}\}$ be the multiset of these votes of manipulators. Let~$E$ be the election after the manipulators turn to approve~$C'$. Then, the AV scores of all candidates in~$E$ are determined. Recall that $\swinspe{E}{\text{AV}}{k}$, $\slosespe{E}{\text{AV}}{k}$, and~$\pwinspe{E}{\text{AV}}{k}$ are respectively the subset of candidates that are contained in all AV winning $k$-committees of~$E$, the subset of candidates none of which is not contained in any AV winning $k$-committees of~$E$, and the subset of the remaining candidates. For notational brevity, we drop AV and~$k$ from the three notations.
Clearly, every~$k$-committee contains~$\swinshort{E}$ and any arbitrary $k-\abs{\swinshort{E}}$ candidates from~$\pwinshort{E}$ is an AV winning $k$-committee of~$E$. Therefore, for {\probb{\cbcm}{AV}}, if for all $v\in V_{\text{M}}$ it holds that $\abs{\swinshort{E} \cap v}+(k-\abs{\swinshort{E}}-\abs{\pwinshort{E} \setminus v})>\abs{v\cap w}$ we immediately conclude that the given instance~$I$ is a {\yesins}; otherwise, we discard the currently enumerated~$C'$.
For {\probb{\sbcm}{AV}}, we conclude that~$I$ is a {\yesins} if the following conditions are satisfied simultaneously:
\begin{itemize}
\item $\sloseshort{E}\cap v\cap w=\emptyset$ for all $v\in V_{\text{M}}$. If this is not satisfied, then there exists a manipulative vote~$v$ so that at least one candidate from $v\cap w$ is not contained in any AV winning $k$-committees of~$E$.

\item Either $v\cap \w\cap \pwinshort{E}=\emptyset$ for all $v\in V_{\text{M}}$ or $\abs{\swinshort{E} \cup \pwinshort{E}}=k$. In fact, if this condition fails, there exists $v\in V_{\text{M}}$ so that $v\cap \w\cap \pwinshort{E} \neq \emptyset$ and $\abs{\swinshort{E} \cup \pwinshort{E}}>k$. Let~$c$ be any arbitrary candidate from $v\cap \w\cap \pwinshort{E}$, then any $k$-committee containing~$\swinshort{E}$ and any arbitrary $k-\abs{\swinshort{E}}$ candidates from $\pwinshort{E} \setminus \{c\}$ is an AV winning $k$-committee of~$E$. However,~$v$ does not prefer such a winning committee to~$w$.
\end{itemize}
If at least one of the above two conditions fails, we discard the enumerated~$C'$.
For both {\probb{\cbcm}{AV}} and {\probb{\sbcm}{AV}}, if all enumerations are discarded, we conclude that the given instance~$I$ is a {\noins}.
Regarding the time complexity, as we have at most~$2^m$ enumerations~$C'$ to consider, the algorithms run in time~$\bigos{2^m}$.
\onlyfull{
For the bribery problem, we guess all possible~$k$-committee $w\subseteq C$ such that $J\subseteq w$. Then, for every voter which is more satisfied with~$\w$ than with the current winning~$k$-committee, we let the this voter approves exactly the candidates in~$\w$. Then we return ``{\yes}'' if doing so leading to~$\w$ to be the winning $k$-committee. If this is not the case, we discard this guess and proceed to the next. If no guess gives us a ``{\yes}'' answer, we conclude that the given instance is a {\noins}.}
\end{proof}

Now we present a general algorithm for {\probb{\cbcm}{$\varphi$}} and {\probb{\sbcm}{$\varphi$}} for all polynomial computable additive rules. A difficulty is that it is not essentially  true that all manipulators need to approve the same candidates in order to improve the result in their favor, as already shown in Example~\ref{ex-1}. However, as the number of candidates is bounded, we can enumerate all possible winning committees in desired running time, and exploit ILP to formulate the question of assigning approved candidates to manipulators.

\begin{theorem}
\label{thm-manipulation-sav-nsav-fpt-wrt-candidate}
For all polynomial computable additive rules~$\varphi$,
{\probb{\cbcm}{$\varphi$}} and {\probb{\sbcm}{$\varphi$}} are {\fpt} with respect to the number of candidates.
\end{theorem}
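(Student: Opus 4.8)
The plan is to reduce each instance, after a bounded guess, to an integer linear program (ILP) whose number of variables depends only on $m=\abs{C}$. Recall from the preliminaries that for an additive rule a $k$-committee $\w'$ is winning at an election $E$ if and only if $\swinshort{E}\subseteq \w'\subseteq \swinshort{E}\cup\pwinshort{E}$; hence the \emph{entire} family of winning $k$-committees of the manipulated election is determined by the pair $(\swinshort{E},\pwinshort{E})$, and conversely this pair is recovered from the family as its intersection and its union minus the intersection. So first I would guess this pair: enumerate all ways of labelling each candidate as belonging to an intended $A$ (the guaranteed winners $\swinshort{E}$), $B$ (the tied candidates $\pwinshort{E}$), or $L$ (the losers $\sloseshort{E}$), and keep only the labellings that can occur, namely those with $B=\emptyset$ and $\abs{A}=k$, or with $\abs{B}\geq 2$, $\abs{A}<k$ and $\abs{A}+\abs{B}>k$. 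There are at most $3^m$ guesses.

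The point of guessing the shape first is that, once $(A,B)$ is fixed, the manipulators' satisfaction conditions become independent of \emph{how} they vote: each true preference $v\in V_{\text{M}}$ is compared only against the now fully determined family $\{\w' : A\subseteq \w'\subseteq A\cup B,\ \abs{\w'}=k\}$. Thus in a second, purely deterministic, polynomial-time step I would test the guess against every manipulator. Since the smallest overlap of any such $\w'$ with $v$ is $\abs{v\cap A}+\max\{0,(k-\abs{A})-\abs{B\setminus v}\}$, the {\cbcm} condition for $v$ is exactly that this number exceeds $\abs{v\cap \w}$, while for {\sbcm} one additionally requires $v\cap \w\subseteq A$; these are the very conditions already isolated in the proof of Theorem~\ref{thm-maniuplation-polynomial-time-solvable-constant-number-manipulators}. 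If some manipulator is not satisfied, the guess is discarded.

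It remains to decide whether the guessed shape is \emph{realizable}, i.e.\ whether the $\abs{V_{\text{M}}}$ manipulators can cast votes producing scores that make $(A,B,L)$ equal to $(\swinshort{E},\pwinshort{E},\sloseshort{E})$. Because $\varphi$ is additive and polynomial computable, the final $\varphi$-score of a candidate $c$ is $\sigma^V_c+\sum_{S\subseteq C} y_S\cdot s(S,c)$, where $\sigma^V_c$ is the fixed score $c$ gets from the nonmanipulators, $s(S,c)$ is the rational (polynomial-time computable) score $c$ gets from one vote approving exactly $S$, and $y_S$ counts the manipulators that submit the vote $S$. Here the crucial simplification is that nothing ties a manipulator of a given true preference to a particular cast vote, so the only variables needed are the $2^m$ counts $\{y_S : S\subseteq C\}$. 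I would then set up the ILP with these variables, the constraint $\sum_{S\subseteq C}y_S=\abs{V_{\text{M}}}$, and linear constraints enforcing the shape: every candidate of $A$ strictly outscores every candidate of $B\cup L$, all candidates of $B$ receive equal scores, and every candidate of $B$ strictly outscores every candidate of $L$. Clearing denominators (bounded by a function of $m$) turns these into integer-coefficient constraints, with strict inequalities becoming ``$\geq$'' by adding one. The instance is a \yesins{} if and only if some surviving guess yields a feasible ILP.

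The main obstacle is exactly this feasibility test: the ILP has $2^m$ variables, so brute force over vote profiles is out of reach. I would dispatch it with Lenstra's theorem, which makes ILP feasibility fixed-parameter tractable in the number of variables; as the coefficients have size polynomial in the input (after the $m$-bounded scaling), each ILP is decided in time $g(m)\cdot\abs{I}^{O(1)}$. Combining this with the $3^m$ guesses and the polynomial preference checks gives an overall bound of the form $f(m)\cdot\abs{I}^{O(1)}$, establishing fixed-parameter tractability in $m$ for both {\cbcm} and {\sbcm}. The same argument works verbatim for every polynomial computable additive rule, since only the values $s(S,c)$ and $\sigma^V_c$ change.
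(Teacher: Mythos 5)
Your proposal is correct, and it follows the same skeleton as the paper's proof of Theorem~\ref{thm-manipulation-sav-nsav-fpt-wrt-candidate} — guess the outcome of the manipulation, discard guesses that dissatisfy some manipulator, and decide realizability of each surviving guess by an ILP solved with Lenstra's theorem — but the two implementations differ in instructive ways. The paper guesses the winning family directly, enumerating \emph{all} collections $\mathcal{W}$ of $k$-committees (up to $2^{\binom{m}{k}}$ guesses), and its ILP uses variables $x_{S\rightarrowshort T}$ indexed by pairs of a true preference $S$ and a cast ballot $T$, with constraints equating the scores of all committees in $\mathcal{W}$ and strictly separating them from all committees outside $\mathcal{W}$. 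You instead exploit the characterization from the preliminaries that, for an additive rule, the winning family is determined by the pair $(\swinshort{E},\pwinshort{E})$: this brings the enumeration down to $3^m$ labelings, replaces committee-score constraints by candidate-score constraints ($A$ strictly above $B$, $B$ internally tied, $B$ strictly above $L$), and — since a manipulator's satisfaction depends only on her true preference and the resulting winning family, not on which ballot she personally casts — lets you drop the true-preference index, leaving only the $2^m$ counts $y_S$. Both versions are {\fpt} in $m$; yours buys a singly-exponential enumeration and a smaller ILP. Your closed-form satisfaction test, $\abs{v\cap A}+\max\{0,(k-\abs{A})-\abs{B\setminus v}\}>\abs{v\cap \w}$ for {\cbcm}, with the additional requirement $v\cap \w\subseteq A$ for {\sbcm}, is also exactly equivalent to the problem definitions.

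One point needs repair in a full write-up: your filter ``keep only the labellings that can occur'' is not the right characterization. A final election can have $\abs{\swinshort{E}}+\abs{\pwinshort{E}}=k$ with $\abs{\pwinshort{E}}\geq 2$ (candidate scores $5,3,3,1$ with $k=3$), a shape your filter excludes. This does not break the algorithm: in that situation the winning family is the single committee $\swinshort{E}\cup\pwinshort{E}$, which is exactly the family of your retained guess $(A,B,L)=(\swinshort{E}\cup\pwinshort{E},\,\emptyset,\,\sloseshort{E})$, and the witness votes also satisfy that guess's ILP, since every candidate of $\swinshort{E}\cup\pwinshort{E}$ strictly outscores every candidate of $\sloseshort{E}$. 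But the completeness direction of your argument — every yes-instance is detected by some retained guess — needs this coverage observation spelled out, because it is no longer true that the witness election's literal shape appears in your enumeration. Alternatively, simply retain the shapes with $\abs{A}+\abs{B}=k$ and $\abs{B}\geq 2$ as well; they are handled by the same ILP and the same satisfaction test.
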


\begin{proof}
Let $I=(C, V, V_M, \w)$ be an instance of {\probb{\cbcm}{$\varphi$}}/{\probb{\sbcm}{$\varphi$}}. Let $m=\abs{C}$ and let $k=\abs{\w}$. We first enumerate all collections~$\mathcal{W}$ of $k$-committees of~$C$. Each enumerated~$\mathcal{W}$ corresponds to a guess that~$\mathcal{W}$ is exactly the set of~$\varphi$ winning $k$-committees of the final election. There are at most $2^{m\choose k}$ different collections to consider. Let~$\mathcal{W}$ be an enumerated collection. If there exists a committee $\w'\in \mathcal{W}$ and a vote $v\in V_{\text{M}}$ so that~$v$ does not prefer~$\w'$ to~$\w$ (recall that for {\probb{\cbcm}{$\varphi$}},~$v$ prefers~$\w'$ to $\w$ if and only if $\abs{v\cap \w'}>\abs{v\cap \w}$, and for {\probb{\sbcm}{$\varphi$}},~$v$ prefers~$\w'$ to~$\w$ if and only if $(v\cap \w) \subsetneq (v\cap \w')$), we discard~$\mathcal{W}$. Otherwise, we determine if manipulators can cast their votes so that committees in~$\mathcal{W}$ are exactly the winning $k$-committees. Clearly, the given instance~$I$ is a {\yesins} if and only if there is at least one enumerated~$\mathcal{W}$ so that the answer to the question for~$\mathcal{W}$ is ``\yes''. In the following, we give an ILP formulation of the above question with a limited number of variables.

For each subset $S\subseteq C$, let~$V_{\text{M}}^S$ be the subset of votes from~$V_{\text{M}}$ which approve exactly the candidates in~$S$. That is, $V_{\text{M}}^S=\{v\in V_{\text{M}} \setmid v=S\}$. 
For every $S, T\subseteq C$, we create a nonnegative integer variable~$x_{S\rightarrowshort T}$ which indicates the number of votes in $V_{\text{M}}^S$ that turn to approve exactly the candidates in~$T$.
 For $T\subseteq C$, let $X(T)=\sum_{S\subseteq C}x_{S\rightarrowshort T}$, which indicates the number of manipulators whose truthfully approved candidates are exactly those from~$S$ and are demanded to approve candidates from~$T$ in order to improve the result. Let $E=(C, V)$, and for each $S\subseteq C$, let $E^{S}=(C, \{S\})$ be the election contains only one vote approving exactly the candidates in~$S$. We create the following constraints.
\begin{itemize}
\item First, we have the natural constraints $0\leq x_{S\rightarrowshort T}\leq \abs{V_{\text{M}}^S}$ for every variable~$x_{S\rightarrowshort T}$.
\item Second, for each $S\subseteq C$, we have that $\sum_{T\subseteq C}x_{S\rightarrowshort T}=\abs{V_{\text{M}}^S}$.
\item Third, as committees from~$\mathcal{W}$ are supposed to be the winning $k$-committees in the final election, they should have the same~$\varphi$ score. Therefore, for every $T, T'\in \mathcal{W}$, we have that
\[\score{V}{T}{E}{\varphi}+\sum_{S\subseteq C} \score{S}{T}{E^S}{\varphi} \cdot X(S) = \score{V}{T'}{E}{\varphi}+\sum_{S\subseteq C} \score{S}{T'}{E^S}{\varphi} \cdot X(S).\]
\item Let~$T$ be any arbitrary $k$-committee from~$\mathcal{W}$. To ensure that committees from~$\mathcal{W}$ are exactly the~$\varphi$ winning $k$-committees of the final election, for every $k$-committee $T'\subseteq C$ such that $T'\not\in \mathcal{W}$, we have
    \[\score{V}{T}{E}{\varphi}+\sum_{S\subseteq C} \score{S}{T}{E^S}{\varphi} \cdot X(S) > \score{V}{T'}{E}{\varphi}+\sum_{S\subseteq C} \score{S}{T'}{E^S}{\varphi} \cdot X(S).\]
\end{itemize}
The correctness of the ILP formulation is fairly easy to see. Due to Lenstra's algorithm for ILP~\cite{DBLP:journals/mor/Lenstra83}, the above ILP can be solved in {\fpt}-time in~$m$.
\end{proof}

For {\probb{SDCM}{$\varphi$}}, we could obtain the same fixed-parameter tractability result by utilizing a similar algorithm.

\begin{theorem}
\label{thm-new-manipulation-sav-nsav-fpt-wrt-candidate}
For all polynomial computable additive rules~$\varphi$,
{\probb{\sdcm}{$\varphi$}} is {\fpt} with respect to the number of candidates.
\end{theorem}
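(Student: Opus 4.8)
The plan is to mirror the enumerate-and-ILP strategy of Theorem~\ref{thm-manipulation-sav-nsav-fpt-wrt-candidate}, replacing its combinatorial preference test by a stochastic-domination test. Let $I=(C, V, V_{\text{M}}, k)$ be a {\probb{\sdcm}{$\varphi$}} instance and let $m=\abs{C}$. The crucial preliminary observation is that, since $\varphi$ is a polynomial computable additive rule, the reference collection $\mathcal{B}=\varphi(C, V\cup V_{\text{M}}, k)$ is fully determined by $\swinshort{E}$ and $\pwinshort{E}$ for $E=(C, V\cup V_{\text{M}})$: it is exactly the set of $k$-committees $w$ with $\swinshort{E}\subseteq w\subseteq (\swinshort{E}\cup \pwinshort{E})$. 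First I would compute these two sets in polynomial time and then enumerate $\mathcal{B}$ explicitly; although $\abs{\mathcal{B}}$ may be as large as $\binom{m}{k}\le 2^m$, this is bounded by a function of the parameter.

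Next I would enumerate all collections $\mathcal{W}$ of $k$-committees of $C$, each being a guess of the winning set $\varphi(C, V\cup U, k)$ of the manipulated election; there are at most $2^{\binom{m}{k}}$ such collections, a function of $m$ alone. For each enumerated $\mathcal{W}$ I would perform two independent tests. The first is the stochastic-domination test: using only $\mathcal{W}$, $\mathcal{B}$, and the distinct truthful approval sets occurring in $V_{\text{M}}$, I verify directly from the definition that $\mathcal{W}$ stochastically dominates $\mathcal{B}$ subject to every $v\in V_{\text{M}}$, i.e.\ that for each such $v$ and each integer $i$ the quantity $\abs{\{w\in \mathcal{W}: \abs{w\cap v}\ge i\}}/\abs{\mathcal{W}}$ dominates the corresponding quantity for $\mathcal{B}$, with strict inequality for at least one $i$. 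Since both collections have size at most $2^m$ and $i$ ranges over $0,\dots,k$, this check runs in time $\bigos{2^m}$; and --- this is the point that decouples it from the rest --- it depends only on the guessed winning set and the fixed reference collection, never on which manipulator reports what.

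If the stochastic-domination test passes, I would reuse the ILP of Theorem~\ref{thm-manipulation-sav-nsav-fpt-wrt-candidate} almost verbatim to decide whether there is an actual report profile realizing $\mathcal{W}$: with variables $x_{S\rightarrowshort T}$ counting how many truthful votes equal to $S$ report $T$, the count-preservation constraints $\sum_{T\subseteq C}x_{S\rightarrowshort T}=\abs{V_{\text{M}}^S}$, the equal-score constraints forcing all committees in $\mathcal{W}$ to tie, and the strict-inequality constraints forcing every $k$-committee outside $\mathcal{W}$ to score strictly lower. This ILP has at most $4^m$ variables and, by Lenstra's algorithm~\cite{DBLP:journals/mor/Lenstra83}, is solvable in {\fpt}-time in $m$. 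I would output {\yes} exactly when some enumerated $\mathcal{W}$ passes both tests; correctness follows because a manipulation $U$ witnessing SD-domination exists if and only if its induced winning set $\mathcal{W}$ is SD-dominating and is achievable by a valid report profile, and the two tests check precisely these two conditions.

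The main obstacle I anticipate is conceptual rather than computational: establishing cleanly that the stochastic-domination requirement factors through the guessed winning set, so that it can be verified before and independently of the ILP feasibility check. One must be careful that $\mathcal{B}$ is in general not a single committee, so the reference distributions must be computed from the full collection; and that ``subject to every $v\in V_{\text{M}}$'' demands the domination inequalities hold simultaneously for every distinct truthful approval set, not merely on average. Once this factorization is justified, the {\fpt}-bound is immediate from the product of the $2^{\binom{m}{k}}$ guesses with the per-guess $\bigos{2^m}$ check and the {\fpt}-time ILP solve.
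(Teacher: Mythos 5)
Your proposal is correct and follows essentially the same route as the paper's own proof: enumerate all candidate collections $\mathcal{W}$ of $k$-committees as guesses for the post-manipulation winning set, test stochastic domination of $\mathcal{W}$ over the truthful winning collection $\varphi(C, V\cup V_{\text{M}}, k)$ directly from the definition, and then invoke the ILP of Theorem~\ref{thm-manipulation-sav-nsav-fpt-wrt-candidate} to decide realizability, accepting iff some guess passes both tests. The only (harmless) additions are your explicit computation of the reference collection via $\swinshort{E}$ and $\pwinshort{E}$ and your spelled-out justification that the domination condition depends only on the guessed winning collection; both points are implicit in the paper's argument.
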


\begin{proof}
Let $I=(C, V, V_M, k)$ be an instance of {\probb{\sdcm}{$\varphi$}}. Let $m=\abs{C}$. We first enumerate all collections~$\mathcal{W}$ of $k$-committees of~$C$. Each enumerated~$\mathcal{W}$ is in essence a guess that~$\mathcal{W}$ is exactly the set of~$\varphi$ winning $k$-committees of the final election. In addition, let~$\mathcal{W}'$ denote the collection of all winning $k$-committees of~$\varphi$ at $(C, V)$. We first determine if~$\mathcal{W}$ stochastically dominates~$\mathcal{W}'$, which can be done in {\fpt}-time in~$m$ according to the definition of stochastic domination. If this is not the case, we discard this enumerated~$\mathcal{W}$, and proceed to the next one, if there are any. Otherwise, we further determine if it is possible for the manipulators to cast their votes to make~$\mathcal{W}$ exactly the collection of winning $k$-committees, which can be done in {\fpt}-time by solving the ILP described in the proof of Theorem~\ref{thm-manipulation-sav-nsav-fpt-wrt-candidate}. If the ILP has a feasible solution, we conclude that~$I$ is a {\yesins}. If all possible collections of~$\mathcal{W}$ are enumerated without providing us with a conclusion on~$I$, we conclude that~$I$ is a {\noins}.
\end{proof}

In the following, we study fixed-parameter algorithms for election control by adding/deleting voters with respect to the number of candidates. In particular, we show that a natural generalization of both {\probb{CCAV}{$\varphi$}} and {\probb{CCDV}{$\varphi$}} formally defined below is {\fpt} with respect to this parameter.

\EP
{{\prob{Constructive Control by Adding and Deleting Voters}} for~$\varphi$}{\probb{CCADV}{$\varphi$}}
{A set~$C$ of candidates, two multisets~$V$ and~$U$ of votes over~$C$, a positive integer~$k\leq \abs{C}$,  a nonempty subset~$J\subseteq C$ of at most~$k$ distinguished candidates, and two nonnegative integers~$\ell_{\text{AV}}$ and $\ell_{\text{DV}}$ such that $\ell_{\text{AV}}\leq \abs{U}$ and~$\ell_{\text{DV}}\leq \abs{V}$.}
{Are there $V'\subseteq V$ and~$U'\subseteq U$ such that~$|V'|\leq \ell_{\text{DV}}$, $|U'|\leq \ell_{\text{AV}}$, and~$J\subseteq \w$ for all $\w\in \varphi(C, V\setminus V'\cup U', k)$?}

Obviously, both {\probb{CCAV}{$\varphi$}} and {\probb{CCDV}{$\varphi$}} are special cases of {\probb{CCADV}{$\varphi$}}.
Now we show our {\fpt}-results for {\probb{CCADV}{$\varphi$}}, beginning with those for additive rules.


\begin{theorem}
\label{thm-ccav-ccdv-av-sav-nsav-fpt-candidates}
For~$\varphi$ being a polynomial computable additive rule, {\probb{CCADV}{$\varphi$}} is {\fpt} with respect to the number of candidates.
\end{theorem}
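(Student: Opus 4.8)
The plan is to exploit the fact that, because the number of candidates $m=\abs{C}$ is the parameter, there are at most $2^m$ distinct \emph{vote types} (subsets of $C$), and to reduce the problem to solving a bounded number of integer linear programs, each with a number of variables bounded by a function of $m$. First I would group the votes by type: for each $S\subseteq C$ let $n_S$ be the number of registered votes in $V$ equal to $S$ and $u_S$ the number of unregistered votes in $U$ equal to $S$. A solution is fully described by how many votes of each type are deleted from $V$ and how many are added from $U$, so I introduce nonnegative integer variables $d_S$ and $a_S$ for every $S\subseteq C$, subject to $0\le d_S\le n_S$, $0\le a_S\le u_S$, and the budget constraints $\sum_{S}d_S\le \ell_{\text{DV}}$ and $\sum_{S}a_S\le \ell_{\text{AV}}$. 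The final multiset then contains $f_S=n_S-d_S+a_S$ copies of type $S$, and since $\varphi$ is additive and polynomial computable, the final $\varphi$ score of any candidate $c$ is the linear expression $\sum_{S\subseteq C} f_S\cdot \score{S}{c}{E^S}{\varphi}$, where $E^S$ is the single-vote election on $S$. This gives at most $2\cdot 2^m$ variables, a function of $m$ alone.

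Next I would translate the winning condition into linear constraints. Recall from the preliminaries that for an additive rule the winning $k$-committees of an election $E$ are exactly the sets $w$ with $\swinshort{E}\subseteq w\subseteq \swinshort{E}\cup\pwinshort{E}$; hence $J$ lies in every winning $k$-committee if and only if $J\subseteq \swinshort{E}$ when $\pwinshort{E}$ leaves genuine freedom, and $J\subseteq\swinshort{E}\cup\pwinshort{E}$ when the winning committee is unique. To capture this I would additionally guess an ordered partition $(A,B,D)$ of $C$ with $B\ne\emptyset$, interpreting $A$ as the candidates whose final score exceeds the $k$-winning-threshold $s$, $B$ as those tied at $s$, and $D$ as those below $s$; there are at most $3^m$ such guesses. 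For a fixed guess, with $b_0\in B$ fixed, I would enforce $\mathrm{Sc}(a)>\mathrm{Sc}(b_0)$ for all $a\in A$, $\mathrm{Sc}(b)=\mathrm{Sc}(b_0)$ for all $b\in B$, and $\mathrm{Sc}(d)<\mathrm{Sc}(b_0)$ for all $d\in D$, where $\mathrm{Sc}(\cdot)$ abbreviates the linear score expression above. A guess is retained only if its (variable-free) size data already force $J$ into every winning committee, namely $\abs{A}\le k-1$ and $\abs{A}+\abs{B}\ge k$, together with $J\subseteq A\cup B$ when $\abs{A}+\abs{B}=k$ (the unique-committee case, which includes $\abs{B}=1$) and $J\subseteq A$ when $\abs{A}+\abs{B}>k$. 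These conditions depend only on the guessed sets, so they are checked before building the program.

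Finally I would solve each resulting ILP using Lenstra's algorithm~\cite{DBLP:journals/mor/Lenstra83}, which runs in {\fpt}-time in the number of variables and hence in $m$; the instance is a {\yesins} precisely when some retained guess yields a feasible program, which is straightforward to verify against the characterization above in both directions. The main obstacle, and the only genuinely delicate point, is the handling of the strict inequalities separating $A$, $B$, and $D$: for SAV and NSAV the coefficients $\score{S}{c}{E^S}{\varphi}$ are rationals whose denominators divide quantities bounded by $m$, so I would multiply every score expression by a common denominator $\Lambda$ (the lcm of these bounded denominators) to obtain integer-valued expressions, after which each strict inequality $>$ becomes $\ge{}+1$ and the formulation is a bona fide ILP. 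Everything else—the $\bigos{3^m}$ enumeration of partitions, the $\bigo{2^m}$ variables per program, and the linearity of the score expressions—is routine, so the overall running time is $\bigos{3^m}$ times the {\fpt}-cost of a single ILP solve, establishing fixed-parameter tractability with respect to $m$.
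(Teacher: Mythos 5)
Your proposal is correct and follows essentially the same route as the paper's proof: group the votes into at most $2^m$ types, introduce per-type add/delete integer variables with budget and capacity constraints, express final scores as linear functions of these variables, and solve the resulting integer linear programs with Lenstra's algorithm, with an outer enumeration over guessed score-level structure. The only differences are cosmetic---the paper replaces your $3^m$ enumeration of threshold partitions $(A,B,D)$ by guessing a candidate $b\in J$ of minimum final score within $J$ together with a set $A\subseteq C\setminus J$ of at least $m-k$ candidates forced strictly below $b$ (an equivalent encoding of the same winning-committee characterization), and your explicit handling of strict inequalities with rational SAV/NSAV coefficients addresses a point the paper leaves implicit.
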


\begin{proof}
Let $\varphi$ be a polynomial computable additive rule, and let $I=(C, V, U, k, J, \ell_{\text{AV}}, \ell_{\text{DV}})$ be an instance of {\probb{CCADV}{$\varphi$}}. Let $m=\abs{C}$.
To solve the instance, we first guess a subset~$A\subseteq C\setminus J$ of at least~$m-k$ candidates and a candidate~$b\in J$. The guessed candidate~$b$ is expected to be a candidate in~$J$ with the minimum~$\varphi$ score in the final election among all candidates in~$J$. In addition, the guessed candidates in~$A$ are expected to be the candidates not in~$J$ and have strictly smaller scores than that of~$b$ in the final election. To be more precise, we split the given instance~$I$ into {\fpt}-many subinstances, each of which takes as input~$I$, a subset $A\subseteq C\setminus J$ such that $\abs{A}\geq m-k$, and a candidate~$b\in J$, and asks whether there exist $V'\subseteq V$ and $U'\subseteq U$ such that $\abs{V'}\leq \ell_{\text{DV}}$, $\abs{U'}\leq \ell_{\text{AV}}$, and in the election $(C, (V\setminus V')\cup U')$ the candidate~$b$ has the smallest~$\varphi$ score among all candidates in~$J$, and all candidates in~$A$ have strictly smaller~$\varphi$ scores than that of~$b$. It is easy to see that~$I$ is a {\yesins} if and only if at least one of the subinstances is a {\yesins}.

In the following, we show how to solve a subinstance associated with guessed~$b$ and~$A$ in {\fpt}-time by giving an ILP formulation with a bounded number of variables.

For each subset~$S\subseteq C$, we create two integer variables~$x_{S}$ and~$y_{S}$. So, there are~$2^{m+1}$ variables in total.
The variables~$x_{S}$ and~$y_{S}$ indicate respectively the number of deleted votes from~$\vaes{V}{S}$ and the number of votes added from~$\vaes{U}{S}$ in a certain desired feasible solution.
Recall that~$\vaes{V}{S}$ and~$\vaes{U}{S}$ are respectively the multiset of votes from~$V$ and the multiset of votes from~$U$ that approve exactly the candidates in~$S$. 
Let $E=(C, V)$, and for each $S\subseteq C$, let $E^S=(C, \{S\})$ be the election which contains only one vote approving exactly the candidates in~$S$.
For each~$c\in C$, let
\[\textsf{sc}(c) = \score{V}{c}{E}{\varphi} - \sum_{S\subseteq C} \score{S}{c}{E^S}{\varphi} \cdot x_{S}+\sum_{S\subseteq C} \score{S}{c}{E^S}{\varphi} \cdot y_{S}.\]

The constraints are as follows:
\begin{enumerate}
\item[(1)] As we add at most~$\ell_{\text{AV}}$ unregistered votes and delete at most~$\ell_{\text{DV}}$ votes, we have that
$\sum_{S\subseteq C}x_{S}\leq \ell_{\text{AV}}$ and $\sum_{S\subseteq C}y_{S}\leq \ell_{\text{DV}}$.

\item[(2)] For each $S\subseteq C$, we naturally have~$0\leq x_{S}\leq \abs{\vaes{V}{S}}$ and $0\leq y_{S}\leq \abs{{\vaes{U}{S}}}$.

\item[(3)] To ensure that~$b$ has the minimum~$\varphi$ score among all candidates in~$J$, for each $b'\in J\setminus \{b\}$,  we have that $\textsf{sc}(b')\geq \textsf{sc}(b)$.

\item[(4)] To ensure that all candidates from~$A$ have  scores strictly smaller than that of~$b$, for each~$a\in A$, we have that~$\textsf{sc}(a)<\textsf{sc}(b)$.
\end{enumerate}
This ILP can be solved in {\fpt}-time with respect to~$m$~\cite{DBLP:journals/mor/Lenstra83}. As we have at most $m\cdot 2^m$ subinstances to solve, the whole algorithm runs in {\fpt}-time in~$m$.
%
\end{proof}

For some nonadditive rules considered in the paper, we can obtain similar results.

\begin{theorem}
\label{thm-ccadv-abccv-pav-fpt-m}
For each $\omega$-Thiele rule~$\varphi$, {\probb{CCADV}{$\varphi$}} is {\fpt} with respect to the number of candidates.
\end{theorem}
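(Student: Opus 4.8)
The plan is to follow the same ILP-based strategy as in the proof of Theorem~\ref{thm-ccav-ccdv-av-sav-nsav-fpt-candidates}, but to guess a \emph{target committee} rather than a threshold candidate, since for an $\omega$-Thiele rule the score of a committee is not determined by the individual scores of its members. Let $I=(C, V, U, k, J, \ell_{\text{AV}}, \ell_{\text{DV}})$ be an instance of {\probb{CCADV}{$\varphi$}} and let $m=\abs{C}$. I would enumerate every $k$-committee $\w^{\star}\subseteq C$ with $J\subseteq \w^{\star}$; there are at most $\binom{m}{k}\leq 2^m$ such committees. For each guessed $\w^{\star}$, the corresponding subinstance asks whether we can delete at most $\ell_{\text{DV}}$ votes from $V$ and add at most $\ell_{\text{AV}}$ votes from $U$ so that $\w^{\star}$ attains a strictly larger $\varphi$ score than every $k$-committee not containing $J$. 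The instance $I$ is a {\yesins} if and only if at least one subinstance is a {\yesins}: if some subinstance succeeds, then $\w^{\star}$ strictly beats every committee omitting $J$, so no winning $k$-committee can omit $J$ and all winners contain $J$; conversely, if $I$ is a {\yesins}, take any winning $k$-committee as $\w^{\star}$, which contains $J$, attains the maximum score, and hence strictly exceeds the score of every committee omitting $J$.

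The key observation that makes each subinstance expressible as an ILP with few variables is that the score of any \emph{fixed} committee depends on the solution only through the multiplicities of vote types. For each $S\subseteq C$ I would introduce integer variables $x_S$ and $y_S$ counting, respectively, the number of deleted votes from $\vaes{V}{S}$ and the number of added votes from $\vaes{U}{S}$; this gives $2\cdot 2^m$ variables. Writing $E^S=(C,\{S\})$ for the election with a single vote approving exactly $S$, every vote approving exactly $S$ contributes the constant $\score{S}{\w}{E^S}{\varphi}=\omega(\abs{S\cap \w})$ to the score of a committee $\w$. Hence, in the final election the score of $\w$ is the \emph{linear} form
\[
\textsf{sc}(\w)=\sum_{S\subseteq C}\bigl(\abs{\vaes{V}{S}}-x_S+y_S\bigr)\cdot \score{S}{\w}{E^S}{\varphi},
\]
which is linear in the variables because $\score{S}{\w}{E^S}{\varphi}$ is a constant once $S$ and $\w$ are fixed.

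The constraints of the ILP for a guessed $\w^{\star}$ are then: the range constraints $0\leq x_S\leq \abs{\vaes{V}{S}}$ and $0\leq y_S\leq \abs{\vaes{U}{S}}$ for every $S\subseteq C$; the budget constraints $\sum_{S\subseteq C}x_S\leq \ell_{\text{DV}}$ and $\sum_{S\subseteq C}y_S\leq \ell_{\text{AV}}$; and the winning constraints $\textsf{sc}(\w^{\star})>\textsf{sc}(\w)$ for every $k$-committee $\w$ with $J\not\subseteq \w$. Since there are at most $2^m$ such $\w$, this is {\fpt}-many constraints. The strict inequalities are harmless: as the relevant values $\omega(0),\dots,\omega(k)$ are rational, multiplying through by a common denominator turns each $\textsf{sc}(\cdot)$ into an integer-valued linear form, so $\textsf{sc}(\w^{\star})>\textsf{sc}(\w)$ can be replaced by the non-strict integral inequality obtained after adding $1$ to the smaller side.

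Finally, each ILP has $2^{m+1}$ variables, so by Lenstra's algorithm~\cite{DBLP:journals/mor/Lenstra83} it can be solved in time {\fpt} in $m$; as we solve at most $\binom{m}{k}\leq 2^m$ subinstances, the whole procedure runs in {\fpt}-time with respect to $m$. I expect the only delicate point to be the correctness argument for guessing a single target committee $\w^{\star}$, namely verifying that ``$\w^{\star}$ strictly beats every committee omitting $J$'' is exactly equivalent to ``$J$ lies in all winning $k$-committees'', together with the bookkeeping needed to render the strict score comparisons as integral ILP constraints.
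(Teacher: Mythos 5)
Your proposal is correct, and it uses the same core machinery as the paper's proof: vote-type variables $x_S, y_S$ for every $S\subseteq C$, range and budget constraints, the observation that the $\varphi$ score of a \emph{fixed} committee is linear in these variables, and Lenstra's algorithm. Where you genuinely differ is the decomposition. The paper enumerates nonempty collections $\mathcal{W}\subseteq \mathcal{C}_{k,C}(J)$ that are supposed to be \emph{exactly} the set of winning $k$-committees of the final election, so its ILP carries two families of constraints: equalities forcing all committees in $\mathcal{W}$ to tie, and strict inequalities forcing them to beat every $k$-committee outside $\mathcal{W}$. You instead guess a single target committee $\w^{\star}\supseteq J$ and require only that it strictly beat every $k$-committee omitting $J$; your equivalence argument is sound (if such a $\w^{\star}$ exists, no committee omitting $J$ can attain the maximum score, hence all winners contain $J$; conversely, any winner of a successful final election serves as $\w^{\star}$), and it is indeed harmless that other committees containing $J$ may outscore $\w^{\star}$. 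This buys two concrete improvements: the number of subinstances drops from up to $2^{\abs{\mathcal{C}_{k,C}(J)}}$ (doubly exponential in $m$) to $\binom{m}{k}\leq 2^m$, and each ILP sheds the tie-equality constraints, so your variant is both simpler and faster while establishing the same {\fpt} classification. One caveat applies to both proofs equally: the theorem is stated for arbitrary $\omega:\mathbb{N}\rightarrow\mathbb{R}$, whereas Lenstra's algorithm (and your denominator-clearing trick for turning strict score comparisons into non-strict integral ones) needs $\omega(0),\dots,\omega(k)$ to be rational, or at least effectively encodable; you flag this assumption explicitly, while the paper leaves it implicit.
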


\begin{proof}
Let $I=(C, V, U, k, J, \ell_{\text{AV}}, \ell_{\text{DV}})$ be an instance of {\probb{CCADV}{$\varphi$}}. Let $m=\abs{C}$.
We derive {\fpt}-algorithms for the problems stated in the theorem  based on Lenstra's theorem on ILP.
First, we compute~$\mathcal{C}_{k, C}(J)$ which can be done in time~$\bigo{2^m}$.
%
Then, we split the instance~$I$ into {\fpt}-many subinstances each taking as input~$I$ and a nonempty $\mathcal{W}\subseteq \mathcal{C}_{k, C}(J)$. In particular,~$\mathcal{W}$ is supposed to be exactly the collection of all winning $k$-committees of~$\varphi$ at the final election. The question of the subinstance is whether there exist~$U'\subseteq U$ and $V'\subseteq V$  so that $\abs{U'}\leq \ell_{\text{AV}}$, $\abs{V'}\leq \ell_{\text{DV}}$, and  at the  election~$(C, (V\setminus V')\cup U')$ all $k$-committees in~$\mathcal{W}$ have the same~$\varphi$ score which is higher than that of any $k$-committee not in~$\mathcal{W}$. Clearly,~$I$ is a {\yesins} if and only if at least one of the subinstances is a {\yesins}.

In the following, we give an ILP formulation for the subinstance associated with a nonempty~$\mathcal{W}\subseteq \mathcal{C}_{k, C}(J)$. Similar to the proof of Theorem~\ref{thm-ccav-ccdv-av-sav-nsav-fpt-candidates}, for each $S\subseteq C$, we create two integer variables~$x_{S}$ and~$y_{S}$.
Regarding the constraints, we first adopt the constraints described in~(1) and~(2) in the proof of Theorem~\ref{thm-ccav-ccdv-av-sav-nsav-fpt-candidates}. Then, we create the following constraints.
%
%
%
For each $k$-committee~$\w$, let ${\textsf{sc}}(\w)$ denote the~$\varphi$ score of~$\w$ with respect to the multiset~$V$ of registered votes, i.e., $\textsf{sc}(\w)=\score{V}{\w}{(C, V)}{\varphi}$.
To ensure that all $k$-committees from~$\mathcal{W}$ have the same score in the final election, for every $w, w'\in \mathcal{W}$, we require
\[{\sf{sc}}(\w)+\sum_{\substack{\w\cap S\neq\emptyset\\ S\subseteq C}} \omega(\abs{w\cap {S}})\cdot (y_{S} - y_{S}) =
{\sf{sc}}(\w')+\sum_{\substack{\w'\cap S\neq\emptyset\\ S\subseteq C}} \omega(\abs{\w'\cap S})\cdot (y_{S}-x_{S}).\]
(If~$\mathcal{W}$ consists of only one $k$-committee, we do not create such constraints.)

Finally, to ensure that the~$\varphi$ score of every $k$-committee in~$\mathcal{W}$ is higher than that of any $k$-committee not in~$\mathcal{W}$, we fix a $k$-committee~$\w$ in~$\mathcal{W}$, and for every $k$-committee~$w'$ not in~${\mathcal{W}}$, we require
\[{\sf{sc}}(w)+\sum_{\substack{w\cap S\neq\emptyset\\ S\subseteq C}} \omega(\abs{w\cap S})\cdot (y_{S}-x_{S}) >
{\sf{sc}}(w')+\sum_{\substack{w'\cap S\neq\emptyset\\ S\subseteq C}} \omega(\abs{\w'\cap S})\cdot (y_{S}-x_{S}).\]

As we have $2^{m+1}$ variables, by a theorem of Lenstra~\cite{DBLP:journals/mor/Lenstra83}, the above ILP can be solved in {\fpt}-time in~$m$.
\end{proof}

Now we present the algorithms for MAV. In the proof of Theorem~\ref{thm-ccav-ccdv-mav-polynomial-time-solvable-k-constant}, we presented a polynomial-time algorithm for {\probb{CCDV}{MAV}} for~$k$ being a constant. The algorithm runs in time $\bigos{m^k}=\bigos{m^m}$, and hence the following corollary holds.

\begin{corollary}
\label{cor-ccdv-mav-fpt-m}
{\probb{CCDV}{MAV}} is {\fpt} with respect to the number of candidates.
\end{corollary}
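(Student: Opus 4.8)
The plan is to obtain the result as an immediate consequence of Theorem~\ref{thm-ccav-ccdv-mav-polynomial-time-solvable-k-constant}, whose algorithm solves {\probb{CCDV}{MAV}} in time $\bigos{m^k}$, where $m=\abs{C}$ is the number of candidates and $k$ is the committee size. The only thing I would observe is that the bound $k\leq \abs{C}=m$ is built into the definition of {\probb{CCDV}{$\varphi$}}, so the exponent $k$ appearing in the running time is itself bounded by the parameter $m$. No new algorithmic idea is required: the very same split of the instance into subinstances indexed by a threshold value $x\leq m$ and a $k$-committee $\w\in \mathcal{C}_{k, C}(J)$ from the proof of Theorem~\ref{thm-ccav-ccdv-mav-polynomial-time-solvable-k-constant} is reused verbatim, and the number $\binom{m}{k}$ of candidate $k$-committees enumerated there is at most $2^m$.

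Concretely, I would argue that $\bigos{m^k}$ is bounded from above by $\bigos{m^m}$, since $k\leq m$. The factor $m^m$ is a computable function of~$m$ alone, so the total running time is of the form $\bigo{f(m)\cdot \abs{I}^{\bigo{1}}}$ with $f(m)=m^m$, which matches the definition of an {\fpt}-algorithm with respect to the parameter~$m$ given in the Preliminaries. That is the entire content of the corollary; the argument is purely a matter of reinterpreting an already-established running-time bound under the parameterization by~$m$ instead of fixing~$k$ as a constant.

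Since the derivation only rephrases an existing bound, there is essentially no obstacle to overcome. The single (minor) point worth stating explicitly is the inequality $k\leq m$, which is what makes $m^k$ collapse into a function of~$m$ only. It is also worth emphasizing what this argument does \emph{not} give: it establishes fixed-parameter tractability only for the larger parameter~$m$, and sheds no light on whether the $\bigos{m^k}$ algorithm can be improved to an {\fpt}-algorithm in~$k$ itself, which the remark following Theorem~\ref{thm-ccav-ccdv-mav-polynomial-time-solvable-k-constant} leaves open.
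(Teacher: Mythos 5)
Your proposal is correct and is exactly the paper's own argument: the paper derives the corollary by observing that the $\bigos{m^k}$ algorithm from Theorem~\ref{thm-ccav-ccdv-mav-polynomial-time-solvable-k-constant} runs in time $\bigos{m^k}=\bigos{m^m}$ since $k\leq m$, which is {\fpt} in the number of candidates. Nothing is missing, and your explicit remark that $k\leq m$ is the only point needing justification matches the paper's (implicit) reasoning.
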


For {\probb{CCAV}{MAV}}, we present a natural {\fpt}-algorithm with respect to the same parameter.

\begin{theorem}
\label{thm-ccav-mav-fpt-m}
{\probb{CCAV}{MAV}} is {\fpt} with respect to the number of candidates.
\end{theorem}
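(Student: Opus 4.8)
The plan is to guess the MAV score that the winning $k$-committees will attain together with a committee $w\supseteq J$ witnessing it, and then to decide the remaining ``covering'' task by an integer linear program whose number of variables is bounded in $m=\abs{C}$, so that Lenstra's algorithm~\cite{DBLP:journals/mor/Lenstra83} applies. The enabling observation is that adding votes never decreases the MAV score of any committee, since the score is a maximum of Hamming distances and adding votes only enlarges the set over which this maximum is taken. For a given $U'\subseteq U$, let $s_J$ (respectively, $s_{\bar J}$) denote the minimum MAV score over $(C, V\cup U')$ of a $k$-committee containing~$J$ (respectively, not containing~$J$). As the winning $k$-committees are exactly those of minimum MAV score, $J$ is contained in all winning $k$-committees if and only if $s_J<s_{\bar J}$, and this holds if and only if there is an integer $a\in\{0,1,\dots,2m\}$ with $s_J\leq a$ and $s_{\bar J}\geq a+1$ (here $2m$ upper bounds any Hamming distance between a $k$-committee and a vote).

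First I would enumerate the threshold $a\in\{0,\dots,2m\}$ and a committee $w\in \mathcal{C}_{k, C}(J)$, intended to realize $s_J\leq a$; there are $\bigo{m\cdot 2^m}$ such guesses, and $\mathcal{C}_{k, C}(J)$ is computable in time $\bigo{2^m}$. For a fixed pair $(a, w)$ I discard the guess unless every registered vote is within Hamming distance~$a$ of~$w$, since otherwise~$w$ cannot attain MAV score at most~$a$ no matter what is added. It then remains to decide whether there is $U'\subseteq U$ with $\abs{U'}\leq\ell$ such that (i) every added vote is within distance~$a$ of~$w$, so~$w$ keeps MAV score at most~$a$, and (ii) every $w'\in \overline{\mathcal{C}_{k, C}}(J)$ reaches MAV score at least $a+1$. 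A committee~$w'$ that already has a registered vote at distance at least $a+1$ needs no further attention; each remaining~$w'$ must receive an added vote at distance at least $a+1$. Using monotonicity, a short forward/backward argument shows that the original instance is a {\yesins} if and only if this covering task is feasible for some guess $(a, w)$: taking $a=s_J$ converts any successful $U'$ into a valid guess, and conversely any $U'$ solving the covering task forces $s_J\leq\textsf{MAV}(w)\leq a<a+1\leq s_{\bar J}$.

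To solve the covering task for fixed $(a, w)$ I would write an ILP. Grouping the unregistered votes by type, for each $S\subseteq C$ introduce an integer variable~$y_S$ counting the added votes from~$\vaes{U}{S}$ (the votes of~$U$ approving exactly~$S$), giving at most $2^m$ variables. Impose $0\leq y_S\leq \abs{\vaes{U}{S}}$ for all~$S$, the budget constraint $\sum_{S\subseteq C} y_S\leq \ell$, the compatibility constraints $y_S=0$ whenever $\abs{w\setminus S}+\abs{S\setminus w}>a$, and, for each $w'\in \overline{\mathcal{C}_{k, C}}(J)$ not already covered by~$V$, the covering constraint $\sum_{S:\,\abs{w'\setminus S}+\abs{S\setminus w'}\geq a+1} y_S\geq 1$. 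Since the number of variables is bounded by a function of~$m$, Lenstra's algorithm~\cite{DBLP:journals/mor/Lenstra83} solves each ILP in {\fpt}-time in~$m$, and running it over all $\bigo{m\cdot 2^m}$ guesses gives an {\fpt}-algorithm in~$m$.

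The main obstacle is the clean reduction of the min--max winning condition to a tractable form: recognizing that adding votes is monotone for MAV scores, reformulating ``$J$ lies in all winners'' as the threshold inequality $s_J<s_{\bar J}$, and then observing that fixing a witness committee~$w$ simultaneously restricts admissible additions (to distance at most~$a$ from~$w$) and turns the requirement on the non-distinguished committees into a covering problem in which registered votes may already discharge some committees. Once this structure is in place, the remaining arguments are routine.
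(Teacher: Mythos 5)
Your proof is correct, but it takes a genuinely different route from the paper's. The paper's argument is purely combinatorial and much shorter: since the MAV score of a committee is a maximum of Hamming distances, duplicate votes are irrelevant, so the unregistered votes may be assumed pairwise distinct, giving $\abs{U}\leq 2^m$; one then brute-forces all subsets $U'\subseteq U$ with $\abs{U'}\leq\ell$ (at most $2^{2^m}$ of them) and, for each, checks directly---by enumerating all $k$-committees---whether every MAV winning $k$-committee of $(C, V\cup U')$ contains~$J$. You instead characterize success by the threshold inequality $s_J<s_{\bar{J}}$, guess the threshold~$a$ together with a witness committee $w\supseteq J$, use the monotonicity of MAV scores under vote addition to restrict attention to added votes within distance~$a$ of~$w$, and solve the resulting covering problem over vote types by an ILP with at most~$2^m$ variables via Lenstra's theorem~\cite{DBLP:journals/mor/Lenstra83}; your correctness argument (forward direction via $a=s_J$ and a winning witness, backward direction via the compatibility and covering constraints) is sound, and the edge case $\overline{\mathcal{C}_{k, C}}(J)=\emptyset$ is handled vacuously. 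Neither approach dominates in running time: the paper's enumeration costs roughly $\bigos{2^{2^m}}$, and Lenstra on~$2^m$ variables is likewise doubly exponential in~$m$, so both merely establish fixed-parameter tractability. What your approach buys is methodological uniformity and extensibility: it matches the Lenstra-based template the paper uses for the other rules (Theorems~\ref{thm-ccav-ccdv-av-sav-nsav-fpt-candidates} and~\ref{thm-ccadv-abccv-pav-fpt-m}), and it adapts with little change to vote deletion or combined addition/deletion for MAV, settings where the deduplication trick breaks down (to neutralize a duplicated registered vote one must delete all of its copies, and $\abs{V}$ is not bounded by any function of~$m$). What the paper's approach buys is simplicity: no ILP machinery, no threshold analysis, and a two-line correctness argument.
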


\begin{proof}
%
Let $I=(C, V, U, k, J, \ell)$ be an instance of {\probb{CCAV}{MAV}}.
Observe that if two unregistered votes approve exactly the same candidates, we can remove any of them without changing the answer to the instance. In light of this observation, we assume that all unregistered votes are distinct, and thus $\abs{U}\leq 2^m$. We enumerate all subsets $U'\subseteq U$ of cardinality at most~$\ell$, and check if all MAV winning $k$-committees of the election $(C, V\cup U')$ contain~$J$ (this can be done in $\bigos{2^m}$ time by enumerating all $k$-committees). If this is the case for at least one of the enumerations, we conclude that~$I$ is a {\yesins}; otherwise, we conclude that~$I$ is a {\noins}.
\end{proof}

Finally, we present color-coding based {\fpt}-algorithms for control by modifying the candidate set when parameterized by the number of voters plus the number of added or deleted candidates. As a matter of fact,  our algorithm is for a natural combination of {\probb{CCAC}{$\varphi$}} and {\probb{CCDC}{$\varphi$}}, formally defined below.

\EP
{Constructive Control by Adding and Deleting Candidates for~$\varphi$}{\probb{CCADC}{$\varphi$}}
{Two disjoint sets~$C$ and~$D$ of candidates, a multiset~$V$ of votes over~$C\cup D$, a positive integer $k\leq \abs{C}$, a nonempty subset $J\subseteq C$ of at most~$k$ distinguished candidates, and two nonnegative integers~$\ell_{\text{DC}}$ and~$\ell_{\text{AC}}$ such that ~$\ell_{\text{DC}}\leq \abs{C}$ and $\ell_{\text{AC}}\leq \abs{D}$.}
{Are there~$C'\subseteq C$ and~$D'\subseteq D$ such that $\abs{C'}\leq \ell_{\text{DC}}$, $\abs{D'}\leq \ell_{\text{AC}}$, and~$J\subseteq \w$ for all $\w\in \varphi(C\setminus C'\cup D', V, k)$?}

The color-coding technique was first used to derive an {\fpt}-algorithm for the~$k$-{\prob{Path}} problem~\cite{DBLP:journals/jacm/AlonYZ95}. At a high level, this technique first randomly colors the ``units'' in the solution space with~$k$ different colors, and then utilizes dynamic programming to find a colored solution. Thanks to a theory on perfect hash functions, such a randomized algorithm can be derandomized without sacrificing the fixed-parameter tractability. In our problems, the solution space are collections of subsets of candidates. Hence, we first randomly color the candidates, and then we explore a certain solution where no two candidates have the same color.


To describe our algorithm formally, we need the following notions.
For a universe~$X$ and a positive integer $\kappa\leq \abs{X}$, an $(X, \kappa)$-perfect class of hash functions is a set of functions $f_i: X\rightarrow [\kappa]$, $i\in [t]$, where~$t$ is an integer, such that for every $\kappa$-subset~$A\subseteq X$, there exists at least one~$f_i$, $i\in [t]$, such that $\bigcup_{a\in A}f_i(a)=[\kappa]$. It is known that there always exists an $(X, \kappa)$-perfect class of hash functions of cardinality at most~$g(\kappa)$ where~$g$ is a function in~$\kappa$ and, moreover, such functions can be constructed in {\fpt}-time in~$\kappa$~\cite{DBLP:journals/jacm/AlonYZ95}.

Our algorithms hinge upon algorithms for  {\probb{$J$-CC}{$\varphi$}} running in {\fpt}-times in the number of voters.

\begin{theorem}
\label{thm-J-CC-FPT-n}
For $\varphi\in \{\memph{\text{ABCCV}}, \memph{\text{PAV}}, \memph{\text{MAV}}\}$, {\probb{$J$-CC}{$\varphi$}} is {\fpt} with respect to the number of voters.
\end{theorem}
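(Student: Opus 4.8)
The plan is to exploit the fact that, with only $n=\abs{V}$ voters, the candidates fall into at most $2^n$ \emph{types}, where two candidates $c,c'$ share a type precisely when $V(c)=V(c')$, i.e.\ they are approved by exactly the same votes. For each of the three rules the score of a committee~$\w$ depends only on the quantities $\abs{v\cap \w}=\sum_{T\ni v}z_T$ for $v\in V$, where $z_T$ is the number of type-$T$ candidates placed in~$\w$; hence candidates of a common type are fully interchangeable and the whole score is a function of the integer vector $(z_T)_{T\subseteq V}$ alone. So first I would group the candidates into the (at most $2^n$) types, record the multiplicities $m^{\star}_V(T)$, and reduce the search over $k$-committees to the search over feasible type-vectors $0\le z_T\le m^{\star}_V(T)$ with $\sum_T z_T=k$.

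The core subroutine is to compute, in {\fpt}-time in~$n$, the optimal committee score. For MAV this is immediate: minimise an auxiliary variable~$s$ subject to $s\ge k+\abs{v}-2\sum_{T\ni v}z_T$ for every $v\in V$, an integer linear program with at most $2^n+1$ variables, solvable by Lenstra's algorithm~\cite{DBLP:journals/mor/Lenstra83}. For ABCCV I would add a binary variable~$y_v$ per vote with $y_v\le \sum_{T\ni v}z_T$ and maximise $\sum_v y_v$, which at the optimum forces $y_v=\min(1,\abs{v\cap \w})=\omega(\abs{v\cap \w})$, again a bounded-variable ILP. The main obstacle is PAV, whose objective $\sum_v \omega(\abs{v\cap \w})$ with $\omega(i)=\sum_{j=1}^i 1/j$ is nonlinear in the $z_T$'s and so cannot be made a linear objective directly. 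The observation that resolves this is that $\omega$ is concave and piecewise linear, hence equals the pointwise minimum of the affine extensions $L_i(a)=\omega(i-1)+(a-(i-1))/i$ of its segments; I would therefore introduce one variable~$t_v$ per vote, impose the linear constraints $t_v\le L_i(\sum_{T\ni v}z_T)$ for $i\in\{1,\dots,k\}$, and maximise $\sum_v t_v$. At any integral point this forces $t_v=\omega(\abs{v\cap \w})$ (using $\abs{v\cap \w}\le k$), so the program computes the PAV optimum while using only $2^n+n$ \emph{variables}; the constraint count~$nk$ is polynomial in the input and does not affect the applicability of Lenstra's theorem, which is {\fpt} in the number of variables. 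Thus all three rules reduce to ILPs whose optimal score is computable in {\fpt}-time in~$n$.

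Finally I would assemble the decision for {\probb{$J$-CC}{$\varphi$}}. Let $s^{\star}$ be the optimal score. For each distinguished candidate $c\in J$ of type~$T_c$, recompute the optimum under the additional bound $z_{T_c}\le m^{\star}_V(T_c)-1$, calling the result~$s_c$. By interchangeability, a winning committee omits~$c$ exactly when the optimum is attainable with $z_{T_c}<m^{\star}_V(T_c)$, that is, when $s_c=s^{\star}$; equivalently, if no winning committee has $z_{T_c}<m^{\star}_V(T_c)$ then every winning committee uses all of type~$T_c$ and in particular contains~$c$. Hence~$J$ lies in every winning $k$-committee if and only if $s_c\ne s^{\star}$ for all $c\in J$ (for the maximising rules ABCCV and PAV this reads $s_c<s^{\star}$, and for the minimising rule MAV it reads $s_c>s^{\star}$). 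Since $\abs{J}\le k\le \abs{C}$, this amounts to polynomially many ILP calls, each running in {\fpt}-time in~$n$, so the whole procedure is {\fpt} with respect to the number of voters.
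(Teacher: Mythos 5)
Your approach is correct and is, in substance, the same as the paper's: both proofs group candidates into the at most~$2^n$ types $C^{\star}_V(V')$, both rest on the observation that same-type candidates are interchangeable in optimal committees (this is exactly the paper's Claim~\ref{claim-a}), and both reduce the resulting questions to integer programs with $O(2^n)$ variables solved via Lenstra's theorem~\cite{DBLP:journals/mor/Lenstra83}. The organizational differences are minor. The paper obtains the optimal score~$s$ by citing known {\fpt} winner-determination results, and then, for each type class $\widetilde{V}\in\mathcal{V}(J)$ containing a distinguished candidate, writes a program asking whether score~$s$ is attainable while using strictly fewer than $m^{\star}(\widetilde{V})$ candidates of that type; you instead re-optimize once per $c\in J$ under the cap $z_{T_c}\leq m^{\star}_V(T_c)-1$ and compare $s_c$ with $s^{\star}$. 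These two tests are equivalent, and your assembly is arguably cleaner and self-contained.

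The one point that needs repair is the PAV formulation. Since $\omega(i)=\sum_{j=1}^{i}1/j$ takes non-integer values, the auxiliary variables~$t_v$ cannot be integer variables: with integer~$t_v$, maximizing $\sum_{v}t_v$ subject to $t_v\leq L_i\bigl(\sum_{T\ni v}z_T\bigr)$ forces $t_v=\bigl\lfloor \omega(\abs{v\cap \w})\bigr\rfloor$ at optimality, so the program maximizes $\sum_{v}\bigl\lfloor \omega(\abs{v\cap \w})\bigr\rfloor$, which is not the PAV score and can return a wrong optimum and wrong set of optimal type-vectors. Your own justification (``at any integral point this forces $t_v=\omega(\abs{v\cap \w})$'') is valid only if the~$t_v$ range over the rationals, in which case the program is a \emph{mixed} integer program; you then need the mixed version of Lenstra's theorem, which is indeed {\fpt} in the number of \emph{integer} variables, so the claim survives. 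Alternatively, scale all score coefficients by $\mathrm{lcm}(1,\dots,k)$ (which has polynomially many bits) to make every $\omega$-value integral. The paper sidesteps this issue by delegating to the IPWSPLT framework of Bredereck et al.~\cite{DBLP:journals/tcs/BredereckFNST20}, which is designed precisely for concave piecewise-linear objectives; your hand-rolled linearization is the same idea and is sound once the integrality status of the~$t_v$ is stated explicitly.
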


For not distracting the reader, we defer to Appendix the proof of Theorem~\ref{thm-J-CC-FPT-n}.
At a high level, our algorithms first compute the optimal score~$s$ of winning $k$-committees which can be done in {\fpt}-time in the number of voters for all concrete rules considered in the paper~\cite{DBLP:journals/jair/BetzlerSU13,DBLP:conf/atal/MisraNS15,DBLP:journals/aamas/YangW23}. Having this optimal score~$s$, the question is then whether there exists at least one $k$-committee which does not contain~$J$ and has score at least (ABCCV and PAV) or at most (MAV)~$s$. Obviously, the {\probb{$J$-CC}{$\varphi$}} instance is a {\yesins} if and only if we have at least one ``{\yes}'' answer. We show that this question can be answered in {\fpt}-time by giving ILP formulations, analogous to those for solving the winner determination problems for these rules studied in~\cite{DBLP:journals/jair/BetzlerSU13,DBLP:conf/atal/MisraNS15,DBLP:journals/aamas/YangW23}.

Empowered with Theorem~\ref{thm-J-CC-FPT-n}, we are ready to present our {\fpt}-algorithms for {\probb{CCADC}{$\varphi$}}.

\begin{theorem}
\label{thm-ccac-ccdc-many-rules-fpt-ell-plus-n}
For $\varphi\in \{\memph{\text{SAV}}, \memph{\text{NSAV}}, \memph{\text{ABCCV}}, \memph{\text{PAV}},  \memph{\text{MAV}}\}$, \probb{CCADC}{$\varphi$} is {\fpt} with respect to the combined parameter~$\ell_{\memph{\text{AC}}}+\ell_{\memph{\text{DC}}}+n$, where~$n$ is the number of votes.
\end{theorem}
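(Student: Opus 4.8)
The plan is to combine the fixed-parameter tractability of {\probb{$J$-CC}{$\varphi$}} in the number of voters (Theorem~\ref{thm-J-CC-FPT-n} for ABCCV, PAV, and MAV; and for the additive rules SAV and NSAV the fact that {\probb{$J$-CC}{$\varphi$}} is polynomial-time solvable, since one only needs to test $J\subseteq \swinshort{E}$) with a color-coding search that restricts the candidates to be added and deleted. The starting observation is that, since there are only~$n$ votes, every candidate is fully described by its \emph{approval vector} in $\{0,1\}^n$, i.e., by the set of votes approving it; hence there are at most~$2^n$ distinct candidate \emph{types}, and two candidates of the same type are interchangeable in every committee-score computation. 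Consequently the final election $((C\setminus C')\cup D', V)$ is, up to an isomorphism preserving~$J$, determined by the number of present candidates of each type together with the (fixed) type distribution of~$J$. Since any feasible solution modifies at most $\ell:=\ell_{\text{AC}}+\ell_{\text{DC}}$ candidates, the vector recording the net change, per type, in the number of present non-distinguished candidates has $\ell_1$-norm at most~$\ell$ and support on at most~$\ell$ of the~$2^n$ types, so there are only {\fpt}-many such change vectors in $n+\ell$.

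First I would classify the candidates of $C\cup D$ into their at most~$2^n$ types, recording for each type how many candidates lie in $C\setminus J$, in~$J$, and in~$D$. Next, using a perfect class of hash functions with~$\ell$ colors on $C\cup D$, I color the candidates so that for at least one coloring the (unknown) modified set $C'\cup D'$ is rainbow; this family has size bounded by a function of~$\ell$ and is constructible in {\fpt}-time. For each coloring I then search for a colorful solution by a dynamic program over the color classes whose state is exactly the net-change vector above: processing the colors one at a time, for each color I either skip it or select one candidate of that color either to add (if it belongs to~$D$) or to delete (if it belongs to $C\setminus J$), updating the change vector while respecting the budgets~$\ell_{\text{AC}}$ and~$\ell_{\text{DC}}$. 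Because the reachable change vectors are {\fpt}-many, the table has {\fpt}-many entries and each transition is cheap.

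For each reachable change vector I build a \emph{representative} final election by realizing the prescribed number of present candidates of each type, which is feasible precisely when enough candidates of that type are available in $C\setminus J$ (for deletions) and in~$D$ (for additions); this is immediate from the type counts. I then invoke the {\probb{$J$-CC}{$\varphi$}} oracle on the representative, and declare the instance a {\yesins} if and only if at least one representative yields a {\yes} answer. Correctness rests on the fact that, since no member of~$J$ is ever deleted (otherwise $J\not\subseteq(C\setminus C')\cup D'$ and the control trivially fails) nor added, all members of~$J$ survive with their original types; hence the representative and the true final election are isomorphic as $J$-marked elections and have the same {\probb{$J$-CC}{$\varphi$}} answer. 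As {\probb{$J$-CC}{$\varphi$}} runs in time depending only on~$n$—the number of candidates enters its ILP formulation only through coefficients and bounds, not through the number of variables, which is governed by the~$\le 2^n$ types—each oracle call is {\fpt} in~$n$, and the entire procedure is {\fpt} in $n+\ell_{\text{AC}}+\ell_{\text{DC}}$.

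The main obstacle, and the reason the type bookkeeping must be handled with care, is the special status of~$J$: a candidate $a\in J$ is forced into \emph{all} winning $k$-committees only when no interchangeable twin of~$a$ survives, so deletions serve not merely to depress the scores of rival candidates but also to break the symmetry among candidates that share a type with a member of~$J$. Correctly separating~$J$ in the counts—guaranteeing that $J$-candidates are never deleted or duplicated, while twins of $J$-candidates (whether originally in~$C$ or introduced through~$D$) are accounted for within the deletion budget—is the delicate point. Once this is in place, the interchangeability of same-type candidates makes the isomorphism argument go through, and the reduction of the search space to {\fpt}-many change vectors, each resolvable by a single {\probb{$J$-CC}{$\varphi$}} call, completes the proof.
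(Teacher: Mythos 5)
Your proposal is correct and takes essentially the same route as the paper's proof: both rest on the observation that candidates approved by exactly the same votes are interchangeable (so only the at most~$2^n$ candidate types matter), both use perfect hash families to isolate the at most $\ell_{\text{AC}}+\ell_{\text{DC}}$ modified candidates, and both reduce the instance to {\fpt}-many calls to {\probb{$J$-CC}{$\varphi$}}, solved via Theorem~\ref{thm-J-CC-FPT-n} for ABCCV, PAV, and MAV and in polynomial time for SAV and NSAV. The only differences are organizational: the paper first guesses the exact solution sizes and uses two hash families (on~$C$ with $\ell'_{\text{DC}}$ colors and on~$D$ with $\ell'_{\text{AC}}$ colors), then guesses one type per color class, whereas you use a single family on $C\cup D$ and a dynamic program over net type-change vectors --- equivalent bookkeeping for the same enumeration.
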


\begin{proof}
We derive an algorithm for {\probb{CCADC}{$\varphi$}} as follows.
Let $I=(C, D, V, k, J, \ell_{\text{AC}}, \ell_{\text{DC}})$ be an instance of {\probb{CCADC}{$\varphi$}}. Let $n=\abs{V}$ be the number of votes.

First, we guess two nonnegative integers~$\ell'_{\text{AC}}\leq \ell_{\text{AC}}$ and $\ell'_{\text{DC}}\leq \ell_{\text{DC}}$. Each guessed pair $\{\ell'_{\text{AC}}, \ell'_{\text{DC}}\}$ corresponds to a subinstance of~$I$ which asks whether there is a subset $C'\subseteq C$ of exactly~$\ell'_{\text{DC}}$ candidates and a subset $D'\subseteq D$ of exactly~$\ell'_{\text{AC}}$ candidates so that in election restricted to $C\setminus C'\cup D'$ all candidates from~$J$ are in all winning $k$-committees. Obviously, there are polynomially many subinstances and, moreover,~$I$ is a {\yesins} if and only if at least one of the subinstances is a {\yesins}. To complete the proof, it suffices to show that we can solve each subinstance in polynomial time, which is the focus of the remainder of the proof.

Let~$\{\ell'_{\text{AC}}, \ell'_{\text{DC}}\}$ be a guessed pair of integers. We construct a $(C,\ell'_{\text{DC}})$-perfect class~$\mathcal{F}$ of hash functions whenever $\ell'_{\text{DC}}\geq 1$, and construct a $(D, \ell'_{\text{AC}})$-perfect class~$\mathcal{G}$ of hash functions whenever $\ell'_{\text{AC}}\geq 1$. According to~\cite{DBLP:journals/jacm/AlonYZ95},~$\mathcal{F}$ and~$\mathcal{G}$ can be constructed in {\fpt}-time in~$\ell'_{\text{AC}}+\ell'_{\text{DC}}$. Our algorithm considers all pairs of $(f, g)$ one by one where $f\in \mathcal{F}$ and $g\in \mathcal{G}$. If $\mathcal{F}= \emptyset$ (respectively, $\mathcal{G}= \emptyset$), our algorithms only considers functions in~$\mathcal{G}$ (respectively,~$\mathcal{F}$) one by one.

Let $(f, g)$ be a considered pair. For each~$i\in [\ell'_{\text{DC}}]$ and each $j\in [\ell'_{\text{AC}}]$, let~$C_i$ be the subset of candidates of~$C$ assigned the value~$i$ by~$f$, and let~$D_j$ be the subset of candidates of~$D$ assigned the value~$j$ by~$g$, i.e., $C_i=\{c\in C \setmid f(c)=i\}$ and $D_j=\{c\in D \setmid g(c)=j$\}.
We aim to explore a feasible solution of the subinstance corresponding to $\{\ell'_{\text{AC}}, \ell'_{\text{DC}}\}$ which contains exactly one candidate from each~$C_i$, $i\in [\ell'_{\text{DC}}]$, and contains exactly one candidate from each~$D_j$, $j\in [\ell'_{\text{AC}}]$. With respect to such a solution, if there are two candidates~$c$ and~$c'$ from the same set~$C_i$ or~$D_j$ such that the voters approving them are exactly the same, then~$c$ and~$c'$ are indistinguishable.
In view of this observation, we partition each~$C_i$ (respectively,~$D_j$) into~$2^n$ subsets~$\{C_i^{V'}\}_{V'\subseteq V}$ (respectively, $\{D_j^{V'}\}_{V'\subseteq V}$), so that~$C_i^{V'}$ (respectively,~$D_j^{V'}$) consists of all candidates $c\in C_i$  (respectively,~$c\in D_j$) such that $V(c)=V'$. By the above discussion, for each~$C_i$ (respectively,~$D_j$), it only matters which element from $\{C_i^{V'}\}_{V'\subseteq V}$ (respectively, $\{D_j^{V'}\}_{V'\subseteq V}$) intersects the feasible solution.
In light of this fact, for each~$C_i$ (respectively,~$D_j$), we guess a~$V^i\subseteq V$ (respectively,~$U^i\subseteq V$) such that~$C_i^{V^i}\neq \emptyset$ (respectively, $D_j^{U^j}\neq \emptyset$), which indicates that the feasible solution contains exactly one candidate in~$C_i$ which is from~$C_i^{V^i}$, and contains exactly one candidate in~$D_j$ which is from~$D_j^{U^j}$. As we have at most~$2^n$ choices for each~$C_i$ and each~$D_j$, and we have at most $\ell'_{\text{DC}}\leq \ell_{\text{DC}}$ many~$C_i$s and at most $\ell'_{\text{AC}}\leq \ell_{\text{AC}}$ many~$D_j$s to consider, there are in total at most $2^{n\cdot (\ell_{\text{AC}}+\ell_{\text{DC}})}$ combinations of guesses.
Each combination $\{\{C_i^{V^i}\}_{i\in [\ell'_{\text{DC}}]}, \{D_j^{U^j}\}_{j\in [\ell'_{\text{AC}}]}\}$ determines an instance $((C', V), k, J)$  of {\probb{$J$-CC}{$\varphi$}}, where~$C'$ is obtained from~$C$ by deleting any arbitrary candidate from~$C_i^{V^i}$ and including any arbitrary candidate from~$D_j^{U^j}$, for all $i\in [\ell'_{\text{DC}}]$ and all $j\in [\ell'_{\text{AC}}]$. Then, we check if the instance of {\probb{$J$-CC}{$\varphi$}} is a {\yesins}, which can be done in {\fpt}-time for ABCCV, MAV, and PAV with respect to~$n$ (see Theorem~\ref{thm-J-CC-FPT-n}), and can be trivially done in polynomial time for SAV and NSAV%
. If at least one of the no more than $2^{n\cdot (\ell_{\text{AC}}+\ell_{\text{DC}})}$ instances of {\probb{$J$-CC}{$\varphi$}} is a {\yesins}, the subinstance corresponding to~$\{\ell'_{\text{AC}}, \ell'_{\text{DC}}\}$ is a {\yesins}; otherwise, we consider the next pair $(f', g')$ where $f'\in \mathcal{F}$ and $g'\in \mathcal{G}$, if there are any.

If none of the pairs $(f, g)$ where $f\in \mathcal{F}$ and $g\in \mathcal{G}$ results in a conclusion that the subinstance corresponding to~$\{\ell'_{\text{AC}}, \ell'_{\text{DC}}\}$ is a {\yesins}, we conclude that the subinstance corresponding to~$\{\ell'_{\text{AC}}, \ell'_{\text{DC}}\}$ is a {\noins}.
\end{proof}

\section{Concluding Remarks}
\label{sec-conclusion}
In this paper, we have studied the complexity of several manipulation\onlyfull{, bribery,} and control problems for numerous sought-after {\abmv} rules, namely AV, SAV, NSAV, ABCCV, PAV, and MAV. We showed that these rules generally resist these strategy problems by giving many intractability results. However, it should be pointed out that our study is purely based on worst-case analysis. Whether these problems are difficult to solve in practice demands further investigations. In addition to the hardness results, we also derived several {\fpt}-algorithms with respect to natural parameters and polynomial-time algorithms for some special cases of these problems. We refer to Table~\ref{tab-results-summary} for a summary of our results.

{
There are a number of interesting topics for future research. First, in the control problems studied in this paper, the goal of the external agent
 is to include the given distinguished candidates into all winning $k$-committees. In real-world applications, a tie-breaking scheme is applied so that only one winning $k$-committee is selected.
 When the external agent knows which deterministic tie-breaking scheme is used, a more natural goal is to make the distinguished candidates  be included in the unique winning $k$-committee with respect to the tie-breaking scheme. It is interesting to see whether the complexity of the problems changes with respect to different tie-breaking schemes. Notably, for single-winner control problems, it has been observed that tie-breaking schemes may significantly change the complexity of the problems (see, e.g.,~\cite{DBLP:conf/aaai/AzizGMNW13,DBLP:conf/atal/FaliszewskiHS08,DBLP:conf/ecai/MatteiNW14,DBLP:conf/atal/ObraztsovaEH11}). Second, one could study faster or combinatorial {\fpt}-algorithms for {\fpt} problems studied in this paper. Third, for the {\nph} problems proved in the paper, it is natural to explore their approximation algorithms. In addition, in our manipulation problems, the manipulators are allowed to change their votes in any possible way. It is interesting to see if the complexity changes if manipulators' actions are restricted somehow, e.g., if they are only allowed to drop approved candidates or only allowed to approve some previously disapproved candidates. Fourth, it is interesting to see if the complexity of the problems changes when restricted to specific domains of dichotomous preferences. We refer to~\cite{DBLP:conf/ijcai/ElkindL15,DBLP:journals/corr/abs-2205-09092,DBLP:journals/iandc/FaliszewskiHHR11,DBLP:conf/ijcai/Yang19a} for the notions of several restricted domains.
 Finally, it has been shown that for single-winner voting rules, the problems of control by adding/deleting candidates are already {\nph} when there is only a constant number of voters~\cite{DBLP:journals/jair/ChenFNT17}. It is interesting to explore whether similar results hold for {\probb{CCAC}{$\varphi$}} and {\probb{CCDC}{$\varphi$}} for multiwinner voting rules.


\section*{Appendix}
In this appendix, we give the proofs for Theorem~\ref{thm-maniuplation-sav-nsav-polynomial-time-solvable-constant-number-manipulators} and Theorem~\ref{thm-J-CC-FPT-n}.
\bigskip

{\noindent{\bf{Theorem~\ref{thm-maniuplation-sav-nsav-polynomial-time-solvable-constant-number-manipulators}}}} {\textit{For $\varphi\in \{{\memph{SAV}}, {\memph{NSAV}}\}$, {\probb{\cbcm}{$\varphi$}} and {\probb{\sbcm}{$\varphi$}} are polynomial-time solvable if there are a constant number of manipulators.}}

\begin{proof}
Let $\varphi\in \{{\text{SAV}}, {\text{NSAV}}\}$.
We first present the algorithm for {\probb{\cbcm}{SAV}}. After this, we show how to modify the algorithm to make it work for {\probb{\cbcm}{NSAV}}, {\probb{\sbcm}{SAV}}, and {\probb{\sbcm}{NSAV}}.

Let $I=(C, V, V_{\text{M}}, \w)$ be an instance of {\probb{\cbcm}{$\varphi$}}, where~$\w$ is an SAV winning~$k$-committee of $(C, V\cup V_{\text{M}})$, and~$\varphi$ dentoes SAV. Let $t=\abs{V_{\text{M}}}$ be the number of manipulators, and let $m=\abs{C}$. Observe that if~$I$ is a {\yesins}, it admits a feasible solution~$V'$ so that $C^{\vee}(V')\subseteq C^{\vee}(V_{\text{M}})$.
Based on this observation, for each $S\subseteq V_{\text{M}}$ and each~$v\in V_{\text{M}}$, we guess the number~$m(S, v)$ of candidates from~$C_{V_{\text{M}}}^{\star}(S)$ that the manipulator corresponding to~$v$ turn to approve. For notational brevity, in the following, we use~$C^{\star}(S)$ to denote $C^{\star}_{V_{\text{M}}}(S)$, and let $m^{\star}(S)=\abs{C^{\star}(S)}$.
The number of different combinations of these guesses is $\bigo{(m+1)^{2^t\cdot t}}$ which is a polynomial in~$m$ given~$t$ being a constant. For a fixed combination of the guesses on~$m(S, v)$ for $S\subseteq V_{\text{M}}$ and $v\in V_{\text{M}}$, we then guess the winning-threshold~$s$ of SAV at the final election. As there are only a constant number of manipulators and for each manipulator we have guessed the total number of candidates the manipulator approves (i.e., $\sum_{S\subseteq V_{\text{M}}}m(S, v)$ for a manipulator whose true vote is~$v$),~$s$ may have only polynomially many different values. To be precise, for each $v\in V_{\text{M}}$, let $m_{\sum}(v)=\sum_{S'\subseteq V_{\text{M}}}m(S', v)$. Moreover, let $E=(C, V)$. Then, the values of~$s$ are from
\[\{\score{V}{c}{E}{\varphi} \setmid c\in C\}\cup \{\score{V}{c}{E}{\varphi}+\sum_{v\in S', m_{\sum}(v)>0}\frac{1}{m_{\sum}(v)} \setmid c\in C^{\vee}(V_{\text{M}}), S'\subseteq V_{\text{M}}\}.\] Each combination of the guesses on $m(S, v)$ for all $S\subseteq V_{\text{M}}$ and $v\in V_{\text{M}}$, and the winning-threshold~$s$ corresponds to a subinstance which determines if there exists a multiset~$V'$ of~$t$ votes, one-to-one corresponding  to~$V_{\text{M}}$, so that
\begin{enumerate}
\item[(1)] every $v\in V_{\text{M}}$ prefers all SAV winning $k$-committees of $(C,V\cup V')$ to $\w$,
\item[(2)] for every $v\in V_{\text{M}}$ and every $S\subseteq V_{\text{M}}$, the vote $v'\in V'$ corresponding to~$v$ approves exactly $m(S, v)$ candidates from~$C^{\star}(S)$, and
\item[(3)] the winning-threshold of SAV at $(C, V\cup V')$ is~$s$.
\end{enumerate}
Apparently, the original instance~$I$ is a {\yesins} if and only if at least one of the subinstances is a {\yesins}.
In addition, as discussed above, there are polynomially many subinstances to consider.
Therefore, to show that {\probb{\cbcm}{SAV}} is polynomial-time solvable, it suffices to give a polynomial-time algorithm for solving each subinstance. We propose such an algorithm based on dynamic programming as follows.

Let~$I'$ denote the subinstance corresponding to a combination $\{\{m(S, v)\}_{S\subseteq V_{\text{M}}, v\in V_{\text{M}}}, s\}$ of guesses as described above. For each $v\in V_{\text{M}}$, let $m_{\sum}(v))=\sum_{S\subseteq V_{\text{M}}}m(S, v)$ be as defined above with respect to $\{m(S, v)\}_{S\subseteq V_{\text{M}}, v\in V_{\text{M}}}$. Besides, for $S\subseteq V_{\text{M}}$ and $c\in C$, let
\begin{equation}
\label{eq-hsc}
h(S, c)=
\begin{cases}
\score{V}{c}{E}{\varphi} & S=\emptyset~\text{or}~\sum_{v\in S} m_{\sum}(v)=0,\\
\score{V}{c}{E}{\varphi}+\sum_{\substack{v\in S\\ m_{\sum}(v)>0}}\frac{1}{m_{\sum}(v)} & \text{otherwise}. \\
\end{cases}
\end{equation}
Briefly put, $h(S, c)$ is the SAV score of candidate~$c$ if among all manipulators exactly those corresponding to~$S$ turn to approve~$c$.
We maintain a binary table for each $S\subseteq V_{\text{M}}$ such that $C^{\star}(S)\neq \emptyset$ as follows.
Let $(c^S_1, c^S_2, \dots, c^S_z)$ be any arbitrary linear order on~$C^{\star}(S)$, and for each $x\in [z]$, let $C^S_x=\{c^S_1, c^S_2, \dots, c^S_x\}$ be the set of the first~$x$ candidates in the order.  As~$C^{\star}(S)\neq\emptyset$, it holds that $z\geq 1$. The binary table is denoted by $T_S(x, i, j, {\bf{b}})$ where $x\in [z]$,~$i$ and~$j$ are nonnegative integers such that $i+j\leq m^{\star}(S)$, and~${\bf{b}}$ is a vector of~$t$ nonnegative integers, one for each $v\in V_{\text{M}}$. Particularly,~$i$ and~$j$ indicate respectively the number of candidates from~$C^S_x$ having SAV score strictly larger than~$s$, and the number of candidates from~$C^S_x$ having SAV score exactly~$s$ in the final election. We use~$b(v)$ to denote the integer in~$\bf{b}$ for~$v\in V_{\text{M}}$ which indicates the number of candidates from~$C^S_x$ approved by the vote replacing the manipulative vote~$v$ in the final election. So, we require that $b(v)\leq m(S, v)$. We say that a vote~$v$ is an extension of another vote~$u$ if~$u\subseteq v$. Moreover, a multiset~$V'$ of votes is an extension of a multiset~$U'$ if elements of~$V'$ one-to-one correspond to elements of~$U'$ so that every $v\in V'$ corresponding to~$u\in U'$ is an extension of~$u$.
The entry $T_S(x, i, j, {\bf{b}})$ is defined to be~$1$ if and only if there is a multiset~$\widetilde{V}$ of~$t$ votes over~$C^S_x$ which one-to-one correspond to~$V_{\text{M}}$ by $\rho: V_{\text{M}}\rightarrow \widetilde{V}$ such that the following conditions hold simultaneously:
\begin{enumerate}
\item[(i)] the number of candidates in~$C^S_x$ whose SAV scores are strictly larger than~$s$ in any election $(C, V\cup V')$, where~$V'$ satisfies Condition~(2) given above and is an extension of~$\widetilde{V}$, is exactly~$i$,
 i.e.,
\[\abs{\{c\in C^S_x \setmid h_S(B(c), c)>s\}}=i,\]
where
\[B(c)=\{v\in V_{\text{M}} \setmid c\in \rho(v)\}.\]
\item[(ii)] the number of candidates from~$C^S_x$ whose SAV scores are exactly~$s$ in any election $(C, V\cup V')$, where~$V'$ satisfies Condition~(2) given above and is an extension of~$\widetilde{V}$, is exactly~$j$, i.e.,
\[\abs{\{c\in C^S_x \setmid h_S(B(c), c)=s\}}=j,\]
where~$B(c)$ is defined as in~(i), and
\item[(iii)] for every $v\in V_{\text{M}}$,~$\rho(v)$ approves exactly~$b(v)$ candidates in~$C^S_x$.
\end{enumerate}
The values of the base entries are  set as follows. Let $B=\{v\in V_{\text{M}} \setmid b(v)=1\}$.
\begin{itemize}
    \item[\textcircled{a}] $T_S(1, 0, 0, {\bf{b}})=0$ if and only if at least one of the following conditions holds:
    \begin{itemize}
        \item there exists $v\in V_{\text{M}}$ such that $b(v)>1$;
        \item $h(B, c_1)\geq s$.
    \end{itemize}

    \item[\textcircled{b}] $T_S(1, 1, 0, {\bf{b}})=0$ if and only if at least one of the following conditions holds:
    \begin{itemize}
        \item there exists $v\in V_{\text{M}}$ such that $b(v)>1$;
        \item $h(B, c_1)\leq s$.
    \end{itemize}

    \item[\textcircled{c}] $T_S(1, 0, 1, {\bf{b}})=0$ if and only if at least one of the following conditions holds:
    \begin{itemize}
        \item there exists $v\in V_{\text{M}}$ such that $b(v)>1$;
        \item $h(B, c_1)\neq s$.
    \end{itemize}

    \item[\textcircled{d}] $T_S(1, i, j, {\bf{b}})=0$ for all other entries such that $i+j>1$.
\end{itemize}
We update other entries $T_S(x, i, j, {\bf{b}})$ where $x\geq 2$ as follows.

We let $T_S(x, i, j, {\bf{b}})=1$ if and only if there exists $U\subseteq V_{\text{M}}$ so that at least one of the following conditions holds:
\begin{itemize}
\item[(a)] $T_S(x-1, i, j, {\bf{b'}})=1$ and $h(U, c_x)<s$,
\item[(b)] $T_S(x-1, i-1, j, {\bf{b'}})=1$ and $h(U, c_x)>s$, 
\item[(c)] $T_S(x-1, i, j-1, {\bf{b'}})=1$ and $h(U, c_x)=s$,  
\end{itemize}
where ${\bf{b'}}$ is a vector of~$t$ integers so that $b'(v)=b(v)$ for every $v\in V_{\text{M}}\setminus U$ and $b'(v)=b(v)-1$ for every $v\in U$.

The above three conditions correspond to that exactly the manipulators corresponding to~$U$ turn to approve the $x$-th candidate~$c^S_x$, and the SAV score of~$c^S_x$ in the final election is, respectively, smaller than, greater than, or equal to~$s$.

After all tables are computed for all submultisets of~$V_{\text{M}}$, we determine if the subinstance~$I'$ is a {\yesins} as follows. For each $S\subseteq V_{\text{M}}$, let
\[\mathcal{T}(S)=\{T_S(m^{\star}(S), i, j, {\bf{b}}) \setmid T_S(m^{\star}(S), i, j, {\bf{b}})=1, (\forall{v\in V_{\text{M}}})[b(v)=m(S, v)]\}.\]
Clearly, the cardinality of $\mathcal{T}(S)$ is $\bigo{m^{t+3}}$. For each entry $T_S(m^{\star}(S), i, j, {\bf{b}})\in \mathcal{T}(S)$, let $i_S(i, j)$ be the value of~$i$, and let~$j_S(i, j)$ be the value of~$j$ in the entry. For an $x$-tuple $A=(a_1, a_2, \dots, a_x)$, we use $a\in A$ to denote that~$a$ is a component (element) of~$A$, i.e., $a=a_i$ for some $i\in[x]$.
For each~$\Omega$ in the $2^t$-fold Cartesian product
 $\bigtimes_{S\subseteq V_{\text{M}}} \mathcal{T}(S)$, let
\[i_{\Omega}=\sum_{T_S(m^{\star}(S), i, j, {\bf{b}})\in \Omega}i_{S}(i, j),\]
and let
\[j_{\Omega}=\sum_{T_S(m^{\star}(S), i, j, {\bf{b}})\in \Omega}j_{S}(i, j).\]
In addition, for each $v\in V_{\text{M}}$, let
\[i_{\Omega}(v)=\sum_{\substack{T_S(m^{\star}(S), i, j, {\bf{b}})\in \Omega\\ v\in S}}i_{S}(i, j),\]
and let
\[j_{\Omega}(v)=\sum_{\substack{T_S(m^{\star}(S), i, j, {\bf{b}})\in \Omega\\ v\in S}}j_{S}(i, j).\]
Let~$i^{>s}=\{c\in C\setminus C^{\vee}(V_{\text{M}}) \setmid h(\emptyset, c)>s\}$ be the number of candidates in $C\setminus C^{\vee}(V_{\text{M}})$ whose SAV scores in $(C, V)$ are larger than~$s$. Analogously, let~$i^{=s}=\{c\in C\setminus C^{\vee}(V_{\text{M}}) \setmid h(\emptyset, c)=s\}$ be the number of candidates in $C\setminus C^{\vee}(V_{\text{M}})$ whose SAV scores in $(C, V)$ are exactly~$s$.
We consider all  $\Omega\in \bigtimes_{S\subseteq V_{\text{M}}} \mathcal{T}(S)$ such that
\begin{equation}
\label{eq-d}
i_{\Omega}+i^{>s}\leq k-1~\text{and}~j_{\Omega}+i^{=s}>0.
\end{equation}
By the definition of the tables, we know that for each such an~$\Omega$ such that $i_{\Omega}\leq k-i^{>s}$ and $j_{\Omega}+i^{=s}>0$, there exists a multiset~$V'$ of~$t$ votes over~$C$ that satisfies Conditions~(2) and~(3). If there exists no such~$\Omega$, we conclude that the subinstance~$I'$ is a {\noins}. Otherwise, we conclude that~$I'$ is a {\yesins} if and only if there exists $\Omega \in \bigtimes_{S\subseteq V_{\text{M}}} \mathcal{T}(S)$ which satisfies Inequality~\eqref{eq-d} and, moreover, for every $v\in V_{\text{M}}$ it holds that
\begin{equation}
\label{eq-e}
i_{\Omega}(v)+\max\{0, (k+j_{\Omega}(v)-(i_{\Omega}+i^{>s}+j_{\Omega}+i^{=s}))\}>\abs{v\cap \w}.
\end{equation}
This ensures that Condition~(1) given above is satisfied by a multiset of~$t$ votes corresponding to~$\Omega$.
This completes the description of the algorithm for {\probb{\cbcm}{SAV}}.
\medskip

Let us use~$\mathcal{A}$ to denote the algorithm described above for  {\probb{\cbcm}{SAV}}. We show below how to modify the algorithm for other manipulation problems.
In the following, let $I=(C, V, V_{\text{M}}, \w)$ be an instance of the respective problems.

\begin{itemize}
\item {\probb{\cbcm}{NSAV}}
\end{itemize}

We let~$\varphi$ be NSAV, and in the description of~$\mathcal{A}$ we let the values of~$s$ be from
\[\{\score{V}{c}{E}{\varphi} \setmid c\in C\}\cup \{\score{V}{c}{E}{\varphi} +\sum_{\substack{v\in S'\\ m_{\sum}(v)>0}}\frac{1}{m_{\sum}(v)} - \sum_{\substack{v\in V_{\text{M}}\setminus S'\\ m_{\sum}(v)<m}}\frac{1}{m-m_{\sum}(v)} \setmid c\in C^{\vee}(V_{\text{M}}), S'\subseteq V_{\text{M}}\}.\]
Besides, we redefine
\begin{equation}
\label{eq-h}
h(S, c)=
\begin{cases}
\score{V}{c}{E}{\varphi} & S=\emptyset~\text{or}~\sum_{v\in S} m_{\sum}(v)=0,\\
\score{V}{c}{E}{\varphi}+\sum_{\substack{v\in S\\ m_{\sum}(v)>0}}\frac{1}{m_{\sum}(v)}-
\sum_{\substack{v\in V_{\text{M}}\setminus S\\ m_{\sum}(v)<m}}\frac{1}{m-m_{\sum}(v)} & \text{otherwise}. \\
\end{cases}
\end{equation}

\begin{itemize}
\item {\probb{\sbcm}{SAV}}
\end{itemize}
For each $S\subseteq V_{\text{M}}$, let $C_{\cap \w}(S)=\w\cap C^{\star}(S)$. Obviously, $\w\cap C^{\vee}(V_{\text{M}})=\bigcup_{S\subseteq {V(_{\text{M}})}}C_{\cap \w}(S)$.
To solve~$I$ for {\probb{\sbcm}{SAV}}, in addition to the guesses on $m(S, v)$ for all $S\subseteq V_{\text{M}}$ and all $v\in V_{\text{M}}$, and the winning-threshold~$s$, we also guess whether the final election admits a unique SAV winning $k$-committee, which is equivalent to determining if there are exactly~$k$ candidates in the final election whose SAV scores are at least~$s$.

For the case where our guess is that the final election admits a unique SAV winning $k$-committee, we define an entry $T_S(x, i, j, {\bf{b}})$ to be~$1$ if, in addition to Conditions (i)--(iii) it also satisfies that every candidate in $C_{\cap w}(S)$ has SAV score at least~$s$ in any election $(C, V\cup V')$, where~$V'$ satisfies Condition~(2) and is an extension of~$\widetilde{V}$, i.e., for every $c\in C_{\cap \w}(S)$ it holds that $h(B(c), c)\geq s$. 
We compute the base entries $T_S(x, i,j, {\bf{b}})$ where $x=0$ by distinguishing between whether~$c_x$ is in~$C_{\cap \w}(S)$. If $c_x\not\in C_{\cap \w}(S)$, then we use the same steps (\textcircled{a}--\textcircled{d}) as for the computation of base entries for {\probb{\cbcm}{SAV}} described above. If  $c_x\in C_{\cap \w}(S)$, then we set~$T_S(1, 1, 0, {\bf{b}})$ and $T_S(1, 0, 1, {\bf{b}})$ by Steps~\textcircled{b} and~\textcircled{c}, and set $T_S(1, i, j, {\bf{b}})=0$ for all the other possible entries.
In addition, we update $T_S(x, i,j, {\bf{b}})$ by distinguishing two cases described below.

\begin{description}
\item[Case~1.] $c_x\not\in C_{\cap \w}(S)$

In this case, we set $T_S(x, i,j, {\bf{b}})=1$ if and only if there exists $U\subseteq V_{\text{M}}$ so that at least one of Conditions (a)--(c) described above holds.

\item[Case~2.] $c_x\in C_{\cap \w}(S)$

In this case, we set $T_S(x, i,j, {\bf{b}})=1$ if and only if there exists $U\subseteq V_{\text{M}}$ so that Condition~(b) or Condition~(c) described above holds. The reason that we drop Condition~(1) in this case is because that~$c_x$ should be contained in the unique SAV winning $k$-committee, and hence its SAV score in the final election should be no smaller than~$s$.
\end{description}
After all tables are computed, if there exists $\Omega\in \bigtimes_{S\subseteq V_{\text{M}}} \mathcal{T}(S)$ such that $i_{\Omega}+j_{\Omega}+i^{>s}+i^{=s}=k$, and for every $v\in V_{\text{M}}$ it holds that $i_{\Omega}(v)+j_{\Omega}(v)>\abs{v\cap \w}$, we conclude that the given instance~$I$ is a {\yesins}.

If none of the guesses so far leads to the conclusion that~$I$ is a {\yesins}, we consider the case where the final election has multiple SAV winning $k$-committees. In this case, to contain $w\cap C^{\vee}(V_{\text{M}})$ into all SAV winning $k$-committees, every candidate from $w\cap C^{\vee}(V_{\text{M}})$ needs an SAV score strictly larger than~$s$ in the final election. Therefore, in this case, we define $T_S(x, i, j, {\bf{b}})$ to be~$1$ if, in addition to Conditions (i)--(iii), it also satisfies that every candidate in $w\cap C^{\vee}(V_{\text{M}})$ has SAV score strictly larger than~$s$ in the final election.

To update $T_S(x, i,j, {\bf{b}})$, we distinguish between the same two cases as above. The condition for updating $T_S(x, i,j, {\bf{b}})$ in the case where $c_x\not\in C_{\cap \w}(S)$ is exactly the same as described above. For the case where $c_x\in C_{\cap \w}(S)$, we let $T_S(x, i,j, {\bf{b}})=1$ if and only if there exists $U\subseteq V_{\text{M}}$ so that Condition~(b) described above holds.
After all tables are computed,  we conclude that~$I$ is a {\yesins} if there exists $\Omega\in \bigtimes_{S\subseteq V_{\text{M}}} \mathcal{T}(S)$ such that
\begin{itemize}
\item $i_{\Omega} + i^{>s}\leq k-1$,
\item $j_{\Omega}+i^{=s}>0$, and
\item for every $v\in V_{\text{M}}$ it holds that either
 \begin{itemize}
     \item $i_{\Omega}(v)>\abs{v\cap w}$ or
     \item $i_{\Omega}(v)=\abs{v\cap w}$ and $k+j_{\Omega}(v)-(i_{\Omega}+j_{\Omega}+i^{>s}+i^{=s})>0$.
 \end{itemize}
\end{itemize}
The equality $i_{\Omega}(v)>\abs{v\cap w}$ indicates that all candidates in $v\cap w$ have SAV scores larger than~$s$ in the final election, and in addition to these candidates, there exists at least one extra candidate from $v\setminus w$ which also has SAV score larger than~$s$ in the final election. Besides, $i_{\Omega}(v)=\abs{v\cap w}$ indicates that among all candidates approved in~$v$, exactly those from $v\cap w$ have SAV scores larger than~$s$. Therefore, to satisfy the manipulator, every winning $k$-committee of the final election should contain at least one extra candidate from $v\setminus w$, which is ensured by $k+j_{\Omega}(v)-(i_{\Omega}+j_{\Omega}+i^{>s}+i^{=s})>0$.

If none of the guesses leads to a conclude that~$I$ is a {\yesins}, we conclude that~$I$ is a {\noins}.

\begin{itemize}
\item {\probb{\sbcm}{NSAV}}
\end{itemize}

We adapt the above algorithm for {\probb{\sbcm}{SAV}} by letting~$\varphi$ denote NSAV, and defining~$h(S, c)$ as in Equation~\eqref{eq-h}.
\end{proof}

\bigskip

{\noindent{\bf{Theorem~\ref{thm-J-CC-FPT-n}}}} {\textit{For $\varphi\in \{{\memph{ABCCV}}, {\memph{PAV}},{\memph{MAV}}\}$, {\probb{$J$-CC}{$\varphi$}} is {\fpt} with respect to the number of voters.}}

\begin{proof}
Let $I=(C, V, k, J)$ be an instance of {\probb{$J$-CC}{$\varphi$}}. Let $n=\abs{V}$ be the number of votes, and let $m=\abs{C}$ be the number of candidates.
We derive algorithms for~$\varphi$ being ABCCV, PAV, or MAV separately below. However, these algorithms  share the same skeleton:
\begin{description}
\item[Step~1] Compute an optimal $k$-committee of $(C, V)$ with respect to~$\varphi$.
\item[Step~2] Determine if there is a~$\varphi$ winning $k$-committee of~$(C, V)$ not containing~$J$.
\end{description}
Obviously, the given instance~$I$ is a {\yesins} if and only if the answer to the question in Step~2 is {\no}.
The first step can be done in {\fpt}-time in~$n$ for ABCCV~\cite{DBLP:journals/jair/BetzlerSU13}, PAV~\cite{DBLP:journals/aamas/YangW23}, and MAV~\cite{DBLP:conf/atal/MisraNS15,DBLP:journals/algorithmica/GrammNR03}. Let~$s$ denote the~$\varphi$ score of the optimal $k$-committee computed in Step~1. In the following, we provide details of how to solve the question in Step~2 in {\fpt}-time in~$n$ by utilizing integer linear programmings. The following notions are used in our algorithms.

Recall that for each $V'\subseteq V$, $C_V^{\star}(V')$ denotes the set of candidates approved exactly by votes in~$V'$ among votes from~$V$, and $m^{\star}_V(V')=\abs{C_V^{\star}(V')}$. For notational brevity, let  $C^{\star}(V')=C_V^{\star}(V')$ and let $m^{\star}(V')=m^{\star}_V(V')$. Let $\mathcal{V}(J)=\{V'\subseteq V \setmid C^{\star}(V')\cap J\neq\emptyset\}$,
and let $k'=\sum_{V'\in \mathcal{V}(J)} \abs{C^{\star}(V')}$, both of which can be computed in  time~$\bigos{2^n}$.

\begin{claimm}
\label{claim-a}
Let~$\varphi\in \{\memph{\text{ABCCV}}, \memph{\text{PAV}}, \memph{\text{MAV}}\}$.
There is an optimal $k$-committee $w\subseteq C$ of~$\varphi$ at $(C, V)$ such that $J\setminus w\neq \emptyset$ if and only if there is an optimal $k$-committee~$w'$ of~$\varphi$ at $(C, V)$ and a submultiset $V'\subseteq V$ such that
\begin{enumerate}
\item[(1)] $J\cap C^{\star}(V')\neq \emptyset$ and
\item[(2)] $\abs{w'\cap C^{\star}(V')}< m^{\star}(V')$.
\end{enumerate}
\end{claimm}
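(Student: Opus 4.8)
The plan is to exploit the fact that, for each of the three rules, the $\varphi$ score of a committee depends only on the \emph{profile vector} recording, for every $V'\subseteq V$, how many committee members lie in the class $C^{\star}(V')$. Recall that $C^{\star}(V')$ is exactly the set of candidates whose set of approving votes equals $V'$, so the classes $\{C^{\star}(V')\}_{V'\subseteq V}$ partition $C$ and any two candidates in the same class are approved by identical sets of votes. First I would record this \emph{score-invariance}: for any committee $w$ and any vote $v$, the number of committee members approved by $v$ equals $\sum_{V'\ni v}\abs{w\cap C^{\star}(V')}$, so this quantity — and hence the ABCCV satisfaction indicator of $v$, the PAV contribution $\sum_{i=1}^{\abs{v\cap w}}1/i$ of $v$, and the Hamming distance $\abs{w}+\abs{v}-2\abs{w\cap v}$ of $w$ from $v$ — are all functions of the profile vector alone. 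Summing over voters (for ABCCV and PAV) or maximizing over voters (for MAV) preserves this dependence, so the whole score is determined by the profile vector. The consequence I will use is: if $w$ and $\tilde w$ share the same profile vector, they have equal $\varphi$ score, and in particular one is optimal iff the other is.

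Then I would establish the two directions. The forward direction is immediate: given an optimal $w$ with $J\setminus w\neq\emptyset$, pick $p\in J\setminus w$ and set $w':=w$ and $V':=V(p)$; then $p\in J\cap C^{\star}(V')$ yields condition~(1), and $p\in C^{\star}(V')\setminus w'$ forces $\abs{w'\cap C^{\star}(V')}\leq m^{\star}(V')-1<m^{\star}(V')$, yielding condition~(2). For the backward direction, suppose $w'$ is optimal and $V'$ satisfies (1) and (2). By~(1) choose $p\in J\cap C^{\star}(V')$, and by~(2) choose $q\in C^{\star}(V')\setminus w'$. If $p\notin w'$, then $w:=w'$ already has $p\in J\setminus w$. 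Otherwise $p\in w'$, and since $q\neq p$ and $q\notin w'$, I would form $w:=(w'\setminus\{p\})\cup\{q\}$: this removes one member of class $C^{\star}(V')$ and adds another, leaving the profile vector — and hence, by score-invariance, the $\varphi$ score and optimality — unchanged, while ensuring $p\in J\setminus w$.

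The only delicate point, and the part I expect to require the most care, is verifying the score-invariance \emph{uniformly} across the three rules, since MAV is a minimax rule rather than an additive one. The argument still goes through, because the per-vote Hamming distance is itself a function of the profile counts and a maximum of such functions is again one; but I would phrase the invariance as a short standalone observation and instantiate it separately for ABCCV, PAV, and MAV so no case is glossed over. Everything else is the one-line swap argument above, after which the claim feeds directly into the ILP formulation that tests condition~(2) using the $2^n$ variables $\abs{w\cap C^{\star}(V')}$.
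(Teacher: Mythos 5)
Your proof is correct and follows essentially the same route as the paper's: the forward direction takes $V'=V(p)$ for a witness $p\in J\setminus w$, and the backward direction swaps a candidate $q\in C^{\star}(V')\setminus w'$ for the committee member $p\in J\cap C^{\star}(V')$, relying on the fact that candidates approved by identical sets of votes are interchangeable under ABCCV, PAV, and MAV. The only difference is presentational: the paper asserts this interchangeability outright, while you justify it via the profile-vector invariance observation and thereby also compress the paper's three-case analysis into two cases.
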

\smallskip

\noindent{\textit{Proof of Claim~\ref{claim-a}.}}
Suppose that there is an optimal $k$-committee $w\subseteq C$ of~$\varphi$ at~$(C, V)$ such that $J\setminus w\neq \emptyset$. Let~$c$ be any arbitrary candidate in $J\setminus w$. Let $V'=V(c)$ be the multiset of votes approving~$c$, and let $w'=w$. Then, it is easy to see that the above two conditions hold with respect to~$w'$ and~$V'$. To prove the opposite direction, we assume that the two conditions in the claim hold with respect to some optimal $k$-committee~$w'$ of~$\varphi$ at $(C, V)$ and some $V'\subseteq V$. If $C^{\star}(V')\subseteq J$, we are done because the second condition implies $J\setminus \w'\neq \emptyset$. In addition, if $(C^{\star}(V')\cap J)\setminus w'\neq \emptyset$, we are done too. Now, we may assume that $(C^{\star}(V')\cap J)\subseteq w'$. Then, by the second condition, we know that there exists $c\in C^{\star}(V')\setminus w'$. By the first condition,
we know that there exists $c'\in C^{\star}(V')\cap J$, and hence $c'\in w'$.
Then, as $V(c)=V(c')$, it holds that, for $\varphi\in \{{\text{ABCCV}}, {\text{PAV}}, {\text{MAV}}\}$, replacing~$c$ with~$c'$ in~$w'$ yields another optimal $k$-committee of~$\varphi$ at $(C, V)$ too.
As the new $k$-committee excludes~$c'$ which is in~$J$, we know that the claim holds.
\smallskip

\begin{itemize}
\item ABCCV
\end{itemize}
By Claim~\ref{claim-a}, if $k'>k$, there exists at least one  ABCCV winning $k$-committee of~$(C, V)$ which does not contain~$J$; we are done (the given instance~$I$ is a {\noins}).
Otherwise, for each $\widetilde{V}\in \mathcal{V}(J)$ and each integer $i=\{0, 1, \dots, m(\widetilde{V})-1\}$, defining $\widehat{V}=\emptyset$ if $i=0$ and defining $\widehat{V}=\widetilde{V}$ otherwise, we create one ILP formulating the question of whether the election $(C\setminus C^{\star}(\widetilde{V}), V\setminus \widehat{V})$ admits a $(k-i)$-committee of ABCCV score $s-\abs{\widehat{V}}$. This can be done in {\fpt}-time in $\abs{V\setminus \widehat{V}}\leq n$~\cite{DBLP:journals/jair/BetzlerSU13}. Clearly, there exists a $k$-committee of ABCCV score~$s$ in $(C, V)$ that does not contain $C^{\star}(\widetilde{V})$ if and only if at least one of the~$m(\widetilde{V})\leq m$ ILPs does not admit any feasible solution. As $\abs{\mathcal{V}(J)}\leq 2^n$, we have in total at most $m\cdot 2^n$ ILPs to solve, which can be done in {\fpt}-time in~$n$.

After all ILPs are solved, we conclude that~$I$ is a {\noins} if and only if at least one of the ILPs admits a feasible solution.

\begin{itemize}
\item PAV
\end{itemize}

To answer the question in Step~2 for PAV, we adapt the algorithm presented in~\cite{DBLP:journals/aamas/YangW23} which is designed for solving a winner determination problem for PAV. We need the following problem and several other new notions.

\EPP
{Integer Programming With Simple Piecewise Linear Transformations (IPWSPLT)}
{A collection $\{f_{i,j} \setmid i\in [p], j\in [q]\}$ of $p\cdot q$ piecewise linear concave functions, and a vector $b\in \mathbb{Z}^p$.}
{Is there a vector $x\in \mathbb{Z}^q$ such that for every~$i\in [p]$ it holds that
\begin{equation}
\label{equ-generalized-ilp}
\sum_{j=1}^q f_{i,j}(x_j)\leq b_i?
\end{equation}}

It is known that {\prob{IPWSPLT}} is solvable in time $\bigo{{{\textsf{poly}}}(\abs{I}, t) \cdot q^{2.5q+o(q)}}$, where~$I$ is the number of bits encoding the input, and~$t$ is the maximum number of pieces per function~\cite{DBLP:journals/tcs/BredereckFNST20}.
Similar to the above analysis for ABCCV, we need only to focus on determining if there exists $\widetilde{V}\in \mathcal{V}(J)$ such that there exists a  winning $k$-committee of PAV at $(C, V)$ that does not contain~$C^{\star}(\widetilde{V})$. To this end, we give an IPWSPLT formulation as follows.
First, we create an integer variable~$x_{V'}$ for every $V'\subseteq V$.
Second, for each $v\in V$, we create one integer variable~$x_v$ indicating the number of candidates in~$v$ that are contained in a desired $k$-committee.
The constraints are as follows.
\renewcommand\labelenumi{(\theenumi)}
\begin{enumerate}
\item[(1)] For every $V'\subseteq V$ such that $V'\neq \widetilde{V}$, we naturally have that $0\leq x_{V'}\leq m(V')$.

\item[(2)] To ensure the second condition in Claim~\ref{claim-a}, we have $0\leq x_{\widetilde{V}} < m(\widetilde{V})$.

\item[(3)] As we are seeking a $k$-committee, we have that $\sum_{V'\subseteq V}x_{V'}=k$.

\item[(4)] For each~$v\in V$, we have that $0\leq x_v\leq k$.

\item[(5)] For each~$v\in V$, we have that $x_v=\sum_{v\in V'\subseteq V} x_{V'}$.

\item[(6)] For the last constraint, we need to define a piecewise linear concave function $f: \mathbb{R}_{\geq 0}\rightarrow \mathbb{R}_{\geq 0}$ as follows.
First, $f(0)=0$. Second, for each positive integer~$x$, we define $f(x)=\sum_{i=1}^x\frac{1}{i}$.
Third, for each real~$x$ such that $y< x< y+1$ and~$y$ is a nonnegative integer, we define
\[f(x)=f(y)+(x-y)\cdot (f(y+1)-f(y)).\]

The last constraint is then $\sum_{v\in V} f(x_v)=s$. This ensures the optimality of the desired $k$-committee.
\end{enumerate}
Clearly, the number of variables is $n+2^n$, and hence due to~\cite{DBLP:journals/tcs/BredereckFNST20}, the above IPWSPLT  can be solved in {\fpt}-time in~$n$.

After all the $\abs{\mathcal{V}(J)}\leq 2^n$ IPWSPLTs are solved, we conclude that~$I$ is a {\noins} if and only if at least one of the IPWSPLTs has a feasible solution.

\begin{itemize}
\item MAV
\end{itemize}

Based on Claim~\ref{claim-a}, to answer the question in Step~2, we determine if there exists a $k$-committee~$w'\subseteq C$ of MAV score~$s$ and a multiset $V'\subseteq V$ that fulfill the two conditions given in the claim. To this end, we generate~$\abs{\mathcal{V}(J)}$ many ILPs, one for each element in~$\mathcal{V}(J)$ with constraints being generated to verifying the two conditions in Claim~\ref{claim-a}. Concretely, for each $\widetilde{V}\in \mathcal{V}(J)$, we create one ILP as follows.
For each $V'\subseteq V$, we create one nonnegative integer variable~$x_{V'}$ indicating the number of candidates from~$C^{\star}(V')$ that are contained in a desired $k$-committee.
Regarding the constraints, we first adopt the first three classes of constraints ((1)--(3)) in the proof for PAV. Then, to ensure that the desired $k$-committee has MAV score~$s$, for every vote~$v\in V$ we create the following constraints:
\[\sum_{V'\subseteq V, v\in V'}(m^{\star}(V')-x_{V'})+\sum_{V'\subseteq V, v\not\in V'} x_{V'} \leq s.\]
As we have at most~$2^n$ variables, the above ILP can be solved in {\fpt}-time in~$n$. So, solving all the $\abs{\mathcal{V}(J)}\leq 2^n$ ILPs still can be done in {\fpt}-time in~$n$. After all ILPs are solved, we conclude that the given instance~$I$ is a {\noins} if and only if at least one of the ILPs has a feasible solution.
\end{proof}
\end{document}